\numberwithin{equation}{section}
\newcommand{\supp}{\operatorname{supp}}
\newcommand{\dist}{\operatorname{dist}}
\newcommand{\diam}{\operatorname{diam}}
\newcommand{\opm}{\operatorname{op}}
\newcommand{\Op}{\operatorname{Op}}
\newcommand{\edge}{\operatorname{edge}}
\newcommand{\interior}{\operatorname{int}}
\newcommand{\boldeta}{\boldsymbol \eta}
\newcommand{\loc}{\operatorname{loc}}
\newcommand{\cl}{\operatorname{cl}}
\newcommand{\rpa}{\operatorname{RPA}}
\newcommand{\fluc}{\operatorname{fluc}}
\newcommand{\comp}{\operatorname{comp}}
\newcommand{\even}{\operatorname{even}}
\newcommand{\odd}{\operatorname{odd}}
\newcommand{\om}{\operatorname{MP1}}
\def\dbar{{\mathchar'26\mkern-12mud}}
\begin{document}
\newtheorem{assumption}{Assumption}
\newtheorem{proposition}{Proposition}
\newtheorem{definition}{Definition}
\newtheorem{lemma}{Lemma}
\newtheorem{theorem}{Theorem}
\newtheorem{observation}{Observation}
\newtheorem{remark}{Remark}
\newtheorem{corollary}{Corollary}

\title{Singular analysis of RPA diagrams in coupled cluster theory}

\author{Heinz-J{\"u}rgen Flad and Gohar Flad-Harutyunyan\\
\ \\
{\small Zentrum Mathematik, Technische Universit\"at M\"unchen,
Boltzmannstr 3, D-85748 Garching, Germany}
}
\maketitle

\begin{abstract}
\noindent
Coupled cluster theory provides hierarchical many-particle models and is presently considered as the ultimate
benchmark in quantum chemistry. Despite is practical significance, a rigorous mathematical analysis of its properties
is still in its infancy. The present work focuses on nonlinear models within the random phase approximation (RPA).
Solutions of these models are commonly represented by series of a particular class of Goldstone diagrams so-called RPA diagrams.
We present a detailed asymptotic analysis of these RPA diagrams using techniques from singular analysis and
discuss their computational complexity within adaptive approximation schemes. In particular, we provide a connection
between RPA diagrams and classical pseudo-differential operators which enables an efficient treatment of the linear and nonlinear 
interactions in these models. Finally, we discuss a best $N$-term approximation scheme for RPA-diagrams and provide
the corresponding convergence rates.
\end{abstract}

\section{Introduction}
Coupled cluster (CC) theory, cf.~\cite{BM07,B91,K91,KLZ78,PL99} and references therein, provides an hierarchical but intrinsically nonlinear approach to many-particle systems which enables systematic  
truncation schemes reflecting both, the physics of the problem under consideration as well as the computational complexity
of the resultant equations. Unfortunately rigorous insights concerning the corresponding errors of approximation
are hard to achieve and a detailed understanding of the properties of solutions is still missing, see however recent progress on the subject by Schneider,Rohwedder, Kwaal and Laestadius \cite{R13,RS13,S09,LK17}. 
Within the present work, we want to tackle the asymtotic behaviour of solutions near coaslescence points of electrons. 
The corresponding asymptotic analysis for eigenfunctions of the many-particle Schr\"odinger equation has a long tradition
\cite{FHO2S05,FHO2S09,HOS81,HO292,HO2S94,Kato57}, starting with the seminal work of Kato \cite{Kato57} and culminated in the analyticity result of M.~and T.~Hoffmann-Ostenhof, Fournais and S{\o}rensen \cite{FHO2S09}.
Although CC theory orginates from the many-particle Schr\"odinger equation it is not simply possible to transfer the results to solutions of specific CC models. 
because these models represent higly nonlinear approximations
to Schr\"odinger's equation. The peculiar structure of linear and nonlinear interactions in CC models
requires a new approach based on pseudo-differential operator algebras which is introduced in the present work. We consider these operator algebras in the framework of a commonly employed series expansion of solutions in terms of so-called Goldstone diagrams, cf.~\cite{LM86} for further details. 
Our main result for a popular CC model, presented in Section \ref{RPAdiag}, affords the classification of these Goldstone diagrams in terms of symbol classes of classical pseudo-differential operators and provides their asymptotic expansions near coalescence points of electrons.  

\subsection{Nonlinear models of electron correlation}
In the following we want to focus on the SUB-2 approximation \cite{B91} and discuss the properties
of simplified models which contain the leading contributions to electron correlation. 
The quantities of interest are so called pair-amplitudes
\[
 \tau_{ij} \ : \ \left( \mathbb{R}^3 \otimes s \right) \otimes \left( \mathbb{R}^3 \otimes s \right) \ \rightarrow \ \mathbb{R} , 
 \quad (\underline{\bf x}_1, \underline{\bf x}_2) \ \rightarrow \ \tau_{ij}(\underline{\bf x}_1, \underline{\bf x}_2) ,
\]
where indices $i,j,k,l$ refer
to the occupied orbitals $1,2,\ldots N$ with respect to an underlying mean-field theory like Hartree-Fock, 
and $s := \{\alpha,\beta\}$ denotes spin degrees of freedom.
The most general model we want to consider
in the present work is given by a system of nonlinear equations for pair-amplitudes, cf.~\cite{L77}, 
which is of the form
\begin{eqnarray}
\label{MP1}
 \lefteqn{\mathfrak{Q} \left( \mathfrak{f}_1 + \mathfrak{f}_2 - \epsilon_i -\epsilon_j \right)
 \tau_{ij}(\underline{\bf x}_1, \underline{\bf x}_2) 
 = - \mathfrak{Q} V_{[i,j]}^{(2)}(\underline{\bf x}_1, \underline{\bf x}_2)} \hspace{3cm} \\ \label{ladder}
 & & - \mathfrak{Q} V^{(2)}({\bf x}_1, {\bf x}_2) \tau_{ij}(\underline{\bf x}_1, \underline{\bf x}_2)
 - \tfrac{1}{4} \mathfrak{Q} \sum_{k,l} \tau_{kl}(\underline{\bf x}_1, \underline{\bf x}_2) \langle V_{[k,l]}^{(2)}, \Psi^{(1)}_{ij} \rangle \\ \label{RPA1}
 & & + P(12/ij) \mathfrak{Q} \sum_{k} \tau_{i,k}(\underline{\bf x}_1, \underline{\bf x}_2) V_{kj}^{(1)}({\bf x}_2) \\ \label{RPA2}
 & & - P(12/ij) \mathfrak{Q} \sum_{k} \int \tau_{i,k}(\underline{\bf x}_1, \underline{\bf x}_3) V_{kj}^{(2)}(\underline{\bf x}_3, \underline{\bf x}_2) \, d\underline{\bf x}_3 \\ \label{RPA3}
 & & - \tfrac{1}{2} P(12/ij) \mathfrak{Q} \sum_{k,l} \iint \tau_{i,k}(\underline{\bf x}_1, \underline{\bf x}_3) V_{[k,l]}^{(2)}(\underline{\bf x}_3, \underline{\bf x}_4)
 \tau_{l,j}(\underline{\bf x}_4, \underline{\bf x}_2) \, d\underline{\bf x}_3 d\underline{\bf x}_4 .
\end{eqnarray}
In the following, we will frequently refer to the equation numbers (\ref{MP1}) to (\ref{RPA3}), however, depending on the context with different meanings.
Either we refer to the individual term on the right hand side or to the whole equation which includes on the right hand side all terms up to the
specified number. 
In order to simplify our discussion, let us assume in the following smooth electron-nuclear potentials, originating
e.g.~from finite nucleus models. 

Before we enter into a brief discussion of the underlying physics
of the equations let us start with some technical issues.
In order to keep formulas short we use in (\ref{RPA1}),(\ref{RPA2})and (\ref{RPA3}) the permutation operator
\[
 P(12/ij) := 1 + (21)(ji) - (12)(ji) - (21)(ij) .
\]
Equations for pair-amplitudes are formulated on a subspace
of the underlying two-particle Hilbert space which is characterized
by the projection operator 
\begin{equation}
 \mathfrak{Q} :=(1-\mathfrak{q}_1)(1-\mathfrak{q}_2) \ \   
 \mbox{with} \ \mathfrak{q} := \sum_{i=1}^N |\phi_i \rangle \langle \phi_i | ,
\end{equation}
where $\phi_i$, with $i=1,2,\ldots,N$, represent occupied orbitals.
The operator $ \mathfrak{Q}$ is commonly known as strong orthogonality operator \cite{SJMZ}.
Due to the presence of this operator, pair-amplitudes rely on the constraint
\begin{equation}
 \mathfrak{Q} \tau_{ij}(\underline{\bf x}_1, \underline{\bf x}_2) =
 \tau_{ij}(\underline{\bf x}_1, \underline{\bf x}_2) .
\label{qtau}
\end{equation}
The physical reason behind $\mathfrak{Q}$ is Pauli's principle which excludes the subspace assigned
to the remaining $N-2$ particles from the Hilbert space of the pair and an orthogonality constraint between the mean field part 
\[
 \Psi^{(1)}_{ij}(\underline{\bf x}_1, \underline{\bf x}_2) := \phi_i(\underline{\bf x}_1) \phi_j(\underline{\bf x}_2) - \phi_j(\underline{\bf x}_1) \phi_i(\underline{\bf x}_2) 
\]
and the corresponding pair-amplitude $\tau_{ij}$.

In order to set the equations for pair-amplitudes into context
let us first consider the case of Eq.~(\ref{MP1}), where the parts (\ref{ladder}), (\ref{RPA1}), (\ref{RPA2}) and (\ref{RPA3}) have been neglected.
What remains in this case is first-order M{\o}ller-Plesset
perturbation theory which provides second- and third-order corrections to the energy. Going further to Eq.~(\ref{ladder}) one recovers
the dominant contributions to short-range correlation, so-call particle ladder diagrams. Neglecting interactions between different electron pairs  
altogether in (\ref{ladder}) to (\ref{RPA3}), one recovers 
the Bethe-Goldstone equation 
\[
 \mathfrak{Q} \left( \mathfrak{f}_1 + \mathfrak{f}_2 - \epsilon_i -\epsilon_j \right)
 \tau_{ij}(\underline{\bf x}_1, \underline{\bf x}_2) 
 = - \mathfrak{Q} V_{[i,j]}^{(2)}(\underline{\bf x}_1, \underline{\bf x}_2)
 - \mathfrak{Q} V_{\fluc}^{(ij)}({\bf x}_1, {\bf x}_2) \tau_{ij}(\underline{\bf x}_1, \underline{\bf x}_2) ,
\]
here the fluctuation potential is given by
\begin{equation}
 V_{\fluc}^{(ij)}(\underline{\bf x}_1, \underline{\bf x}_2) := \frac{1}{|{\bf x}_1 - {\bf x}_2|}
 - \mathfrak{v}^{(i,j)}_{Hx}(\underline{\bf x}_1) - \mathfrak{v}^{(i,j)}_{Hx}(\underline{\bf x}_2) 
 + \tfrac{1}{2} \langle V_{[i,j]}^{(2)}, \Psi^{(1)}_{ij} \rangle
\label{Vfluc}
\end{equation}
where
\begin{equation}
 \mathfrak{v}_{Hx}^{(i,j)} := \mathfrak{v}_H^{(i)} + \mathfrak{v}_H^{(j)} + \mathfrak{v}_x^{(i)} + \mathfrak{v}_x^{(j)} 
\end{equation}
represents the contribution of orbitals $i,j$ to the Hartree and exchange potential, respectively.
The asymptotic behaviour of solutions for both models can be directly derived from the structure of asymptotic parametrices of Hamiltonian operators and has been already discussed in Ref.~\cite{FHS15} in some detail. 

The three remaining terms (\ref{RPA1}), (\ref{RPA2}) and (\ref{RPA3}) of the effective pair-equation take part in the so called
{\em random phase approximation} (RPA) which is essential for the correct description of long-range correlations.
In the following we want to consider also some  
simplified versions of RPA where exchange contributions are neglected. Such models are of particular significance with
respect to applications of RPA as a post DFT model, cf.~\cite{F08}. It has been 
shown that these models are equivalent to solving particular
variants of CC-RPA equations, cf.~\cite{SHS08}.
The terms   (\ref{MP1}), (\ref{ladder}),  (\ref{RPA1}), (\ref{RPA2}) and (\ref{RPA3}) contain various effective interaction potentials
\begin{equation}
 V_{kj}^{(1)}({\bf x}_2) := \int \frac{1}{|{\bf x}_2 - {\bf x}_3|} \phi_k(\underline{\bf x}_3) \phi_j(\underline{\bf x}_3) \, d\underline{\bf x}_3 ,
\label{Vkl1}
\end{equation}
\begin{equation}
V^{(2)}({\bf x}_1, {\bf x}_2) := \frac{1}{|{\bf x}_1 - {\bf x}_2|}  ,
\label{V2}
\end{equation}
\begin{equation}
V_{ij}^{(2)}(\underline{\bf x}_1, \underline{\bf x}_2) := \phi_i(\underline{\bf x}_1) V^{(2)}({\bf x}_1 - {\bf x}_2) \phi_j(\underline{\bf x}_2) ,
\label{Vkl2}
\end{equation}
\begin{equation}
 V_{[i,j]}^{(2)}(\underline{\bf x}_1, \underline{\bf x}_2) := V_{ij}^{(2)}(\underline{\bf x}_1, \underline{\bf x}_2) - V_{ji}^{(2)}(\underline{\bf x}_1, \underline{\bf x}_2) .
 \label{Vk12b} 
\end{equation}
The RPA model considered in the present work is still incomplete
in the sense that various terms present in the full SUB-2 model
are still missing. These missing nonlinear terms, however, do not
contribute anything new from the point of view of the following asymptotic singular analysis.  
Taking them into account would only render our presentation unnecessarily complicated.
Therefore our RPA model represents a good choice in order 
to study the properties of pair-amplitudes in some detail and
in particular the effect of non linearity which enters into our
model via the coupling term (\ref{RPA3}). The nonlinear character of our model is not quite of the form familiar from the theory
of nonlinear partial differential equations, cf.~\cite{GT}, and we will therefore consider
an unconventional approach to tackle the problem. Our approach reflects not only the particular character of the various coupling terms but also the singular structure of the interactions and
pair-amplitudes. The RPA terms (\ref{RPA2}) and (\ref{RPA3})
resemble to compositions of kernels of integral operators
and it is tempting to consider these terms in the wider context
of an appropriate operator algebra. It will be shown in the following, that the algebra of classical pseudo-differential operators provides a convenient setting. As complementary
approach let us study pair-amplitudes in the framework of weighted Sobolev spaces with asymptotics which we have already
considered in the context of singular analysis in order to
determine the asymptotic behaviour near coalescence points of electrons, cf.~Ref.~\cite{FHS15}. Both seemingly disparate
approaches complement one another in the asymptotic singular analysis of RPA models.

\subsection{Iteration schemes and their diagrammatic counter parts}
\label{iterationscheme}
The Bethe-Goldstone equation and various nonlinear RPA models, discussed above, are commonly solved in an iterative manner. 
Before we delve into the technicalities of our approach let us briefly discuss iteration schemes and their physical interpretation
in a rather simple and informal manner in order to outline certain essential features of the present work.
To simplify our notation, occupied orbital indices $i,j,k,l$ and spin degrees of freedom which appear in pair-amplitudes and interaction potentials have been dropped because they are not relevant in the following discussion.
The focus is in particular on linear terms like
\begin{equation}
 f_{V^{(2)}\tau}({\bf x}_1, {\bf x}_2) := V^{(2)}({\bf x}_1, {\bf x}_2) \, \tau({\bf x}_1, {\bf x}_2),
\label{BG}
\end{equation}
\begin{equation}
 f_{V^{(1)}\tau}({\bf x}_1, {\bf x}_2) := V^{(1)}({\bf x}_1) \, \tau({\bf x}_1, {\bf x}_2),
\label{eff}
\end{equation}
\begin{equation}
 f_{V^{(2)}\circ\tau}({\bf x}_1, {\bf x}_2) := \int V^{(2)}({\bf x}_1, {\bf x}_3) \, \tau({\bf x}_3, {\bf x}_2) \, d{\bf x}_3
\label{oprod}
\end{equation}
and nonlinear terms of the form
\begin{equation}
 f_{\tau\circ V^{(2)}\circ \tau}({\bf x}_1, {\bf x}_2) := \int \tau({\bf x}_1, {\bf x}_3) V^{(2)}({\bf x}_3, {\bf x}_4) \tau({\bf x}_4, {\bf x}_2) \, d{\bf x}_3 d{\bf x}_4
\label{oprod2}
\end{equation}
for a certain asymptotic type of the pair-amplitude $\tau$.
With these definitions at hand, let us briefly outline a suitable
fixed point iteration scheme which illustrates  
some main issues of our approach. Actually, this ansatz for the solution of nonlinear CC type equations represents a canonical  choice in numerical simulations.
The basic structure of our problem can be represented by the
greatly simplified nonlinear equation
\begin{equation}
 A \tau = - V^{(2)}
 -f_{V^{(2)}\tau}
 +f_{V^{(1)}\tau}
 -f_{V^{(2)}\circ\tau}
 -f_{\tau\circ V^{(2)}\circ \tau} ,
\label{Atau}
\end{equation}
where $A$ is an elliptic second order partial differential
operator. A simple iteration scheme for this equation may consist
of the following steps. First solve the equation 
$A \tau_0 = - V^{(2)}$
with fixed right hand side. Calculate $f_{V^{(2)}\tau_0}$,
$f_{V^{(1)}\tau_0}$, $f_{V^{(2)}\circ\tau_0}$ and 
$f_{\tau_0\circ V^{(2)}\circ \tau_0}$ and solve in the
next iteration step  
\begin{equation}
 A \tau_1 = - V^{(2)}
 -f_{V^{(2)}\tau_0}
 +f_{V^{(1)}\tau_0}
 -f_{V^{(2)}\circ\tau_0}
 -f_{\tau_0\circ V^{(2)}\circ \tau_0}.
\label{Atau0}
\end{equation}
The last two steps can be repeated generating a iterative sequence
of linear equations
\begin{equation}
  A \tau_{n+1} = - V^{(2)}
 -f_{V^{(2)}\tau_n}
 +f_{V^{(1)}\tau_n}
 -f_{V^{(2)}\circ\tau_n}
 -f_{\tau_n\circ V^{(2)}\circ \tau_n}.
\label{Ataun}
\end{equation} 
which can be solved in a consecutive manner
until convergence of the sequence $\tau_1,\tau_2,\tau_3\ldots$ has been achieved. Such an iteration scheme is rather convenient for an 
asymptotic analysis of the solutions, see e.g.~\cite{FSS08} where
such an analysis has been actually performed for the nonlinear
Hartree-Fock model. Apparently, the basic problem of such kind of iteration scheme is to state necessary and sufficient conditions for its convergence. Whereas in the case of the Hartree-Fock model
convergence has been proven for certain iteration schemes,
the situation is actually less satisfactory in CC theory, cf.~\cite{RS13,S09}.
In concrete physical and chemical applications, non-convergence
of an iteration scheme can be usually addressed to specific properties of the system under consideration. Therefore let
us simply assume in the following that our iteration scheme
is actually convergent and we focus on the asymptotic properties
of solutions of intermediate steps. 

Within the present work, we are mainly interested in the asymptotic
behaviour of iterated pair-amplitudes $\tau_i$, $i=0,1,\ldots$, near coalescence points of electrons. In order to extract these
properties we apply methods from singular analysis \cite{FHS16} and solve (\ref{Ataun}) via an explicitly constructed
asymptotic parametrix and corresponding Green operator, cf.~\cite{FFHS16} for further details. 
Le us briefly outline the basic idea and and some mathematical techniques from singular analysis involved. 
We do this in a rather informal manner by just mentioning the essential properties of the applied calculus
and refer to the monographs \cite{HS08,Schulze98} for a detailed exposition. 
 In a nutshell, a parametrix $P$
of a differential operator $A$ is a pseudo-differential operator which can be considered as a generalized inverse, i.e., when applied
from the left or right side it yields
\[
 PA = I +G_l \quad \quad AP = I +G_r ,
\]
where the remainders $G_l$ and $G_r$ are left and right Green operators, respectively. In contrast to the standard calculus of pseudo-differential operators on smooth manifolds, our calculus
applies to singular spaces with conical, edge and corner type singularities as well. In the smooth case, remainders correspond to compact operators with smooth kernel function, whereas in the singular calculus Green operators encode important asymptotic information which we want to extract. Acting on (\ref{Ataun}) with the parametrix from the left yields
\begin{equation}
 \tau_{n+1} = - PV^{(2)}
 -Pf_{V^{(2)}\tau_n}
 +Pf_{V^{(1)}\tau_n}
 -Pf_{V^{(2)}\circ\tau_n}
 -Pf_{\tau_n\circ V^{(2)}\circ \tau_n}
 -G_l \tau_{n+1} .
\label{PAtaun}
\end{equation} 
The parametrix $P$ maps in a controlled manner between functions with certain asymptotic behaviour which means that we can derive from the asymptotic properties of the terms on the right hand side
of (\ref{Ataun}) its effect on the asymptotic behaviour of $\tau_{n+1}$. Furthermore it is an essential property of $G_l$ that the operator maps onto a space with specific asymptotic type. Therefore
the asymptotic type of $G_l \tau_{n+1}$ is fixed and does not depend on $\tau_{n+1}$. Thus we have full control on the asymptotic properties of the right hand side of (\ref{PAtaun}) and consequently on the asymptotic type of the iterated pair-amplitude $\tau_{n+1}$.

At this point of our discussion, it is convenient to introduce a diagrammatic notation indispensable in quantum many-particle theory. The following considerations are based on Goldstone diagrams,
cf.~\cite{LM86,NO88} for a comprehensive discussion from a physical point of view. For the mathematically inclined reader let us briefly outline the basic idea. The iteration scheme discussed in the
previous paragraphs can be further decomposed by taking into account the linearity of the differential operator $A$. Instead of solving the first iterated equation (\ref{Atau0}) as a whole, let us
consider the decomposition
\[
 A \tau_{1,1} = -f_{V^{(2)}\tau_0}, \quad A \tau_{1,2} = f_{V^{(1)}\tau_0}, \quad A \tau_{1,3} = -f_{V^{(2)}\circ\tau_0}, \quad
 A \tau_{1,4} = -f_{\tau_0\circ V^{(2)}\circ \tau_0},
\]
from which one recovers the first iterated solution via the sum
\begin{equation}
 \tau_1 = \tau_0 + \tau_{1,1} + \tau_{1,2} + \tau_{1,3} + \tau_{1,4} ,
 \label{t1decomp}
\end{equation}
where each term actually corresponds to an individual Goldstone diagram. In the next iteration step, one can further use the decomposition (\ref{t1decomp}) 
to construct the interaction terms on the right hand side. For each new term on the right hand side obtained in such a manner one can again solve the corresponding equation
which leads to the decomposition of the second iterated solution into Goldstone diagrams
\[
 \tau_2 = \tau_0 + \tau_{1,1} + \tau_{1,2} + \tau_{1,3} + \tau_{1,4} + \tau_{2,1} + \tau_{2,2} + \cdots \ .
\]
This process can be continued through any number of successive iteration steps.
Therefore from a diagrammatic point of view iteration schemes correspond to the summation of an infinite series
of Goldstone diagrams which represents a pair-amplitude. 
Therefore, if we restrict ourselves in the following to study intermediate solutions $\tau_n$, $n \in \mathbb{N}$, within the iteration scheme, we actually consider asymptotic properties of certain finite sums of Goldstone diagrams. 
In particular it is possible to consider specific diagrams
or appropriate subtotals. A possible choice for such a subtotal
of Goldstone diagrams is e.g.~the finite sum of diagrams which represents the progression of an iteration process, i.e.,
\begin{equation}
 P_{n} := \tau_n - \tau_{n-1} \quad \mbox{with} \ n > 0 .
 \label{Pn}
\end{equation}
The regularity and multi-scale features of these subtotals
are especially interesting with respect to the numerical analysis of CC theory. It is e.g.~possible to study their approximation properties with respect to systematic basis sets in appropriate function spaces. 

\section{Asymptotic properties of RPA diagrams}
\label{RPAdiag}
This section contains a summary of our results concerning the asymptotic properties of iterated pair-amplitudes and certain classes of RPA diagrams. For the sake of a reader not interested in the mathematical details 
of the present work it can be read independently from the rest of the paper. The iterated pair-amplitudes can be decomposed into a finite number of Goldstone diagrams each of them has a characteristic asymptotic behaviour 
near coalescence points of electrons. In the following, let us denote by $\tau_{\rpa}$ an arbitrary Goldstone diagram which contributes to an iterated pair-amplitude $\tau^{(n)}$. It is convenient to study 
iterated pair-amplitudes and Goldstone diagrams with respect to the alternative
Cartesian coordinates
\begin{equation}
 {\bf x} := {\bf x}_1, \quad {\bf z} := {\bf x}_1 - {\bf x}_2 ,
\label{xzcoord}
\end{equation}
which become our standard Cartesian coordinates in the remaining part of the paper. By abuse of notation, we refer to iterated pair-amplitudes $\tau^{(n)}$ and Goldstone diagrams $\tau_{\rpa}$
either with respect to $({\bf x}_1,{\bf x}_2)$ or $({\bf x},{\bf z})$ variables, i.e., $\tau_{\rpa}({\bf x}_1,{\bf x}_2) \equiv \tau_{\rpa}({\bf x},{\bf z})$. 
As already mentioned before, the focus of the present work is on asymptotic expansions of iterated pair-amplitudes 
and Goldstone diagrams near coalescence points of electrons. i.e., ${\bf z} \ \rightarrow 0$. This will be achieved by identifying these quantities with kernel functions of classical pseudo-differential 
operators. These operators provide an algebra which enables an efficient treatment of interaction terms, in particular the nonlinear ones as will be demonstrated below. 
Let the corresponding symbol of a Goldstone diagram $\tau_{\rpa}$ be given by
\[
 \sigma_{\rpa}({\bf x},\boldeta) := \int e^{-i {\bf z} \boldeta} \tau_{\rpa}({\bf x},{\bf z}) \, d{\bf z} .
\]
The symbol belongs to the standard H\"ormander class $S^p(\mathbb{R}^3 \times \mathbb{R}^3)$ if it belongs to $C^\infty(\mathbb{R}^3 \times \mathbb{R}^3)$ and satisfies the estimate
\[
 \left| \partial_{\bf x}^\alpha \partial_{\boldeta}^\beta \sigma_{\rpa}({\bf x},\boldeta) \right| \lesssim \bigl( 1+ |\boldeta| \bigr)^{p-|\beta|} \quad \mbox{for all} \ {\bf x},\boldeta \in \mathbb{R}^3 .
\]
Here and in the following $a \lesssim b$ means that $a \leq Cb$ for some constant $C$ which is
independent of variables or parameters on which $a$, $b$ may depend on.
Furthermore, it belongs to the class $S^p_{cl}(\mathbb{R}^3 \times \mathbb{R}^3)$, $p \in \mathbb{Z}$, of classical symbols
if a decomposition
\begin{equation}
 \sigma_{\rpa}({\bf x},\boldeta) = \sum_{j=0}^{N-1} \sigma_{p-j}({\bf x},\boldeta) + \sigma_{p-N}({\bf x},\boldeta) 
\label{asymbcl}
\end{equation}
into symbols $\sigma_{p-j} \in S^{p-j}(\mathbb{R}^3 \times \mathbb{R}^3)$ and remainder $\sigma_{p-N} \in S^{p-N}(\mathbb{R}^3 \times \mathbb{R}^3)$ for any $N \in \mathbb{N}$
exits, such that for $\lambda \geq 1$ and $\eta$ greater some constant, we have $\sigma_{p-j}({\bf x}, \lambda \boldeta) = \lambda^{p-j} \sigma_{p-j}({\bf x},\boldeta)$.
The asymptotic expansion of a classical Goldstone symbol in Fourier space is related to a corresponding asymptotic expansion of its kernel function.
In the following theorem we establish the connection between Goldstone diagrams and classical pseudo-differential operators and give a simple rule to determine
the symbol class $S^p_{cl}(\mathbb{R}^3 \times \mathbb{R}^3)$ to which they belong. Furthermore, the theorem provides the asymptotic expansion of a Goldstone diagram
near coalescence points of electrons.
Let us distinguish in the following between smooth and singular contributions to an asymptotic expansion. Here smooth refers to asymptotic terms which belong to $C^{\infty}(\mathbb{R}^{3} \times \mathbb{R}^{3})$.
Such terms do not cause any difficulties for approximation schemes applied in numerical simulations. Actually, the asymptotic analysis discussed below, does not provide much information concerning smooth terms,
instead it focuses on singular contributions which determine the computational complexity of numerical methods for solving CC equations. 

\begin{theorem}
Goldstone diagrams of RPA-CC pair-amplitudes can be considered as kernel functions of classical pseudo-differential operators without logarithmic terms in their asymptotic expansion. Classical symbols (\ref{asymbcl}) corresponding to Goldstone diagrams belong to the symbol classes $S_{\cl}^{p}$ with $p \leq -4$.
The asymptotic expansion of a Goldstone diagram $\tau_{\rpa}$ with symbol $\sigma_{\rpa} \in S_{\cl}^{p}$, expressed in spherical coordinates $(z,\theta,\phi)$, is given by
\begin{equation}
 \tau_{\rpa}({\bf x},{\bf z}) \sim \sum_{0 \leq j} \tau_{p-j} ({\bf x},z,\theta,\phi) \quad \mbox{modulo} \ C^{\infty}(\mathbb{R}^3\times \mathbb{R}^3) ,
\label{tausing}
\end{equation}
with
\[
 \tau_{p-j}({\bf x},z,\theta,\phi) = z^{j-p-3} \sum_{\substack{l=0\\j-p-l \even}}^{j-p-3} \sum_{m=-l}^l g_{j,lm}({\bf x}) \, Y_{lm}(\theta,\phi)  ,
\]
where functions $g_{j,lm}$ belong to $C^{\infty}(\mathbb{R}^3)$.
In the following, we refer to (\ref{tausing}) as the singular part of the asymptotic expansion of a Goldstone diagram. 

The symbol class of a diagram $\tau_{\rpa}$ can be determined in the following manner
\begin{itemize}
\item[i)] Remove all ladder insertions in the diagram.
\item[ii)] Count the number of remaining interaction lines $n$.
\end{itemize}
Then the corresponding symbol of the diagram $\tau_{\rpa}$ belongs to the symbol class $S_{\cl}^{-4n}$, cf.~Fig.~\ref{fig3}. 
\label{theorem1}
\end{theorem}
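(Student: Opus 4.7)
The plan is to prove the theorem by structural induction following the iteration scheme (\ref{Ataun}), tracking at each step how the classical symbol of the resulting Goldstone diagram transforms. The central observation is that every operation appearing on the right-hand side of (\ref{Ataun}) --- pointwise multiplication by $V^{(1)}$ and by $V^{(2)}$, operator composition $V^{(2)}\circ\,\cdot\,$, and application of the parametrix $P$ of the elliptic operator $A$ --- acts in a controlled manner within the algebra $S_{\cl}^{p}(\mathbb{R}^{3}\times\mathbb{R}^{3})$. Since $P$ has order $-2$ and operator composition with $V^{(2)}$ (whose ${\bf z}$-symbol is a constant multiple of $|\boldeta|^{-2}$) contributes an additional order $-2$, each non-ladder interaction vertex together with the subsequent application of $P$ lowers the symbol order by exactly $4$, which is the quantitative content of rule (ii).

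For the base case I would take $\tau_{0}=-PV^{(2)}$: the identification of $V^{(2)}({\bf x}_{1},{\bf x}_{2})=1/|{\bf z}|$ as the kernel of a pseudo-differential operator with symbol in $S_{\cl}^{-2}$, combined with the order-$(-2)$ parametrix, gives $\tau_{0}\in S_{\cl}^{-4}$, i.e.\ $n=1$. The inductive step requires analysing each of the four contributions in (\ref{Ataun}) separately. The ladder term $f_{V^{(2)}\tau_{n}}$ is a pointwise multiplication in the kernel rather than a genuine operator composition, and should be absorbed into a modified parametrix for the Bethe-Goldstone operator; this preserves the classical structure without adding an interaction line, reflecting rule (i). The term $f_{V^{(1)}\tau_{n}}$ is multiplication by the smooth function $V^{(1)}({\bf x}-{\bf z})$, whose Taylor expansion in ${\bf z}$ keeps the symbol class and preserves its parity content. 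The RPA2 contribution $f_{V^{(2)}\circ\tau_{n}}$ is a Moyal composition and yields a symbol in $S_{\cl}^{p-2}$ when $\tau_{n}\in S_{\cl}^{p}$, while the nonlinear RPA3 contribution $f_{\tau_{m}\circ V^{(2)}\circ\tau_{n}}$ produces a symbol of order $p_{m}+p_{n}-2$; in each case a further factor of $-2$ from $P$ yields the predicted order $-4n$.

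The asymptotic expansion (\ref{tausing}) then follows from the standard correspondence between classical symbols and homogeneous kernel expansions: the inverse Fourier transform in ${\bf z}$ of a symbol $\sigma_{p-j}$ homogeneous of degree $p-j$ on $|\boldeta|\geq 1$ produces, modulo $C^{\infty}(\mathbb{R}^{3}\times\mathbb{R}^{3})$, a distribution homogeneous of degree $j-p-3$ in ${\bf z}$, and decomposing its angular restriction into spherical harmonics with smooth ${\bf x}$-coefficients yields the stated form. The parity restriction ``$j-p-l$ even'' reflects the fact that every symbol encountered is built from $|\boldeta|^{-2}$ (even in $\boldeta$) and from the elliptic parametrix, whose homogeneous components likewise have definite parity under $\boldeta\to-\boldeta$; each $\sigma_{p-j}$ therefore has a definite parity which, after Fourier transform, selects spherical harmonics $Y_{lm}$ only with $l\equiv p-j \pmod 2$. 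The absence of logarithms is a direct consequence of this even-parity structure combined with the bound $p\leq -4$, since in $\mathbb{R}^{3}$ the Fourier inverse of a homogeneous symbol can produce a logarithm only at the critical odd orders that are excluded here.

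The main obstacle I anticipate is the combined treatment of ladder insertions and the nonlinear RPA3 vertex. Pointwise multiplication by the singular $1/|{\bf z}|$ does not fit directly into the Moyal calculus, as it corresponds to convolution rather than composition of symbols; one must therefore show that incorporating the ladder into a modified parametrix for the Bethe-Goldstone operator still yields a classical pseudo-differential operator of order $-2$ with an even, logarithm-free symbol expansion. Once this is established, the nonlinear RPA3 iteration step simply combines two such classical symbols via a Moyal product followed by $P$, and the inductive bookkeeping of order, parity, and absence of logarithms closes cleanly through each iteration level.
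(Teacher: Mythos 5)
Your high-level strategy --- induction over the iteration scheme (\ref{Ataun}), with each non-ladder interaction line contributing $-2$ from the symbol of $V^{(2)}$ and $-2$ from the parametrix, a Leibniz-product bookkeeping for the compositions, and a parity argument for the absence of logarithms --- is the same accounting the paper carries out through Lemmas \ref{lemmatau}, \ref{lemmaV}, \ref{lemmacompsymb}, \ref{taun} and \ref{propRPA}. The genuine gap is that the two steps on which everything rests are asserted rather than proved, and they are precisely the steps that occupy most of the paper. You treat $P$ as an order $-2$ classical pseudo-differential operator in the three-dimensional ${\bf z}$-calculus, but $P$ is a parametrix of the six-dimensional edge-degenerate operator $A$ (which in $({\bf x},{\bf z})$-coordinates also contains $\Delta_{\bf x}$ and cross terms and is only invertible modulo a Green operator). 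The claim that applying it to a kernel with classical symbol in $S_{\cl}^{p}$ yields a kernel with classical symbol in $S_{\cl}^{p-2}$, \emph{and does so without generating logarithmic terms}, is exactly the content of Lemmas \ref{taun} and \ref{propRPA}; its proof requires the explicit Mellin-symbol expansion of the asymptotic parametrix and Green operator and the verification that the poles of the parametrix never coalesce with the poles of the Mellin transform of the right-hand side (Fig.~\ref{fig1}), using that ${\cal P}_l$ applied to the right-hand side is $O(r^{l-1})$. Your parity argument only excludes the logarithms arising in the symbol-to-kernel correspondence (Proposition \ref{kernelasymp}); it does not exclude the resonant logarithms that the solution step $A\tau_{n+1}=f$ could itself produce, which is a separate mechanism.

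Two further points. The statement that the Moyal composition ``preserves the parity content'' is not automatic: the Leibniz product (\ref{Leibniz}) involves arbitrary $\boldeta$-derivatives and products of spherical harmonics, and one must verify that each homogeneous component $\sigma_{p-n}$ retains the structural property that only $Y_{lm}$ with $l\le n$ and $n-l$ even occur (the paper's Lemma \ref{lemmacompsymb} and Table 1); without the bound $l\le n$ the parity count alone does not force $j-p-l$ to be even in the kernel expansion. Finally, you correctly identify the ladder insertions as the main obstacle but leave them unresolved; the paper disposes of them not by modifying a parametrix within the ${\bf z}$-calculus but by building the singular potential into the edge-degenerate Hamiltonian (\ref{H.edge}) from the outset, so that it only shifts the coefficients $Z_1,Z_2$ in the asymptotic parametrix and leaves the pole structure, hence the symbol class, unchanged.
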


An appropriate measure of the singular behaviour of Goldstone diagrams
is the so-called asymptotic smoothness property discussed in the following corollary, to which we refer in Section \ref{Besovregularity} where we discuss
approximation properties of these diagrams.
\begin{corollary}
A Goldstone diagram $\tau_{\rpa}$, with corresponding symbol in the symbol class $S_{\cl}^{p}$, belongs to $C^{\infty}(\mathbb{R}^3\times \mathbb{R}^3 \setminus \{ 0 \})$ and satisfies the asymptotic smoothness property
\begin{equation}
\left| \partial_{\bf x}^{\beta} \partial_{\bf z}^{\alpha}
\tau_{\rpa} \right| \lesssim |{\bf z}|^{-3-p-|\alpha|-N}
\quad \mbox{for} \  -3-p-|\alpha|-N  <0, \  \mbox{and any} \ N \in \mathbb{N}_0,
\label{asympestimate}
\end{equation}
where for $|\alpha| \leq -3-p$ it has bounded partial derivatives.
\label{corollaryasymp}
\end{corollary}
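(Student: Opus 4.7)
The plan is to deduce the corollary directly from Theorem 1, using the asymptotic expansion (\ref{tausing}) together with the standard fact that classical pseudo-differential kernels are smooth off the diagonal. First I would establish the global smoothness statement: since the symbol of $\tau_{\rpa}$ lies in $S^p_{\cl}(\mathbb{R}^3 \times \mathbb{R}^3) \subset S^p(\mathbb{R}^3 \times \mathbb{R}^3)$, Schwartz-kernel theory for pseudo-differential operators yields that the associated kernel is $C^\infty$ away from the diagonal $\{{\bf x}_1 = {\bf x}_2\}$. In the coordinates $({\bf x},{\bf z})=({\bf x}_1,{\bf x}_1-{\bf x}_2)$ the diagonal corresponds to $\{{\bf z}=0\}$, so $\tau_{\rpa} \in C^\infty(\mathbb{R}^3 \times (\mathbb{R}^3\setminus\{0\}))$.

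Next I would analyze each singular building block $\tau_{p-j}({\bf x},z,\theta,\phi) = z^{j-p-3}\sum_{l,m} g_{j,lm}({\bf x})\,Y_{lm}(\theta,\phi)$ appearing in (\ref{tausing}). Viewed as a function of ${\bf z}\in \mathbb{R}^3\setminus\{0\}$, it is positively homogeneous of degree $j-p-3$ and $C^\infty$, with $C^\infty$ coefficients in ${\bf x}$. Consequently $\partial_{\bf x}^\beta$ acts smoothly on $g_{j,lm}$ while preserving the ${\bf z}$-homogeneity, and $\partial_{\bf z}^\alpha$ lowers the degree by $|\alpha|$, producing a function bounded by $|{\bf z}|^{j-p-3-|\alpha|}$ uniformly on compact sets in ${\bf x}$. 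Summing over $j\ge 0$, the most singular contribution is $j=0$, giving $|\partial_{\bf x}^\beta\partial_{\bf z}^\alpha \tau_{p}| \lesssim |{\bf z}|^{-3-p-|\alpha|}$.

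I would then truncate (\ref{tausing}) at some order $M$, writing $\tau_{\rpa} = \sum_{j=0}^{M-1}\tau_{p-j} + R_M$ where $R_M$ is, modulo a $C^\infty$ background produced by Theorem 1, controlled by the next omitted homogeneity degree and hence satisfies $|\partial_{\bf x}^\beta\partial_{\bf z}^\alpha R_M| \lesssim |{\bf z}|^{M-p-3-|\alpha|}$ once $M$ is chosen large enough relative to $|\alpha|$. Combining with the block-wise estimates yields $|\partial_{\bf x}^\beta\partial_{\bf z}^\alpha \tau_{\rpa}|\lesssim |{\bf z}|^{-3-p-|\alpha|}$ for $|\alpha| > -3-p$; the weaker bound $|{\bf z}|^{-3-p-|\alpha|-N}$ stated in (\ref{asympestimate}) then follows for any $N\in\mathbb{N}_0$ with $-3-p-|\alpha|-N<0$ by absorbing the additional factor $|{\bf z}|^{-N}\ge 1$ on a neighbourhood of the singular set. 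When $|\alpha|\le -3-p$, every homogeneity degree $j-p-3-|\alpha|$ with $j\ge 0$ is non-negative, so each $\partial_{\bf x}^\beta\partial_{\bf z}^\alpha \tau_{p-j}$ extends continuously to ${\bf z}=0$, giving the claimed boundedness.

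The main technical obstacle is verifying that the $C^\infty$ background and the remainder $R_M$ in the truncated expansion really contribute only harmless terms, i.e., that their derivatives of any fixed order cannot dominate the singular scaling coming from the leading homogeneous block $\tau_p$. The resolution is quantitative control of $R_M$: for prescribed $|\alpha|,|\beta|$ one picks $M$ sufficiently large (depending on $|\alpha|+N$) so that $\partial_{\bf x}^\beta\partial_{\bf z}^\alpha R_M$ is continuous at ${\bf z}=0$, which is why the corollary admits the free parameter $N$ and is stated \emph{for any} $N\in\mathbb{N}_0$. Once this point is settled, all remaining steps are direct applications of homogeneity.
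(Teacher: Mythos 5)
Your handling of the local singularity at ${\bf z}=0$ is essentially sound: it amounts to re-deriving from the polyhomogeneous expansion (\ref{tausing}) what the paper already encodes in Propositions \ref{kernelasymp} and \ref{Wremainder}, whereas the paper disposes of the corollary in one line by applying Proposition \ref{SteinPr} with $m=p$ to the symbol $\sigma_{\rpa}\in S^{p}_{\cl}\subset S^{p}$ furnished by Theorem \ref{theorem1}. The genuine gap lies in your reading of the parameter $N$. You treat the factor $|{\bf z}|^{-N}$ as a way of \emph{weakening} the bound near the singular set, ``absorbing $|{\bf z}|^{-N}\ge 1$ on a neighbourhood of ${\bf z}=0$''. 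But (\ref{asympestimate}) is a global estimate on $\mathbb{R}^{3}\times(\mathbb{R}^{3}\setminus\{0\})$, identical in form to (\ref{bound}); for $|{\bf z}|\ge 1$ and $N$ large its content is the \emph{rapid decay} of the kernel and all its derivatives away from the diagonal. That decay cannot be extracted from (\ref{tausing}), which holds only modulo an unquantified term in $C^{\infty}(\mathbb{R}^{3}\times\mathbb{R}^{3})$ and whose homogeneous blocks are in any case supported near ${\bf z}=0$ by the cut-off; a smooth background need not decay at all. The off-diagonal decay is a property of the oscillatory integral: writing ${\bf z}^{\gamma}\,\partial_{\bf x}^{\beta}\partial_{\bf z}^{\alpha}\tau_{\rpa}$ as an integral of $\partial_{\boldeta}^{\gamma}\bigl(\boldeta^{\alpha}\partial_{\bf x}^{\beta}\sigma_{\rpa}\bigr)\in S^{p+|\alpha|-|\gamma|}$ against $e^{i{\bf z}\boldeta}$, one gains one power of $|{\bf z}|^{-1}$ per $\boldeta$-derivative once the integrand is absolutely integrable --- which is precisely Proposition \ref{SteinPr}.

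Note also that you cannot avoid that proposition anyway: your quantitative control of the remainder $R_{M}$, namely $|\partial_{\bf x}^{\beta}\partial_{\bf z}^{\alpha}R_{M}|\lesssim |{\bf z}|^{M-p-3-|\alpha|}$, is nothing but (\ref{bound}) applied to the remainder symbol $\sigma_{p-M}\in S^{p-M}$ of the decomposition (\ref{asymbcl}). Since the key kernel estimate must be invoked for the remainder in any case, the complete and shorter route is to apply it to the full symbol at once; this also yields the \emph{global} boundedness of $\partial_{\bf x}^{\beta}\partial_{\bf z}^{\alpha}\tau_{\rpa}$ for $|\alpha|\le -3-p$ (absolute convergence of $\int\boldeta^{\alpha}\partial_{\bf x}^{\beta}\sigma_{\rpa}\,d\boldeta$ together with the absence of logarithmic terms at the threshold $|\alpha|=-3-p$, guaranteed by Theorem \ref{theorem1}), rather than only boundedness on a neighbourhood of ${\bf z}=0$ as your homogeneity argument provides.
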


\begin{figure}[t]
\begin{center}
\includegraphics[scale=0.3]{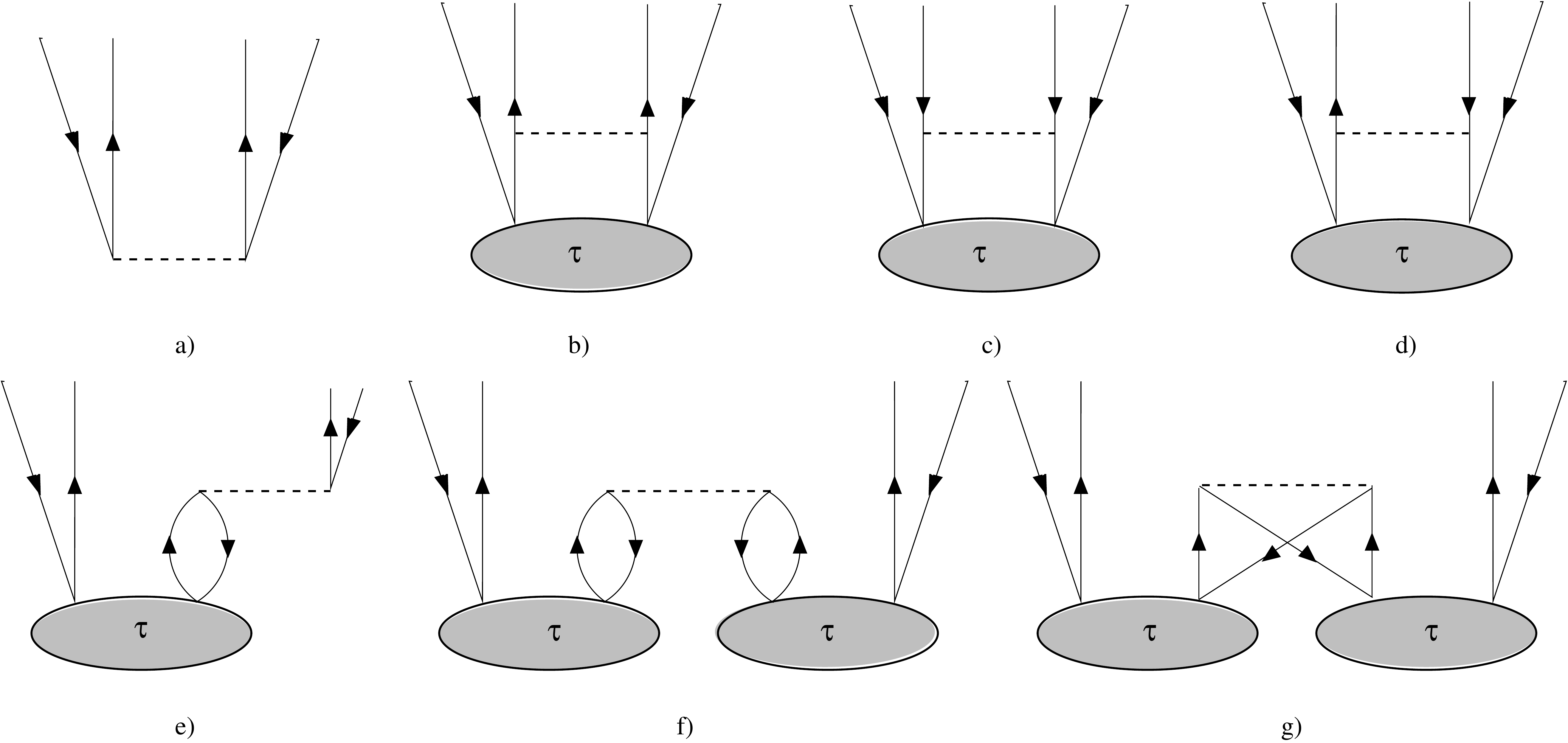}
\end{center}
\vspace{0.5cm}
\caption{Goldstone diagrams of the SUB-2 terms taken into account by our model. a) Coulomb interaction (\ref{MP1}). b) - d) Particle, hole and particle-hole ladder diagrams which contribute to (\ref{ladder}) and (\ref{RPA1}).
e) Linear RPA diagram which contributes to (\ref{RPA2}). f), g) Nonlinear RPA diagrams, corresponding to (\ref{RPA3}), direct and exchange contributions, respectively.}
\label{fig2}
\end{figure}

\begin{figure}[t]
\begin{center}
\includegraphics[scale=0.25]{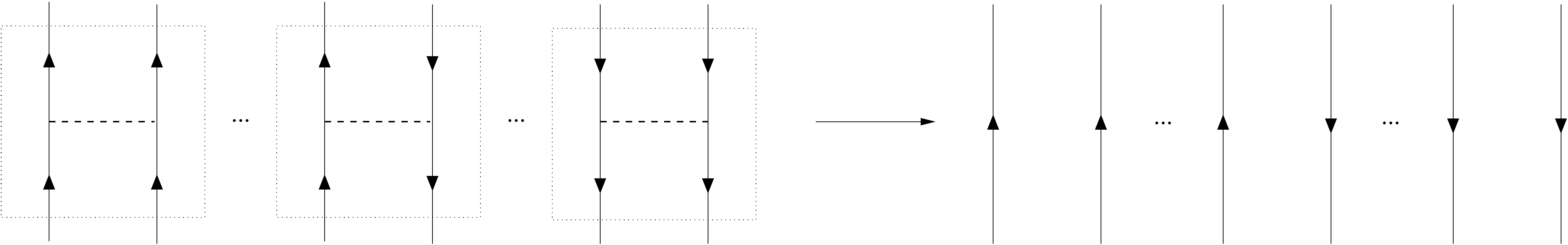}\\ \vspace{1.cm}
\includegraphics[scale=0.23]{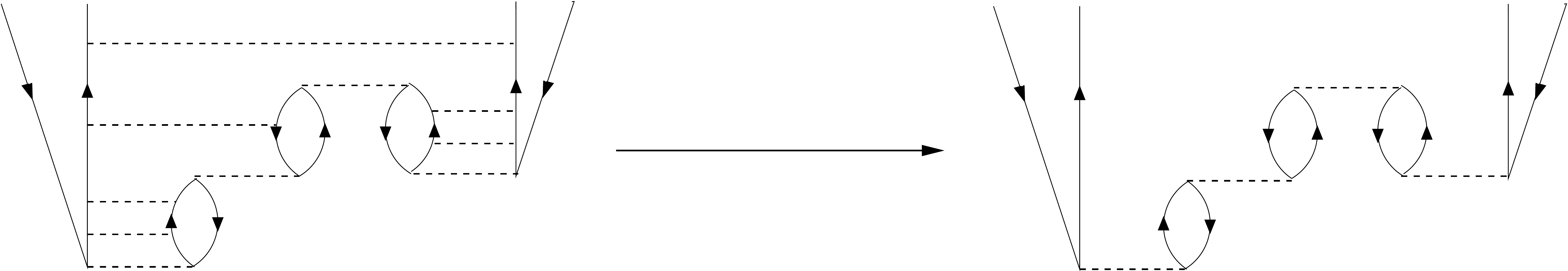}
\end{center}
\vspace{0.5cm}
\caption{Stripping of ladder contributions from RPA Goldstone diagrams. Upper part: cutting-out of particle, hole and particle-hole ladder insertions. Lower part: example of a RPA diagram with multiple ladder insertions. The symbol of the resulting
forth-order RPA diagram belongs to the symbol class $S_{\cl}^{-16}$ as well as the symbol corresponding to the original diagram.}
\label{fig3}
\end{figure}

According to Theorem \ref{theorem1}, the ladder diagrams b), c) and d) in Figure \ref{fig2} do not alter symbol classes within the standard iteration scheme outlined in Section \ref{iterationscheme}. 
In the standard RPA models usually considered in the literature, cf.~\cite{SHS08}, these diagrams are neglected altogether. What remains are the RPA diagrams e), f) and g) in Figure \ref{fig2}, which represent the driving terms of the iteration scheme.

\begin{corollary}
The symbols of Goldstone diagrams representing the progression $P_n$, cf.~(\ref{Pn}), of the $n$'th iteration step of standard RPA models, i.e.~no ladder insertions, can be classified according to the descending filtration of symbol classes 
\[
 S_{\cl}^{-4(n+1)} \supset S_{\cl}^{-4(n+2)} \supset S_{\cl}^{-4(n+3)} \supset \cdots \supset S_{\cl}^{-4(2^{n+1}-1)} .
\]
\label{corollaryfiltration}
\end{corollary}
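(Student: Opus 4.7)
I would prove the sharper statement, by induction on $n$, that every Goldstone diagram contributing to $P_n$ carries between $n+1$ and $2^{n+1}-1$ interaction lines in the sense of Theorem \ref{theorem1}; since the standard RPA model contains no ladder insertions, the counting rule of Theorem \ref{theorem1} applies directly and the claimed symbol-class filtration is an immediate consequence. The base case $n=0$ is immediate: $P_0=\tau_0$ solves $A\tau_0=-V^{(2)}$ and therefore corresponds to the single-vertex diagram of symbol class $S_{\cl}^{-4}$, which matches $-4(0+1)=-4(2^1-1)=-4$.

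\textbf{Inductive step.} Writing $\tau_n = \tau_{n-1}+P_n$, inserting into the iteration equation (\ref{Ataun}) restricted to the driving terms (\ref{RPA1}), (\ref{RPA2}), (\ref{RPA3}), and subtracting the equation at the previous step, the linearity of $f_{V^{(1)}\tau}$ and $f_{V^{(2)}\circ\tau}$ in $\tau$ together with the bilinearity of $f_{\tau\circ V^{(2)}\circ\tau}$ yields
\[
 A P_{n+1} = f_{V^{(1)}P_n} - f_{V^{(2)}\circ P_n} - f_{P_n\circ V^{(2)}\circ\tau_{n-1}} - f_{\tau_{n-1}\circ V^{(2)}\circ P_n} - f_{P_n\circ V^{(2)}\circ P_n}.
\]
Each diagrammatic contribution to $P_{n+1}$ is obtained by applying the parametrix of the elliptic operator $A$ to one of the five terms on the right-hand side; since this parametrix is a pseudo-differential operator, it introduces no new interaction vertex. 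Thus every diagram in $P_{n+1}$ either adds exactly one fresh $V^{(1)}$ or $V^{(2)}$ vertex to a diagram from $P_n$ (the two linear contributions) or glues two diagrams through a fresh $V^{(2)}$ vertex (the three nonlinear contributions, each involving at least one $P_n$ factor).

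\textbf{Counting and conclusion.} Using the induction hypothesis $n+1\le|d|\le 2^{n+1}-1$ for $d\in P_n$ and the derived bounds $1\le|d'|\le 2^{n}-1$ for $d'\in\tau_{n-1}=\sum_{k=0}^{n-1}P_k$, the minimum number of interaction lines in a diagram of $P_{n+1}$ is realised by a linear contribution and equals $(n+1)+1=n+2$; the lightest nonlinear contribution already carries $(n+1)+1+1=n+3$ vertices. The maximum is realised by the $P_n\circ V^{(2)}\circ P_n$ term with both factors at their maximal weight, giving $2(2^{n+1}-1)+1=2^{n+2}-1$. Theorem \ref{theorem1} then places every diagram of $P_{n+1}$ in some class $S_{\cl}^{-4k}$ with $n+2\le k\le 2^{n+2}-1$, closing the induction with $n\mapsto n+1$. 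The main point to verify carefully is that no partial cancellation between the five contributions on the right-hand side above collapses the interaction count below the stated minimum: this holds because distinct gluings of $P_n$ and $\tau_{n-1}$ factors through $V^{(1)}$ or $V^{(2)}$ vertices produce topologically distinct Goldstone diagrams, so the five terms are diagrammatically independent and no such cancellation can occur.
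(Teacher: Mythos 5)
Your overall strategy is the right one and in fact supplies the combinatorial content that the paper leaves implicit: the paper's ``proof'' of Corollary \ref{corollaryfiltration} is a one-line reference to Lemmas \ref{lemmatau}, \ref{lemmaV}, \ref{lemmacompsymb}, \ref{taun} and \ref{propRPA} together with the counting rule of Theorem \ref{theorem1}, and your induction (minimum $n+1$ and maximum $2^{n+1}-1$ interaction lines for diagrams of $P_n$, with the recursions $m_{n+1}=m_n+1$ and $M_{n+1}=2M_n+1$) is exactly the argument needed to turn that rule into the stated filtration. The observation that the parametrix adds no vertex, and that every diagram of $P_{n+1}$ must contain at least one $P_n$ factor so that nothing cancels in the difference $\tau_{n+1}-\tau_n$, are the two points the paper glosses over, and you handle both correctly.

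There is, however, one concrete inaccuracy: you include $f_{V^{(1)}P_n}$, i.e.\ the term (\ref{RPA1}), among the driving terms of the recursion for $P_{n+1}$ and count its $V^{(1)}$ vertex as a fresh interaction line. According to Figure \ref{fig2} and the discussion following Corollary \ref{corollaryasymp}, (\ref{RPA1}) is one of the ladder diagrams: in the counting prescription of Theorem \ref{theorem1} its interaction line is stripped away before counting, so it does \emph{not} increase the symbol order; and in the ``standard RPA model, i.e.\ no ladder insertions'' of the corollary this term is absent altogether. As written, your induction step would therefore fail if that term were really present (a diagram obtained from a minimal $P_n$ diagram by a $V^{(1)}$ insertion would still sit in $S_{\cl}^{-4(n+1)}$, outside the claimed filtration for $P_{n+1}$); two errors cancel in your arithmetic only because you also miscount the $V^{(1)}$ vertex as a non-ladder line. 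The fix is simply to delete $f_{V^{(1)}P_n}$ from the right-hand side of your recursion, after which the minimum $n+2$ is realised by $f_{V^{(2)}\circ P_n}$ and the rest of your argument goes through unchanged.
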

 
 The following three sections provide the individual building blocks required for the proof of Theorem \ref{tausing} and its corollaries. These building blocks are finally put together in Section \ref{proofmaintheorem} for the proof of the theorem.
 
\section{Pair wavefunctions, classical pseudo-differential operators and wedge Sobolev spaces}
\label{psiDO}
It is the purpose of this section to develop some technical tools in order to deal with the asymptotic behaviour of certain linear and nonlinear terms, cf.~Section \ref{iterationscheme}, within the RPA-CC model.
Once again, indices referring to occupied orbitals and spin degrees of freedom have been omitted.

\subsection{Kernel functions and wedge Sobolev spaces}
\label{kernel-wedge}
In the present work, we follow a dual approach where we either
consider iterates of pair-amplitudes as functions in so-called wedge Sobolev spaces with asymptotics, to be defined below, or as kernel functions of classical pseudo-differential operators. At first it must be shown that the initial iterate 
$\tau_0$ actually fits into this setting. Thereafter all linear
and nonlinear interaction terms, cf.~(\ref{BG}), (\ref{eff}), 
(\ref{oprod}) and (\ref{oprod2}), are studied and their consistency
with our setting has to be proven. Under the hypothesis that such kind of
approach is actually feasible let us outline the underlying concept
in some detail. 

From the point of view of singular analysis it is convenient to
consider  the configuration space $\mathbb{R}^6$ of two electrons
as a stratified manifold with embedded edge and corner singularities, cf.~\cite{FHS15}.
For this purpose let us introduce hyperspherical coordinates in $\mathbb{R}^6$ with radial variable
\begin{equation}
 t:= \sqrt{x_1^2 + x_2^2 + x_3^2 + x_4^2 + x_5^2 + x_6^2} .
\label{t-def}
\end{equation}
In these coordinates, $\mathbb{R}^6$ can be formally considered as a conical manifold
with compact base $B$, homeomorphic to $S^5$, and embedded conical singularity at the origin, i.e.,
\[
 \mathbb{R}^6 \equiv (S^5)^\Delta:=(\overline{\mathbb{R}}_+ \times S^5) / (\{0\} \times S^5) .
\]
Here, the origin formally represents a higher order corner singularity, cf.~\cite{ES97} for further details concerning manifolds with singularities.
Due to the absence of singular electron-nuclear interactions, however, there is no corresponding 
physical higher order singularity at the origin, instead it belongs to the edge corresponding to
coalescence points of electrons. 
Removal of this ``singular'' point defines an open stretched cone
\[
 (S^5)^{\wedge} := \mathbb{R}_+ \times S^5 .
\]
In order to avoid possible complications, let us
represent the configuration space
of two electrons in the following by a ``hyperspherical atlas'' of at least two open sets with local hyperspherical coordinates such that $t > c$, for an appropriate constant $c>0$, is guaranteed with respect to
any local chart. Let us briefly discuss the singular structure on the base of the cone $S^5$ and refer to \cite{FHS15} for further details.
On the base of the cone, we have a closed embedded submanifold $Y$ which represents the singular edge of coalescence points of electrons.
The submanifold $Y$ is homeomorphic to $S^2$ and there exists a local neighbourhood $U$ on $S^5$ which is homeomorphic to a wedge
\[
 W = X^{\Delta} \times Y \ \ \mbox{with} \ X^{\Delta} :=
 (\overline{\mathbb{R}}_+ \times X) / (\{0\} \times X),
\]
where the base $X$ of the wedge is again homeomorphic to $S^2$.
The corresponding open stretched wedge is
\[
 \mathbb{W} = X^{\wedge} \times Y \ \ \mbox{with} \ X^{\wedge} :=
\mathbb{R}_+ \times X .
\]

The hyperspherical coordinates associated to a wedge are defined with respect to 
center of mass coordinates ${\bf z}_1 :=(z_1,z_2,z_3)$
and ${\bf z}_2 :=(z_4,z_5,z_6)$ with
\begin{gather*}
 z_1 = \frac{1}{\sqrt{2}} (x_1 - x_4), \ \ z_2 = \frac{1}{\sqrt{2}} (x_2 - x_5), \ \
 z_3 = \frac{1}{\sqrt{2}} (x_3 - x_6), \\
 z_4 = \frac{1}{\sqrt{2}} (x_1 + x_4), \ \ z_5 = \frac{1}{\sqrt{2}} (x_2 + x_5), \ \
 z_6 = \frac{1}{\sqrt{2}} (x_3 + x_6)
 \end{gather*}
and explicitly given by
\[
 z_1 = t \sin\,r \sin \theta_1 \cos \phi_1, \ \ z_2 = t \sin\,r \sin \theta_1
\sin \phi_1, \ \
 z_3 = t \sin\,r \cos \theta_1 ,
\]
\[
 z_4 = t \cos\,r \sin \theta_2 \cos \phi_2, \ \ z_5 = t \cos\,r \sin \theta_2
\sin \phi_2, \ \
 z_6 = t \cos\,r \cos \theta_2 ,
\]
with $ t \in (0,\infty)$, $r \in (0,\frac{\pi}{2}]$, $\theta_1,  \theta_2 \in(0,\pi)$, $\phi_1, \phi_2 \in[0, 2\pi)$, cf.~\cite{FFHS16} for further details. The center of mass coordinates are related to our standard Cartesian coordinates via
\begin{equation}
 {\bf x} := \tfrac{1}{\sqrt{2}} \bigl( {\bf z}_1 + {\bf z}_2 \bigr) , \quad {\bf z} := \sqrt{2} {\bf z}_1 .
\label{xzhyper}
\end{equation}

According to our previous discussion, the hyperspherical radius
$t$ does not belong to the edge variables. Instead it should be assigned to the corner singularity as a distance variable and it
is possible to specify, within the corner degenerate pseudo-differential calculus, the $t$-asymptotic behaviour of solutions 
near $t=0$. Due to the absence of a physical corner singularity at this point, we are actually not interested in the $t$-asymptotic behaviour
of solutions. Therefore it is reasonable to treat $t$ simply as yet another edge variable which makes perfectly sense if we want to study the $r$-asymptotic behaviour of solutions locally near coalescence points of electrons, cf.~\cite{FFHS16} for further details. However within the present work we also have to take care
of the exit behaviour of solutions for $t \rightarrow \infty$.
The latter is a natural consequence of the corner degenerate
calculus, cf.~\cite{CS16}. In the following let us apply the corner degenerate
calculus in a rather formal manner to keep control of the exit behaviour and stick to the simpler edge-degenerate calculus for actual calculations of the $r$-asymptotic behaviour of solutions. 
In order to avoid possible confusion let us denote in the following the edge by ${\cal Y}$ $(Y)$ in the corner (edge) degenerate case.
With respect to hyperspherical coordinates the edges ${\cal Y}$ and $Y$ are represented by the coordinates $(\theta_2, \phi_2)$ and $(t,\theta_2, \phi_2)$, respectively. 

It is a consequence of ellipticity theory in the corner degenerate case \cite{CS16,FFHS17} and of the discussion of nonlinear interaction terms below, that iterated pair-amplitudes belong to a Schwartz-type corner space, cf.~\cite{CS16}, 
\[
 {\cal S}^{\gamma_2} \bigl( \mathbb{R}_+,H^{\infty,\gamma_1}(B) \bigr) := \omega_2 {\cal H}^{\infty,\gamma_2} \bigl( \mathbb{R}_+,H^{\infty,\gamma_1}(B) \bigr) + \left. (1-\omega_2) {\cal S}\bigl( \mathbb{R},H^{\infty,\gamma_1}(B) \bigr) \right|_{\mathbb{R}_+} ,
\]
where $\omega_2 \in C_0^\infty(\overline{\mathbb{R}}_+)$ corresponds to a cut-off function which equals 1 on a given interval $[0,\epsilon)$, $\epsilon >0$. 
The range of weights $\gamma_1,\gamma_2$ appropriate for our application have been studied in \cite{FH10,FFH16b}.
In the present work, we do not consider the corner singularity, and by an appropriate choice of charts within the atlas,
pair-amplitudes belong to
\begin{equation}
{\cal S} \bigl( \mathbb{R}_+,H^{\infty,\gamma_1}(B) \bigr) := \left. (1-\omega_2) {\cal S}\bigl( \mathbb{R},H^{\infty,\gamma_1}(B) \bigr) \right|_{\mathbb{R}_+} .
\label{Schwartztype}
\end{equation}
Therefore, the remaining part of the Schwartz-type corner space does not enter into our discussion and we refer to \cite{CS16} for its definition and properties.
The Schwartz space ${\cal S}\bigl( \mathbb{R},H^{\infty,\gamma_1}(B) \bigr)$ consists of all functions $u \in C^{\infty}\bigl(\mathbb{R},H^{\infty,\gamma_1}(B) \bigr)$ such that
\[
 \sup_{t \in \mathbb{R}} \| P(t) Q(\partial_t) u \|_{H^{s,\gamma_1}} < \infty , \quad \forall \, s \in \mathbb{R} ,
\]
for any polynomials $P,Q$, cf.~\cite{Schwartz}. The bound is given with respect to the system of norms 
of the Fr\'echet space $H^{\infty,\gamma_1}(B) := \cap_s H^{s,\gamma_1}(B)$, where the intersection is taken with respect to the edge Sobolev spaces
\[
 H^{s,\gamma_1}(B) := \omega_1 {\cal W}^{s}_{\loc} \bigl( {\cal Y},{\cal K}^{s,\gamma_1}_{P_1}  (X^{\wedge}) \bigr) + (1-\omega_1) H^{s}_{\loc} \bigl( S_0(B) \bigr) ,
\]
defined on the compact base $B$ of the cone which carries the edge-type singularity. Here $\omega_1$ is another cut-off function, $S_0(B)$ denotes the smooth interior of the base $B$ and ${\cal W}^{s}_{\loc} \bigl( {\cal Y},{\cal K}^{s,\gamma_1}_{P_1} \bigr)$
is a weighted wedge Sobolev space with asymptotics. For the ease of the reader, we have summarized in Appendix \ref{appendix1} some basic definitions and properties of these spaces.

Let us consider an atlas on the configuration space of the electrons where individual charts represent the electron pair in hyperspherical coordinates. For a given chart let us represent
the pair-amplitude in hyperspherical coordinates and define
\[
 \tilde \tau(r,\theta_1,\phi_1,t,\theta_2, \phi_2) := 
 \varphi(\theta_2,\phi_2) \tau(r,\theta_1,\phi_1,t,\theta_2, \phi_2) ,
\]
where $\varphi \in C^\infty_0({\cal Y})$ belongs to an appropriate partition of unity on the edge ${\cal Y} \sim S^{2}$.
The pair-amplitude can be decomposed via a cut-off function $\omega$ into a singular edge and smooth inner part in the following way
\begin{equation}
\tilde{\tau}(r,\theta_1,\phi_1,t,\theta_2, \phi_2) = \omega(r) \tilde{\tau}(r,\theta_1,\phi_1,t,\theta_2,\phi_2) +
 \bigl( 1- \omega(r) \bigr) \tilde{\tau}(r,\theta_1,\phi_1,t,\theta_2,\phi_2) .
\label{tdecomp}
\end{equation}

Let us suppose, that the pair-amplitude near an edge 
\begin{equation}
 \omega(r) \tilde{\tau}(r,\theta_1,\phi_1,t,\theta_2,\phi_2) , 
\label{tedge}
\end{equation}
belongs to ${\cal S}\bigl( \mathbb{R}_+,\omega {\cal W}_{\comp}^\infty \bigl( {\cal Y}, {\cal K}_{P}^{\infty, \gamma}(X^\wedge) \bigr) \bigr) \subset {\cal S} \bigl( \mathbb{R}_+,H^{\infty,\gamma}(B) \bigr)$ 
for weight data ${\bf g} =(\gamma, \Theta)$ with
explicit asymptotic expansion
\begin{equation}
 \omega \tilde{\tau} \sim \omega(r) \sum_{j=0}^{N_\Theta} \sum_{k=0}^{m_j} r^{-p_j} \log^k r \, c_{jk}(\theta_1,\phi_1) v_{jk}(t,\theta_2,\phi_2) + h_{\Theta}(r,\theta_1,\phi_1,t,\theta_2,\phi_2)
\label{asymp1}
\end{equation}
where $c_{jk} \in C^\infty(X)$, $v_{jk} \in {\cal S}\bigl( \mathbb{R}_+,H^\infty_{\comp}({\cal Y}) \bigr)$ and 
$h_{\Theta} \in {\cal S}\bigl( \mathbb{R}_+,\omega {\cal W}_{comp}^\infty \bigl( {\cal Y}, {\cal K}_{\Theta}^{\infty, \gamma}(X^\wedge) \bigr)) \bigr)$, cf.~(\ref{appasymp}) in Appendix \ref{appendix1}.
Furthermore let us specify the weight data according to $\frac{1}{2} < \gamma < \frac{3}{2}$, cf.~\cite{FH10} with
\[
 \tfrac{3}{2} -\gamma + \vartheta < \Re p_j < \tfrac{3}{2} -\gamma
\]
where in particular the Taylor asymptotic type $-p_j = 0,1,2,\cdots$ with $m_j= 0$ ($j$ odd) and $m_j=0,1$ ($j$ even) will be assumed.
In the following we refer to this case when we mention generic weight data or generic asymptotic type.
According to our assumptions, the complementary part of the pair-amplitude
\begin{equation}
 \bigl( 1- \omega(r) \bigr) \tilde{\tau}(r,\theta_1,\phi_1,t,\theta_2,\phi_2) ,
\label{tcompl}
\end{equation}
belongs to a Schwartz-type corner space ${\cal S} \bigl( \mathbb{R}_+,H^{\infty,\gamma_1}(B) \bigr)$. For an appropriate choice of $\omega$, we can further restrict (\ref{tcompl}) to the inner part
$(1-\omega_1) H^{\infty}_{\loc} \bigl( S_0(B) \bigr)$.

As already mentioned before, we consider a dual approach where
the pair-amplitude also represents a kernel function of a classical pseudo-differential operator. In order to express this duality let
us first define partial derivatives with respect to hyperspherical coordinates 
\[
 \partial_{X^\wedge} := \begin{pmatrix} -r \partial_r \\ \partial_{\theta_1} \\ \partial_{\phi_1} \end{pmatrix} \quad \quad
 \partial_{Y} := \begin{pmatrix} \partial_t \\ \partial_{\theta_2} \\ \partial_{\phi_2} \end{pmatrix}
\]
and appropriate Cartesian coordinates, cf.~(\ref{xzcoord}),
\[
 \partial_{{\bf x}} := \begin{pmatrix} \partial_{x_1} \\ \partial_{x_2} \\ \partial_{x_3} \end{pmatrix} \quad \quad
 \partial_{{\bf z}} := \begin{pmatrix} \partial_{z_1} \\ \partial_{z_2} \\ \partial_{z_3} \end{pmatrix} .
\]
These derivatives are related via the linear transformation
\[
 \begin{pmatrix} \partial_{{\bf x}} \\ \partial_{{\bf z}} \end{pmatrix} = \mathbb{T}
 \begin{pmatrix} \partial_{X^\wedge} \\ \partial_{Y} \end{pmatrix}
\]
given by the matrix
\[
 \mathbb{T} = \begin{pmatrix} \begin{array}{c} \vdots \\ -\frac{\sqrt{2}}{r} \partial_{{\bf z}_2} r \\ \vdots \end{array} & 
                 \begin{array}{c} \vdots \\ 0 \\ \vdots \end{array} &
                 \begin{array}{c} \vdots \\ 0 \\ \vdots \end{array} &
                 \begin{array}{c} \vdots \\ \sqrt{2} \partial_{{\bf z}_2} t \\ \vdots \end{array} &
                 \begin{array}{c} \vdots \\ \sqrt{2} \partial_{{\bf z}_2} \theta_2 \\ \vdots \end{array} &
                 \begin{array}{c} \vdots \\ \sqrt{2} \partial_{{\bf z}_2} \phi_2 \\ \vdots \end{array} \\
                 \begin{array}{c} \vdots \\ \frac{1}{\sqrt{2}r} \bigl( \partial_{{\bf z}_2} r - \partial_{{\bf z}_1} r \bigr) \\ \vdots \end{array} &
                 \begin{array}{c} \vdots \\ \tfrac{1}{\sqrt{2}} \partial_{{\bf z}_1} \theta_1 \\ \vdots \end{array} &
                 \begin{array}{c} \vdots \\ \tfrac{1}{\sqrt{2}} \partial_{{\bf z}_1} \phi_1 \\ \vdots \end{array} &
                 \begin{array}{c} \vdots \\ \tfrac{1}{\sqrt{2}} \bigl( \partial_{{\bf z}_1} t - \partial_{{\bf z}_2} t \bigr) \\ \vdots \end{array} &
                 \begin{array}{c} \vdots \\ -\tfrac{1}{\sqrt{2}} \partial_{{\bf z}_2} \theta_2 \\ \vdots \end{array} &
                 \begin{array}{c} \vdots \\ -\tfrac{1}{\sqrt{2}} \partial_{{\bf z}_2} \phi_2 \\ \vdots \end{array} \end{pmatrix} ,
\]
with
\begin{align*}
 \partial_{{\bf z}_1} r & = \frac{|{\bf z}_2|}{|{\bf z}_1| \bigl( |{\bf z}_1|^2 + |{\bf z}_2|^2 \bigr)} \begin{pmatrix} z_1 \\ z_2 \\ z_3 \end{pmatrix} &
 \partial_{{\bf z}_2} r & = -\frac{|{\bf z}_1|}{|{\bf z}_2| \bigl( |{\bf z}_1|^2 + |{\bf z}_2|^2 \bigr)} \begin{pmatrix} z_4 \\ z_5 \\ z_6 \end{pmatrix} \\
 \partial_{{\bf z}_1} \theta_1 & = -\frac{1}{|{\bf z}_1|^2 \sqrt{z_1^2 + z_2^2}} \begin{pmatrix} -z_1z_3 \\ -z_2z_3 \\ z_1^2+z_2^2 \end{pmatrix} &
 \partial_{{\bf z}_1} \phi_1 & = \frac{1}{z_1^2 + z_2^2} \begin{pmatrix} -z_2 \\ z_1 \\ 0 \end{pmatrix} \\
 \partial_{{\bf z}_1} t & = \frac{1}{\sqrt{|{\bf z}_1|^2 + |{\bf z}_2|^2}} \begin{pmatrix} z_1 \\ z_2 \\ z_3 \end{pmatrix} &
 \partial_{{\bf z}_2} t & = \frac{1}{\sqrt{|{\bf z}_1|^2 + |{\bf z}_2|^2}} \begin{pmatrix} z_4 \\ z_5 \\ z_6 \end{pmatrix} \\
 \partial_{{\bf z}_2} \theta_2 & = -\frac{1}{|{\bf z}_2|^2 \sqrt{z_4^2 + z_5^2}} \begin{pmatrix} -z_4z_6 \\ -z_5z_6 \\ z_4^2+z_5^2 \end{pmatrix} &
 \partial_{{\bf z}_2} \phi_2 & = \frac{1}{z_4^2 + z_5^2} \begin{pmatrix} -z_5 \\ z_4 \\ 0 \end{pmatrix} 
\end{align*}
Let us recall the following well known relation between kernel functions and symbols of pseudo-differential operators, cf.~\cite{Stein}.
\begin{proposition}
\label{SteinPr}
Given a pseudo-differential operator $T$ with symbol $\sigma \in S^{m}(\mathbb{R}^3 \times \mathbb{R}^3)$ for $m<0$. The kernel function
$k ({\bf x}, {\bf z}) \in C^\infty(\mathbb{R}^3 \times \mathbb{R}^3 \setminus \{0\})$
of the corresponding integral operator satisfies the estimate
\begin{equation}
 |\partial_{{\bf x}}^\beta \partial_{{\bf z}}^\alpha k({\bf x}, {\bf z})|
 \leq C_{\alpha,\beta,N} |{\bf z}|^{-3-m-|\alpha|-N} ,
\label{bound}
\end{equation}
for all multi indices $\alpha, \beta$ and $N \geq 0$ such that $3+m+|\alpha|+N > 0$.

Vice versa, there exists a pseudo-differential operator $T$ with symbol $\sigma \in S^{m}(\mathbb{R}^3 \times \mathbb{R}^3)$ for each
kernel function $k ({\bf x}, {\bf z}) \in C^\infty(\mathbb{R}^n \times \mathbb{R}^n/ \{0\})$,
which satisfy the estimate (\ref{bound}) for $m < 0$, such that
\begin{equation}
 T f({\bf x}) = \int k ({\bf x}, {\bf x} - {\bf y}) \, f({\bf y}) \, d{\bf z} ,
 \ \ \mbox{for all} \ f \in {\cal S} .
\label{Tfkf}
\end{equation}
\end{proposition}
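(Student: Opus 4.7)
The proposition is the classical kernel/symbol correspondence for pseudo-differential operators of negative order, stated essentially as in Stein's monograph \cite{Stein}; my plan is to reproduce its standard proof, built on a dyadic Littlewood--Paley decomposition of frequency space combined with integration by parts.

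For the forward direction, I would start from the defining oscillatory integral
\[
 k({\bf x}, {\bf z}) = (2\pi)^{-3} \int e^{i{\bf z}\cdot\boldeta} \sigma({\bf x}, \boldeta) \, d\boldeta,
\]
and insert a smooth dyadic partition $1 = \chi_0(\boldeta) + \sum_{j\geq 1}\chi(2^{-j}\boldeta)$ localising to $|\boldeta|\sim 2^j$. Differentiation under the integral yields
\[
 \partial_{\bf x}^\beta \partial_{\bf z}^\alpha k({\bf x},{\bf z}) = (2\pi)^{-3} \sum_j \int e^{i{\bf z}\cdot\boldeta} (i\boldeta)^\alpha \chi_j(\boldeta)\, \partial_{\bf x}^\beta\sigma({\bf x},\boldeta)\, d\boldeta,
\]
and on each high-frequency block I would integrate by parts $N'$ times using the identity $(-\Delta_\boldeta)^{N'} e^{i{\bf z}\cdot\boldeta} = |{\bf z}|^{2N'} e^{i{\bf z}\cdot\boldeta}$. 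A Leibniz expansion together with the symbol estimate bounds the transformed integrand by $(1+|\boldeta|)^{m+|\alpha|-2N'}$ on $|\boldeta|\sim 2^j$, so each block contributes at most $|{\bf z}|^{-2N'}\, 2^{j(3+m+|\alpha|-2N')}$. I would then choose $N' = 0$ on the range $2^j \lesssim |{\bf z}|^{-1}$ and $N'$ larger than $(3+m+|\alpha|+N)/2$ on $2^j \gtrsim |{\bf z}|^{-1}$, making both geometric tails summable; the combined bound matches (\ref{bound}) exactly. Smoothness of $k$ on $\mathbb{R}^3\times\mathbb{R}^3\setminus\{0\}$ is automatic because the argument applies for any multi-indices and any $N$.

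For the converse I would first split $k({\bf x},{\bf z}) = \chi(|{\bf z}|)k({\bf x},{\bf z}) + \bigl(1-\chi(|{\bf z}|)\bigr)k({\bf x},{\bf z})$ with $\chi$ a cut-off supported near the origin. The outer piece is smooth and its partial Fourier transform in ${\bf z}$ lies in any $S^{-M}$ (up to the $S^{-\infty}$ ambiguity inherent in the construction of $\sigma$ from a Schwartz kernel). For the inner piece I would define $\sigma({\bf x},\boldeta) := \int e^{-i{\bf z}\cdot\boldeta}\chi(|{\bf z}|)k({\bf x},{\bf z})\, d{\bf z}$, which converges absolutely by the hypothesised bound with $N$ taken so that $-3-m-N > -3$, and I would verify the symbol estimates via the dual identity
\[
 \boldeta^\gamma \partial_{\bf x}^\beta \partial_\boldeta^\alpha \sigma({\bf x},\boldeta) = \int e^{-i{\bf z}\cdot\boldeta}\, (-i\partial_{\bf z})^\gamma\bigl[(-i{\bf z})^\alpha \chi(|{\bf z}|)\partial_{\bf x}^\beta k({\bf x},{\bf z})\bigr]\, d{\bf z}.
\]
A dyadic decomposition in $|{\bf z}|$ and the derivative bound $|\partial_{\bf x}^\beta \partial_{\bf z}^{\alpha+\gamma} k|\lesssim |{\bf z}|^{-3-m-|\alpha+\gamma|-N}$ give, after optimising the number of $\partial_{\bf z}$ differentiations against the dyadic scale, the symbol estimate $|\partial_{\bf x}^\beta \partial_\boldeta^\alpha \sigma|\lesssim (1+|\boldeta|)^{m-|\alpha|}$ for every $\alpha,\beta$, i.e. $\sigma \in S^m$.

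The main technical obstacle in both directions is the joint optimisation of the dyadic index $j$ and the number $N'$ of integrations by parts, needed to balance low- and high-frequency (respectively near- and far-field) contributions. Using smooth Littlewood--Paley cut-offs $\chi_j$ rather than sharp characteristic functions eliminates boundary terms during integration by parts and reduces the bookkeeping to routine geometric summation; once this scheme is in place, every step of the argument reduces to elementary symbol-seminorm manipulations.
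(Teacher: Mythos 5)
The paper gives no proof of this proposition at all---it is quoted as a known result from Stein's monograph---and your reconstruction is precisely the standard dyadic Littlewood--Paley argument from that source, correctly balancing the number of integrations by parts against the frequency (respectively spatial) scale in each direction. Your sketch is correct in outline and consistent with what the paper implicitly relies on.
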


\begin{remark}
In order to determine the asymptotic type of iterative solutions and Goldstone diagrams, we have 
to show that kernel functions locally correspond to functions in an edge degenerate Sobolev space with asymptotics 
${\cal W}_{\loc}^\infty(Y, {\cal K}_{P}^{\infty, \gamma}(X^\wedge))$ and globally to functions in a corner degenerate Sobolev space 
${\cal S} \bigl( \mathbb{R}_+,H^{\infty,\gamma_1}(B) \bigr)$ which characterizes the exit behaviour.
The latter property is not part of the standard pseudo-differential calculus and has to be imposed as an additional property, i.e., we consider in the following
kernel functions satisfying the stronger estimate
\begin{equation}
 |{\bf x}^{\gamma}\partial_{{\bf x}}^\beta \partial_{{\bf z}}^\alpha k({\bf x}, {\bf z})|
 \leq C_{\alpha,\beta,N} |{\bf z}|^{-3-m-|\alpha|-N} ,
\label{xbound}
\end{equation}
for all multi indices $\alpha, \beta,\gamma$ and $N \geq 0$ such that $3+m+|\alpha|+N > 0$. Pseudo-differential operators
whose kernels satisfy (\ref{xbound}) actually form a subalgebra within the pseudo-differential algebra.
\end{remark}

In the following, we apply Proposition \ref{SteinPr} to wedge Sobolev spaces with asymptotics.

\begin{proposition}
\label{asympkernel}
Given a bounded kernel function $k ({\bf x}, {\bf z}) \in C^\infty(\mathbb{R}^n \times \mathbb{R}^n/ \{0\})$.
Let us assume its representative in hyperspherical coordinates $\tilde{k}$ belongs to 
${\cal S} \bigl(\mathbb{R}_+,\omega {\cal W}_{\comp}^\infty \bigl( Y, {\cal K}_{P}^{\infty, \gamma}(X^\wedge) \bigr) \bigr)$ and is of generic asymptotic type.
There exists a pseudo-differential operator $T$ with symbol $\sigma \in S^{-3}(\mathbb{R}^3 \times \mathbb{R}^3)$ for the kernel function $k$ such that
(\ref{Tfkf}) is satisfied.
\end{proposition}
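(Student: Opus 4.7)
The plan is to reduce the claim to the converse direction of Proposition \ref{SteinPr} by establishing the Stein bound
\[
 |\partial_{\bf x}^\beta \partial_{\bf z}^\alpha k({\bf x},{\bf z})| \lesssim |{\bf z}|^{-|\alpha|-N}
\]
for every $\alpha,\beta$ and $N\geq 0$ with $|\alpha|+N>0$, which is exactly the input needed to extract a symbol in $S^{-3}(\mathbb{R}^3\times\mathbb{R}^3)$. Actually, because we work within the exit-behaviour subalgebra, I would aim directly for the stronger version (\ref{xbound}); this is automatic from the Schwartz-type structure ${\cal S}\bigl(\mathbb{R}_+,H^{\infty,\gamma_1}(B)\bigr)$, since decay in $t$ transfers to decay in ${\bf x}$ after weighting by $t>c>0$ on each chart of the hyperspherical atlas.

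First, I would pull Cartesian derivatives back to hyperspherical ones via the matrix $\mathbb{T}$. A direct inspection of the entries listed after $\mathbb{T}$, together with $|{\bf z}_1|=t\sin r$ and $|{\bf z}_2|=t\cos r$, shows that only $\partial_{\theta_1}$ and $\partial_{\phi_1}$ enter with coefficients of order $1/r$ (equivalently $1/|{\bf z}|$, since $t$ is bounded below on the chart), while the coefficients multiplying $r\partial_r, \partial_t, \partial_{\theta_2}, \partial_{\phi_2}$ stay uniformly bounded. Iterating, each Cartesian derivative produces one factor $1/r$ together with a first-order hyperspherical derivative, so
\[
 \partial_{\bf x}^\beta \partial_{\bf z}^\alpha \tilde k \;=\; \sum_{j\le|\alpha|+|\beta|} r^{-j}\, a_j({\bf x},{\bf z})\, D_{\mathrm{hyp}}^{|\alpha|+|\beta|-j}\tilde k,
\]
with bounded coefficients $a_j$, where $D_{\mathrm{hyp}}$ ranges over products of $r\partial_r$, $\partial_{\theta_i}$, $\partial_{\phi_i}$, $\partial_t$.

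Next, I plug in the asymptotic expansion (\ref{asymp1}). Under the generic Taylor-type weight $-p_j=0,1,2,\dots$, each summand is $r^j\log^k r\cdot c_{jk}(\theta_1,\phi_1)\,v_{jk}(t,\theta_2,\phi_2)$ with $c_{jk}\in C^\infty(X)$ (the partition of unity $\varphi$ is used to move away from the polar singularities of the spherical chart, keeping all $\theta_1,\phi_1$-derivatives bounded) and $v_{jk}\in{\cal S}(\mathbb{R}_+,H^\infty_{\comp}({\cal Y}))$. Since $r\partial_r$ stabilizes the family $\{r^j\log^k r\}$ and the other hyperspherical derivatives hit only smooth factors, $D_{\mathrm{hyp}}^\ell$ preserves the asymptotic template. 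Choosing $N_\Theta$ so large that $h_\Theta$ vanishes to order $r^M$ with $M\ge |\alpha|+|\beta|+N$, the remainder contributes at most $|{\bf z}|^N$ after multiplication by $r^{-j}$, while the explicit polyhomogeneous terms give $r^{-j+\ell}$ with $\ell\ge j-|\alpha|-|\beta|+0 = -|\alpha|-|\beta|$ or better. Using $r\asymp|{\bf z}|$ and the Schwartz decay in $t$ on the complementary piece $(1-\omega(r))\tilde\tau$, these combine into the desired estimate. The absence of logarithmic obstructions follows from the Taylor structure $m_j\in\{0,1\}$, which only contributes at worst $\log^k r$ factors that are absorbed by any positive power of $r$ picked up from a deeper truncation of (\ref{asymp1}).

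The main obstacle in carrying this out rigorously is the bookkeeping in the previous paragraph: one must verify that the singular coefficients $1/r$ from $\mathbb{T}$ accompany \emph{only} derivatives in $\theta_1,\phi_1$, so that they land on the smooth factors $c_{jk}$ and not on the $r^j\log^k r$ part, and also that the leading Taylor exponent $j=0$ of (\ref{asymp1}) really suffices to give $m=-3$ (and not something weaker). This is where the generic weight range $\tfrac12<\gamma<\tfrac32$ with $\tfrac32-\gamma+\vartheta<\Re p_j<\tfrac32-\gamma$ is crucial: it forbids asymptotic exponents $-p_j<0$ that would otherwise produce a genuine singularity in $k$ at ${\bf z}=0$ and push the symbol class below $S^{-3}$. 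With these pieces in place, the converse direction of Proposition \ref{SteinPr} yields an operator $T$ with symbol $\sigma\in S^{-3}(\mathbb{R}^3\times\mathbb{R}^3)$ whose Schwartz kernel is $k({\bf x},{\bf x}-{\bf y})$, completing the proof.
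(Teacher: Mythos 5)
Your overall strategy coincides with the paper's: pull the Cartesian derivatives back through $\mathbb{T}$ at the cost of powers of $1/r\asymp 1/|{\bf z}|$, and then verify that the hyperspherical derivatives $\partial^{\alpha}_{X^\wedge}\partial^{\beta}_{Y}\tilde k$ remain bounded, which is exactly the input the converse direction of Proposition \ref{SteinPr} needs for $m=-3$. (Two small remarks on your reduction: the paper only needs the $N=0$ instance of the Stein bound, since $|{\bf z}|^{-|\alpha|}\lesssim|{\bf z}|^{-|\alpha|-N}$ near the edge; and your reading of $\mathbb{T}$ is slightly off --- the coefficient of $-r\partial_r$ in the $\partial_{\bf z}$ rows is also of order $1/r$, not bounded --- but this is harmless because $r^{-1}(-r\partial_r)$ still lowers the order of $r^{j}\log^k r$ by exactly one, so the bookkeeping closes.)

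The genuine gap is your treatment of the remainder $h_\Theta$. You write ``choosing $N_\Theta$ so large that $h_\Theta$ vanishes to order $r^M$,'' but membership of $h_\Theta$ in the flattened space ${\cal S}\bigl(\mathbb{R}_+,\omega{\cal W}^{\infty}_{\comp}(Y,{\cal K}^{\infty,\gamma}_{\Theta}(X^\wedge))\bigr)$ gives \emph{weighted $L^2$-type} control $\int r^{2-2\tilde\gamma}|(-r\partial_r)^k\partial^{\alpha}_{X^\wedge}\partial^{\beta}_Y h_\Theta|^2<\infty$, not pointwise vanishing of $h_\Theta$ and its derivatives; asserting the latter is the step that would fail as written. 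The missing ingredient is a Sobolev embedding: one must show $\partial^{\alpha}_{X^\wedge}\partial^{\beta}_{Y}\tilde k_\Theta\in L^\infty(X^\wedge\times Y)$ by verifying membership in $H^4(X^\wedge\times Y)$, i.e.\ square-integrability of up to four additional $r$-derivatives, which requires pushing the weight interval far enough to the left (the paper takes $\vartheta<-\tfrac{17}{2}$ w.l.o.g., so that $\tilde\gamma>9$ and the measure $r^{2-2\tilde\gamma}\,dr$ can be traded against the unweighted one). This embedding argument is the only nontrivial analytic content of the paper's proof, so it needs to be supplied explicitly rather than replaced by a pointwise-flatness claim. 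Your side remarks (that the exit estimate (\ref{xbound}) follows from the Schwartz structure in $t$, and the role of the weight window $\tfrac12<\gamma<\tfrac32$) are consistent with the paper but are not needed for this proposition, which only asserts $\sigma\in S^{-3}$.
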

\begin{proof}
Expressing derivatives with respect to hyperspherical coordinates, we get the estimates
\begin{eqnarray*}
 \left| \partial_{{\bf x}}^\beta \partial_{{\bf z}}^\alpha k({\bf x}, {\bf z}) \right| & \lesssim & 
 \sum_{|\alpha_1|+|\alpha_2|=|\alpha|} \sum_{|\beta_1|+|\beta_2|=|\beta|} |{\bf z}|^{-|\alpha_1|}
 \left| \partial^{\alpha_1+\beta_1}_{X^\wedge} \partial^{\alpha_2+\beta_2}_{Y} \tilde{k}(r,\theta_1,\phi_1,t,\theta_2,\phi_2) \right| \\
 & \lesssim & |{\bf z}|^{-|\alpha|} \sum_{|\alpha_1|+|\alpha_2|=|\alpha|} \sum_{|\beta_1|+|\beta_2|=|\beta|} 
 \left| \partial^{\alpha_1+\beta_1}_{X^\wedge} \partial^{\alpha_2+\beta_2}_{Y} \tilde{k}(r,\theta_1,\phi_1,t,\theta_2,\phi_2) \right| ,
\end{eqnarray*}
where $({\bf x}, {\bf z})$ and $(r,\theta_1,\phi_1,t,\theta_2,\phi_2)$ are related via (\ref{xzhyper}).

According to Proposition \ref{SteinPr}, the integral operator with kernel function $k$ corresponds to a pseudo-differential operator $T$ with symbol 
$\sigma \in S^{-3}(\mathbb{R}^3 \times \mathbb{R}^3)$
provided $\partial^{\alpha}_{X^\wedge} \partial^{\beta}_{Y} \tilde{k}$ for any $\alpha$ and $\beta$ belongs to $L^\infty(X^\wedge \times Y)$.
Let us consider for $\tilde{k}$ the asymptotic expansion (\ref{asymp1}). It is obvious from the definition that every term in the asymptotic part has bounded partial derivatives
$\partial^{\alpha}_{X^\wedge} \partial^{\beta}_{Y}$ for any $\alpha$ and $\beta$. Therefore it is sufficient to consider the remainder $\tilde{k}_{\Theta}$ which 
belongs to ${\cal S} \bigl(\mathbb{R}_+,\omega {\cal W}_{\comp}^\infty \bigl( Y, {\cal K}_{\Theta}^{\infty, \gamma}(X^\wedge)\bigr)\bigr)$ with $\Theta =(\vartheta,0]$. 
It follows from a standard Sobolev embedding theorem that $\partial^{\alpha}_{X^\wedge} \partial^{\beta}_{Y} \tilde{k} \in L^\infty(X^\wedge \times Y)$
if it belongs to the Sobolev space $H^4(X^\wedge \times Y)$. Therefore it is sufficient to show that $\partial^k_r \partial^{\alpha}_{X^\wedge} \partial^{\beta}_{Y} \tilde{k}$,
for $k=0,1,\ldots,4$, is square integrable for any multi indices $\alpha$ and $\beta$. Let us assume w.l.o.g.~$\vartheta < - \frac{17}{2}$, it follows
\begin{eqnarray*}
 \sum_{k=0}^4 \int_{X^\wedge} \int_{Y} \left| \partial^k_r \partial^{\alpha}_{X^\wedge} \partial^{\beta}_{Y} \tilde{k} \right|^2 \, dr dx dy 
 & \lesssim & \sum_{k=0}^4 \int_{X^\wedge} \int_{Y} r^{2-2\tilde{\gamma}} \left| (-r \partial_r)^k \partial^{\alpha}_{X^\wedge} \partial^{\beta}_{Y} \tilde{k} \right|^2 \, dr dx dy \\
 & < & \infty 
\end{eqnarray*}
because according to our assumptions we can take $\tilde{\gamma} >9$ and $\tilde{k}_{\Theta}$ belongs to ${\cal S} \bigl(\mathbb{R}_+,\omega {\cal W}_{\comp}^\infty \bigl( Y, {\cal K}^{\infty, \tilde{\gamma}}(X^\wedge)\bigr) \bigr)$.
\end{proof}

\begin{remark}
\label{c00const}
Let us further assume $c_{00}=$ const.~in the leading order term of the asymptotic expansion (\ref{asymp1}). Hence its corresponding symbol belongs to $S^{-\infty}(\mathbb{R}^3 \times \mathbb{R}^3)$
and $\sigma$ belongs to $S^{-4}(\mathbb{R}^3 \times \mathbb{R}^3)$. Actually if we restrict the asymptotic expansion (\ref{asymp1}) to $j \geq j_0 >0$ with $j_0$ odd, one gets
\begin{eqnarray*}
 \lefteqn{\tilde{k} \sim \omega(r) \sum_{j \geq j_0}^{N_\Theta} \sum_{k=0}^{m_j} r^{-p_j} \log^k r \, c_{jk}(\theta_1,\phi_1) v_{jk}(t,\theta_2,\phi_2) + \tilde{k}_{\Theta}(r,\theta_1,\phi_1,t,\theta_2,\phi_2) =} \\
 & & r^{j_0-1} \, \left[ \omega(r) \sum_{j \geq j_0}^{N_\Theta} \sum_{k=0}^{m_j} r^{-p_j-j_0+1} \log^k r \, c_{jk}(\theta_1,\phi_1) v_{jk}(t,\theta_2,\phi_2) + r^{-j_0+1} \tilde{k}_{\Theta}(r,\theta_1,\phi_1,t,\theta_2,\phi_2) \right] ,
\end{eqnarray*}
where it follows, using the same arguments as before, that the restricted kernel function corresponds to a symbol in $S^{-3-j_0}(\mathbb{R}^3 \times \mathbb{R}^3)$.
\end{remark}

With respect to our particular application, Proposition \ref{asympkernel} refers to kernel functions which actually correspond to the part of a pair-amplitude located in a neigbourhood of the edge, cf.~(\ref{tedge}).
Linear and nonlinear interaction terms, cf.~(\ref{BG}), (\ref{eff}), (\ref{oprod}) and (\ref{oprod2}), however, depend on the whole pair-amplitude $\tau({\bf x}_1, {\bf x}_2)$, ${\bf x}_1, {\bf x}_2 \in \mathbb{R}^3$. 
Therefore, let us also consider the complementary part (\ref{tcompl})
which according to our assumptions belongs to a Schwartz-type corner space.

\begin{proposition}
\label{Schwartz}
The corresponding symbol of the kernel function (\ref{tcompl}) which represents the complementary part of a pair-amplitude belongs to $S^{-\infty}(\mathbb{R}^3 \times \mathbb{R}^3)$.
\end{proposition}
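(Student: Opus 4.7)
The plan is to directly Fourier-transform the kernel (\ref{tcompl}) in the ${\bf z}$ variable and verify by hand that the resulting symbol satisfies the defining estimates of $S^{-\infty}$. The crucial observation is that, unlike the edge part (\ref{tedge}), the complementary piece $(1-\omega(r))\tilde{\tau}$ is supported where $r$ is bounded away from $0$, and since $r=0$ corresponds exactly to coalescence ${\bf z}_1=0$ (equivalently ${\bf z}=0$ in the Cartesian coordinates of (\ref{xzcoord})), this kernel extends smoothly across the diagonal and is globally in $C^\infty(\mathbb{R}^3\times\mathbb{R}^3)$; no singularity survives the passage from hyperspherical back to Cartesian coordinates. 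After invoking the further reduction, indicated just below (\ref{tcompl}), to the inner part $(1-\omega_1)H^{\infty}_{\loc}(S_0(B))$, we are genuinely dealing with a kernel smooth on all of $\mathbb{R}^3\times\mathbb{R}^3$.

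Next I would exploit the Schwartz-type property ${\cal S}(\mathbb{R}_+,H^{\infty,\gamma_1}(B))$ in the corner variable $t$. Because $t^2=|{\bf x}_1|^2+|{\bf x}_2|^2=|{\bf x}|^2+|{\bf x}-{\bf z}|^2$, rapid decay in $t$ combined with the high-order edge Sobolev regularity of $H^{\infty,\gamma_1}(B)$ implies, via the standard Sobolev embedding used already in the proof of Proposition \ref{asympkernel}, that $k({\bf x},{\bf z})$ together with all its Cartesian derivatives decays faster than any polynomial in $(|{\bf x}|,|{\bf z}|)$. This is the translation step between the hyperspherical/corner framework and the purely Cartesian bounds needed by the standard pseudo-differential calculus.

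Once these Cartesian estimates are in hand, the symbol bound is a routine Fourier computation: integrating by parts in
\[
\boldeta^{\gamma}\,\partial_{\bf x}^{\alpha}\partial_{\boldeta}^{\beta}\sigma({\bf x},\boldeta)=\int e^{-i{\bf z}\boldeta}\,\partial_{\bf z}^{\gamma}\bigl[(-i{\bf z})^{\beta}\partial_{\bf x}^{\alpha}k({\bf x},{\bf z})\bigr]\,d{\bf z},
\]
the rapid decay of $k$ and all its derivatives in ${\bf z}$ makes the integrand absolutely integrable, uniformly in ${\bf x}$ and $\boldeta$. Choosing $|\gamma|$ arbitrarily large yields
\[
\bigl|\partial_{\bf x}^{\alpha}\partial_{\boldeta}^{\beta}\sigma({\bf x},\boldeta)\bigr|\lesssim (1+|\boldeta|)^{-M}\qquad\text{for every }M\in\mathbb{N},
\]
which is precisely the defining estimate of $\sigma\in S^{-\infty}(\mathbb{R}^3\times\mathbb{R}^3)$.

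The main obstacle will be the second step, namely converting the mixed hyperspherical/corner Schwartz-type estimates into pointwise Cartesian bounds for $k$ and all its derivatives. The Jacobian of the change of variables from $(r,\theta_1,\phi_1,t,\theta_2,\phi_2)$ to $({\bf x},{\bf z})$ is well-behaved away from $r=0$, so in the complementary region this conversion is largely bookkeeping; nevertheless care is needed to track that the polynomial weights at infinity transfer correctly, so that decay in $t$ yields joint decay in $|{\bf x}|$ and $|{\bf z}|$. After that, the assertion is essentially the well-known equivalence between smooth, rapidly decaying kernels and infinitely smoothing pseudo-differential operators.
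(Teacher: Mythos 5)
Your proposal is correct and follows essentially the same route as the paper: both exploit that on the support of $1-\omega(r)$ one has $|{\bf z}|\sim t$ (bounded below within a chart of the hyperspherical atlas), so the Schwartz-type decay in $t$ combined with the Sobolev embedding $H^m\rightarrow L^\infty$ yields pointwise bounds $\lesssim |{\bf z}|^{-N}$, for every $N$, on the kernel and all its derivatives. The only cosmetic difference is the final step, where you integrate by parts directly in the oscillatory integral for the symbol, while the paper concludes via the kernel characterization of Proposition \ref{SteinPr}.
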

\begin{proof}
From our previous considerations, we can derive the following estimate
\begin{eqnarray*}
 \left| \partial^{\alpha}_{X^\wedge} \partial^{\beta}_{Y} (1- \omega) \tilde{\tau} \right|
 & \lesssim & |{\bf z}|^{-N} \left| t^N \partial_t^{\beta_1} \partial^{\alpha}_{X^\wedge} \partial^{\beta_2}_{\theta_2,\phi_2}(1- \omega) \tilde{\tau} \right| \\
 & \lesssim & |{\bf z}|^{-N} \sup_{t > c} \left\| t^N \partial_t^{\beta_1} \bigl( \partial^{\alpha}_{X^\wedge} \partial^{\beta_2}_{\theta_2,\phi_2}(1- \omega) \tilde{\tau} \bigr)
 \right\|_{H^{m}_{\loc} \bigl( S_0(B) \bigr)} \quad (m > \tfrac{5}{2}) \\
 & \lesssim & |{\bf z}|^{-N} \sup_{t > c} \left\| t^N \partial_t^{\beta_1} \bigl( \partial^{\alpha}_{X^\wedge} \partial^{\beta_2}_{\theta_2,\phi_2}(1- \omega) \tilde{\tau} \bigr)
 \right\|_{H^{m,\gamma_1}(B)} \\
 & \lesssim & |{\bf z}|^{-N} ,
\end{eqnarray*}
where we have used $(1- \omega) \tilde{\tau} \in {\cal S} \bigl( \mathbb{R}_+,H^{\infty,\gamma_1}(B) \bigr)$ and the standard Sobolev embedding $H^m \rightarrow L^{\infty}$ for $m > \frac{5}{2}$.
The estimate is valid for all $N \in \mathbb{N}_0$, and therefore the corresponding symbol of the kernel function (\ref{tcompl}) belongs to $S^{-\infty}(\mathbb{R}^3 \times \mathbb{R}^3)$. 
\end{proof}

Likewise, we can decompose the effective interaction potential (\ref{Vkl2})
\[
 V^{(2)}({\bf x}_1,{\bf x}_2) = \omega(|{\bf x}_1 - {\bf x}_2|) V^{(2)}({\bf x}_1,{\bf x}_2)
 + \bigl( 1- \omega(|{\bf x}_1 - {\bf x}_2|) \bigr) V^{(2)}({\bf x}_1,{\bf x}_2) ,
\]
which contributes to linear and nonlinear terms (\ref{BG}), (\ref{oprod}) and (\ref{oprod2}). According to our assumption, the second part belongs to 
the Schwartz class ${\cal S} (\mathbb{R}^3 \times \mathbb{R}^3)$ and therefore represents a kernel function with 
corresponding symbol in $S^{-\infty}(\mathbb{R}^3 \times \mathbb{R}^3)$. This can be seen from the following simple argument. 
Let us first express the effective interaction potential in the canonical variables (\ref{xzcoord}), i.e.,
\[
 \tilde{V}^{(2)}({\bf x}, {\bf z})\equiv V^{(2)} ({\bf x}_1, {\bf x}_2) .
\]
and obtain the estimate
\begin{eqnarray*}
\lefteqn{\sup_{{\bf x}, {\bf z}} \biggl\{ \langle {\bf z} \rangle^N \left| \partial_{{\bf z}}^\alpha \partial_{{\bf x}}^\beta
 \bigl( 1- \tilde{\omega}(|{\bf z}|) \bigr) \tilde{V}_{kl}^{(2)}({\bf x}, {\bf z}) \right| \biggr\} \lesssim} \\ 
 & & \sup_{\substack{{\bf x}_1, {\bf x}_2 \\ \beta_1 +\beta_2 = \beta}} \biggl\{ \langle {\bf x}_1 , {\bf x}_2 \rangle^N
 \left| \partial_{{\bf x}_2}^{\alpha + \beta_1} \partial_{{\bf x}_1}^{\beta_2}
 \bigl( 1- \tilde{\omega}(|{\bf x}_1 - {\bf x}_2|) \bigr) V_{kl}^{(2)} ({\bf x}_1, {\bf x}_2) \right| \biggr\} < \infty ,
\end{eqnarray*}
for any value of the multi indices $\alpha, \beta$ and $N \in \mathbb{N}_0$,
which shows, cf.~Proposition \ref{SteinPr}, that the corresponding symbol of the effective interaction potential (\ref{Vkl2}) belongs to $S^{-\infty}(\mathbb{R}^3 \times \mathbb{R}^3)$.

\subsection{Asymptotic expansions and classical pseudo-differential operators}
In order to control the asymptotic behaviour of nonlinear terms, it is necessary to stick to classical pseudo-differential operators
with symbol classes $S^p_{cl}$, $p \in \mathbb{Z}$. 
It is convenient to rearrange the asymptotic expansion (\ref{asymp1}) in order to make the transition to classical symbols more transparent.
Introducing new coordinates $s_1=t\sin r$, $s_2=t \cos r$ and performing a Taylor expansion of the edge function
\[
  v_{jk}(t,\theta_2,\phi_2) = \sum_{i=0}^{N-1} \tfrac{1}{j!} s_1^j \left. \partial_{s_1}^j v_{jk} \bigl( \sqrt{s_1^2+s_2^2},\theta_2,\phi_2 \bigr) \right|_{s_1=0} + s_1^N g_N(s_1,s_2,\theta_2,\phi_2) ,
\]
with $g_N$ smooth in $s_1,s_2$, we arrive at an equivalent asymptotic expansion of the form
\begin{equation}
 \tilde{\omega}(s_1) \sum_{j=0}^{N_\Theta} \sum_{k=0}^{m_j} s_1^j \log^k s_1 \, c_{jk}(\theta_1,\phi_1) \tilde{v}_{jk}(s_2,\theta_2,\phi_2) + \tilde{h}_{\Theta}(s_1,s_2,\theta_1,\phi_1,\theta_2,\phi_2)
\label{asymp2}
\end{equation}

In the following, we have to represent the pair-amplitude with respect to ${\bf x},{\bf z}$-variables. 
According to our assumption, $\tilde{v}_{jk} \in {\cal S}\bigl(\mathbb{R}_+,H^\infty_{\comp}(Y)\bigr)$ depends only on ${\bf z}_2$. Let
\[
 \tilde{v}_{jk}(s_2,\theta_2,\phi_2) =: \hat{v}_{jk}({\bf z}_2) = \hat{v}_{jk} \bigl( \sqrt{2} {\bf x} - \tfrac{1}{\sqrt{2}} {\bf z} \bigr)
\]
and perform another Taylor expansion at ${\bf z}=0$, i.e.,
\[
 \hat{v}_{jk} ({\bf z}_2) = \sum_{|\alpha| < n} \frac{(-1)^{|\alpha|}}{\alpha ! 2^{|\alpha|/2}} {\bf z}^\alpha \bigl(
 \partial^\alpha_{{\bf z}_2} \hat{v}_{jk} \bigr) \bigl( \sqrt{2} {\bf x} \bigr) \, + \sum_{|\alpha| = n} {\bf z}^\alpha g_{\alpha} ({\bf x},{\bf z}) ,
\]
with  $g_{\alpha}$ smooth in ${\bf x}$ and ${\bf z}$. With this, we obtain another equivalent asymptotic expansion of the form
\begin{equation}
 \tilde{\omega}(s_1) \sum_{j=0}^{N_\Theta} \sum_{k=0}^{m_j} s_1^j \log^k s_1 \, c_{jk}(\theta_1,\phi_1) \check{v}_{jk}({\bf x}) + \check{h}_{\Theta}({\bf x},{\bf z})
\label{asymp3}
\end{equation}

\begin{proposition}
Given a term of the asymptotic expansion (\ref{asymp3}) of the pair-amplitude 
\[
 \tau_{j0}({\bf x},{\bf z}) := \tilde{\omega}(s_1) s_1^j \, c_{j0}(\theta_1,\phi_1) \check{v}_{j0}({\bf x})
\]
Let us assume the orthogonality constraint
\begin{equation}
 \int_{0}^{2\pi} \int_{0}^{\pi} c_{j0}(\theta_1,\phi_1) \, Y_{lm}(\theta_1,\phi_1) \sin \theta_1 \, d\theta_1 d\phi_1 = 0
 \quad \mbox{for} \quad l>j.
\label{orthcon}
\end{equation}
with respect to spherical harmonics $Y_{lm}$ defined on the base $X$ of the cone.
Now let the corresponding symbol be given by
\[
 \sigma_{-3-j,0}({\bf x},\boldeta) := \int e^{-i {\bf z} \boldeta} \tau_{j0}({\bf x},{\bf z}) \, d{\bf z} .
\]
The symbol $\sigma_{-3,0}$ belongs $S_{cl}^{-\infty}(\mathbb{R}^3 \times \mathbb{R}^3)$ and $\sigma_{-3-j,0}$, $j > 0$, 
belongs to the symbol class $S_{cl}^{-3-j}(\mathbb{R}^3 \times \mathbb{R}^3)$ of classical pseudo-differential operators.
Furthermore, let $\boldeta$ be represented in spherical coordinates, i.e., $\sigma_{-3-j,0}({\bf x},\boldeta) \equiv \tilde{\sigma}_{-3-j,0}({\bf x},\eta,\Theta_1,\Phi_1)$, these symbols satisfy the orthogonality constraints
\begin{equation}
 \int_{0}^{2\pi} \int_{0}^{\pi} \tilde{\sigma}_{-3-j,0}({\bf x},\eta,\Theta_1,\Phi_1) \, Y_{lm}(\Theta_1,\Phi_1) \sin \Theta_1 \, d\Theta_1 d\Phi_1 = 0
 \quad \mbox{for} \quad j-l \ \mbox{even or} \ l>j.
\label{orthtau1}
\end{equation}
\label{aj0class}
\end{proposition}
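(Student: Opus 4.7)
The plan is to reduce the statement to explicit Fourier integrals in the ${\bf z}$-variable and to handle them through a spherical-harmonic expansion of $c_{j0}$.  Since $s_{1}=|{\bf z}_{1}|=|{\bf z}|/\sqrt 2$ and the angles $(\theta_{1},\phi_{1})$ parameterise $\hat{\bf z}={\bf z}/|{\bf z}|$, I would first rewrite
\[
 \tau_{j0}({\bf x},{\bf z})=2^{-j/2}\,\check v_{j0}({\bf x})\,\tilde\omega(|{\bf z}|/\sqrt 2)\,|{\bf z}|^{j}\,c_{j0}(\hat{\bf z}),
\]
and expand $c_{j0}(\hat{\bf z})=\sum_{l=0}^{j}\sum_{m=-l}^{l}a_{lm}Y_{lm}(\hat{\bf z})$, where the hypothesis (\ref{orthcon}) prunes all $l>j$ modes from the outset. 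Since $\check v_{j0}\in C^{\infty}(\mathbb{R}^{3})$ enters only as a smooth spatial multiplier, it affects neither the symbol class nor the angular structure in $\boldeta$.

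The heart of the proof is a parity dichotomy on $j-l$. When $j-l$ is even, $|{\bf z}|^{j}Y_{lm}(\hat{\bf z})=(|{\bf z}|^{2})^{(j-l)/2}\bigl(|{\bf z}|^{l}Y_{lm}(\hat{\bf z})\bigr)$ is a homogeneous polynomial of degree $j$ in ${\bf z}$, because $|{\bf z}|^{l}Y_{lm}(\hat{\bf z})$ is a solid harmonic. Multiplied by the compactly supported cut-off $\tilde\omega$ it lies in $C_{c}^{\infty}(\mathbb{R}^{3})$, so its Fourier transform is Schwartz and therefore belongs to $S^{-\infty}(\mathbb{R}^{3}\times\mathbb{R}^{3})$. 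When $j-l$ is odd, $|{\bf z}|^{j}Y_{lm}(\hat{\bf z})$ is genuinely singular at the origin; the Funk--Hecke identity
\[
 \mathcal{F}\bigl[f(|{\bf z}|)\,Y_{lm}(\hat{\bf z})\bigr](\boldeta)=4\pi(-i)^{l}\,Y_{lm}(\hat\boldeta)\int_{0}^{\infty}f(r)\,j_{l}(r|\boldeta|)\,r^{2}\,dr
\]
applied to $f(r)=\tilde\omega(r/\sqrt 2)\,r^{j}$ yields a Fourier transform of the form $g_{j,l}(|\boldeta|)\,Y_{lm}(\hat\boldeta)$. Standard asymptotic analysis of this Hankel-type integral as $\eta\to\infty$ gives a classical polyhomogeneous expansion $g_{j,l}(\eta)\sim\sum_{k\ge 0}c_{j,l,k}\,\eta^{-3-j-2k}$ with nonzero leading coefficient $c_{j,l,0}$ (determined by the $\Gamma$-function ratio in the tempered Hecke formula) and no $\log\eta$ terms, the absence of logarithms being forced by the parity condition $j-l$ odd. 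Summing over the allowed $(l,m)$, one obtains $\sigma_{-3-j,0}\in S^{-3-j}_{\cl}(\mathbb{R}^{3}\times\mathbb{R}^{3})$ for $j>0$; for $j=0$ the only admissible mode is $l=0$, which is of even-parity type, so $\sigma_{-3,0}\in S^{-\infty}(\mathbb{R}^{3}\times\mathbb{R}^{3})$.

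The orthogonality constraints (\ref{orthtau1}) follow directly from the angular selectivity just described: by the Funk--Hecke identity, each $Y_{lm}$ mode of $c_{j0}$ couples exclusively to the same $Y_{lm}$ mode in $\hat\boldeta$, so the classical (non-smoothing) part of $\sigma_{-3-j,0}$ carries only angular modes with $l\le j$ and $j-l$ odd. Projecting against $Y_{lm}$ with $l>j$ is identically zero by the hypothesis on $c_{j0}$, while projecting against $Y_{lm}$ with $l\le j$ and $j-l$ even retains only the $S^{-\infty}$ contribution coming from the polynomial modes, which is the natural reading of the vanishing claim in (\ref{orthtau1}) modulo smoothing remainders. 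The main obstacle I foresee is the rigorous extraction of the polyhomogeneous expansion of the Hankel integral $\int_{0}^{\infty}\tilde\omega(r/\sqrt 2)\,r^{j+2}\,j_{l}(r\eta)\,dr$ for $j-l$ odd, together with a clean proof that logarithmic terms are absent and that the leading coefficient is nonzero; this combines the asymptotic expansion of $j_{l}$ at infinity with repeated integration by parts exploiting the rapid decay of the derivatives of $\tilde\omega$. Once that is in place, all symbol estimates in the spatial variable ${\bf x}$ follow trivially from differentiating the smooth factor $\check v_{j0}({\bf x})$.
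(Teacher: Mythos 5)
Your proposal is correct and follows essentially the same route as the paper: expand the phase factor (equivalently, invoke Funk--Hecke) to reduce the symbol to radial integrals $\int_0^\infty j_l(\eta s_1)\tilde\omega(s_1)s_1^{j+2}\,ds_1$ over the modes $l\le j$ permitted by (\ref{orthcon}), and then sort the contributions by the parity of $j-l$, with the even-parity modes landing in $S^{-\infty}$ and the odd-parity modes producing the homogeneous $\eta^{-3-j}$ term; this is exactly what the paper does via the Bessel recurrence (\ref{Bessrec}) and the endpoint expansion (\ref{sinasymp}), which in particular confirms your claimed nonvanishing leading coefficient as $(-1)^{(j-l+1)/2}(j-l+1)!\,a_{lj}$. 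Your shortcut for the even-parity case (solid harmonic times even power of $|{\bf z}|$ is a polynomial, so the cut-off term is $C_c^\infty$ with Schwartz transform) is a slightly more direct way of obtaining what the paper extracts from (\ref{sinasymp}), but it is the same mechanism, and your reading of (\ref{orthtau1}) as holding modulo $S^{-\infty}$ matches the paper's convention.
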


\begin{proof}
Let us introduce spherical coordinates for the covariable $\boldeta$, denoted by $(\eta,\Theta,\Phi)$, 
and perform an expansion of the phase factor in terms of spherical harmonics, i.e.,
\[
 e^{-i {\bf z} \boldeta} = 4 \pi \sum_{l=0}^\infty \sum_{m=-l}^l (-i)^l j_l(\eta s_1) \bar{Y}_{lm}(\theta_1,\phi_1) Y_{lm}(\Theta,\Phi) ,
\]
where $j_l$ denote spherical Bessel functions. Inserting the expansion into the oscillatory integral for the symbol and taking into account the 
orthogonality constraint (\ref{orthcon}), we get
\begin{eqnarray*}
 \sigma_{-3-j,0}({\bf x},\boldeta) & = & \int e^{-i {\bf z} \boldeta} \tau_{j0}({\bf x},{\bf z}) \, d{\bf z} \\
 & = & \sum_{l=0}^j (-i)^l C_{l,j}(\Theta,\Phi) \check{v}_{j0}({\bf x}) \int_0^\infty j_l(\eta s_1) \tilde{\omega}(s_1) s_1^{j+2} \, ds_1 , 
\end{eqnarray*}
with 
\[
 C_{l,j}(\Theta,\Phi) := 4 \pi \sum_{m=-l}^l Y_{lm}(\Theta,\Phi) \int_{0}^{2\pi} \int_{0}^{\pi} c_{j0}(\theta_1,\phi_1) \, \bar{Y}_{lm}(\theta_1,\phi_1) \sin \theta_1 \, d\theta_1 d\phi_1 .
\]

First let us consider the trivial case $j=0$, with $j_0(\eta s_1) = \frac{\sin(\eta s_1)}{\eta s_1}$, we get
\[
 \sigma_{-3,0}({\bf x},\boldeta) = C_{0,0} \check{v}_{00}({\bf x}) \frac{1}{\eta} \int_0^\infty \sin (\eta s_1) \tilde{\omega}(s_1) s_1 ds_1 .
\]
For $f \in C^\infty_0(\mathbb{R})$ one gets in the limit $\eta \rightarrow \infty$, the asymptotic expansion, cf.~\cite{Copson}. 
\begin{equation}
 \int_0^\infty \sin(\eta s_1) f(s_1) \, ds_1 \sim \sum_{\substack{n \geq 0 \\ \even}} \frac{i^n}{\eta^{n+1}} f^{(n)}(0) .
\label{sinasymp}
\end{equation}
Obviously, one can easily extend $\tilde{\omega}(s_1) s_1$ to a function $f \in C^\infty_0(\mathbb{R})$ such that 
$f^{(n)}(0)=0$ for $n=0,2,\ldots$ and therefore $\sigma_{-3}$ belongs to $S^{-\infty}_{cl}$.

In order to treat the cases $j>0$, let us apply to spherical Bessel functions, with $l>0$, the recurrence relation, cf.~\cite{AS},
\begin{equation}
 j_l(\eta s_1) = - \frac{1}{\eta} \frac{d}{ds_1} j_{l-1}(\eta s_1) + \frac{l-1}{\eta s_1} j_{l-1}(\eta s_1) ,
\label{Bessrec}
\end{equation}
which yields
\begin{eqnarray*}
 \int_0^\infty j_l(\eta s_1) \tilde{\omega}(s_1) s_1^{j+2} \, ds_1
 & = & \frac{1}{\eta} \int_0^\infty \biggl( - \frac{d}{ds_1} j_{l-1}(\eta s_1) + \frac{l-1}{s_1} j_{l-1}(\eta s_1) \biggr)
 \tilde{\omega}(s_1) s_1^{j+2} \, ds_1 \\ 
 & \sim & \frac{1}{\eta} \int_0^\infty \tilde{\omega}(s_1) j_{l-1}(\eta s_1) 
 \bigl( l+j+1 \bigr) s_1^{j+1} \, ds_1 ,
\end{eqnarray*}
where we bear in mind that terms depending on derivatives of the cut-off function $\tilde{\omega}$ only contribute to $S^{-\infty}_{cl}$.
Here and in the following $\sim$ denotes equality modulo terms which belong to $S^{-\infty}_{cl}$.
After successive application of (\ref{Bessrec}), we get
\[
 \sigma_{-3-j,0}({\bf x},\boldeta) \sim \sum_{l=0}^j (-i)^l C_{l,j}(\Theta,\Phi) \check{v}_{j0}({\bf x}) \frac{a_{lj}}{\eta^{l+1}}
 \int_0^\infty \sin (\eta s_1) \tilde{\omega}(s_1) s_1^{j-l+1} ds_1 
\]
with
\begin{equation}
 a_{0j} = 1 \ \mbox{and} \  a_{lj} = \prod_{n=0}^{l-1} \bigl( j+l+1-2n \bigr) \ \mbox{for} \ l>0 .
\label{alj}
\end{equation}
For $j-l+1$ even, $\tilde{\omega}(s_1) s_1^{j-l+1}$ can be extended to a function $f \in C^\infty_0(\mathbb{R})$ such that
for even $n \neq j-l+1$ we get $f^{(n)}(0)=0$ and only $f^{(j-l+1)}(0) = (j-l+1)!$ does not vanish.
It then follows from (\ref{sinasymp}), that for $j>0$, we get 
\[
 \sigma_{-3-j,0}({\bf x},\boldeta) \sim \frac{1}{\eta^{j+3}} \check{v}_{j0}({\bf x}) \sum_{\substack{l=0 \\ \even}}^{j-1} (-1)^{\frac{j-l+1}{2}} (j-l+1)! \, a_{lj} \, C_{l,j}(\Theta,\Phi) 
 \quad \quad (j \ \mbox{odd}) ,
\]
and
\[
 \sigma_{-3-j,0}({\bf x},\boldeta) \sim \frac{1}{\eta^{j+3}} \check{v}_{j0}({\bf x}) \sum_{\substack{l=1 \\ \odd}}^{j-1} (-1)^{\frac{j-l+1}{2}} (j-l+1)! \, a_{lj} \, C_{l,j}(\Theta,\Phi) 
 \quad \quad (j \ \mbox{even}) ,
\]
respectively. The orthogonality constraints (\ref{orthtau1}) are an immediate consequence of these asymptotic relations.
\end{proof}

\begin{proposition}
Given a logarithmic term of the asymptotic expansion (\ref{asymp3})
\[
 \tau_{j1}({\bf x},{\bf z}) := \tilde{\omega}(s_1) s_1^j \log s_1 \, c_{j1}(\theta_1,\phi_1) \check{v}_{j1}({\bf x})
 \quad (j>0) ,
\]
where we assume the orthogonality constraint
\[
 \int_{0}^{2\pi} \int_{0}^{\pi} c_{j1}(\theta_1,\phi_1) \, Y_{lm}(\theta_1,\phi_1) \sin \theta_1 \, d\theta_1 d\phi_1 = 0
 \quad \mbox{for} \quad j-l \ \mbox{odd or} \ l>j.
\]
with respect to spherical harmonics $Y_{lm}$ defined on the base $X$ of the cone.
Now let the corresponding symbol be given by
\[
 \sigma_{-3-j,1}({\bf x},\boldeta) := \int e^{-i {\bf z} \boldeta} \tau_{j1}({\bf x},{\bf z}) \, d{\bf z} .
\]
The symbol $\sigma_{-3-j,1}$ belongs to the symbol class $S_{cl}^{-3-j}$ of classical pseudo-differential operators.
\label{aj1class}
\end{proposition}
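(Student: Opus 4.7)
The plan is to mirror the proof of Proposition \ref{aj0class} almost line by line, with the only genuinely new ingredient being the handling of the logarithmic factor $\log s_1$. I would first substitute the spherical-harmonic expansion of the plane wave $e^{-i {\bf z} \boldeta}$ into the oscillatory integral defining $\sigma_{-3-j,1}$. The orthogonality hypothesis on $c_{j1}$ then collapses the angular sum to a finite sum over $l = 0,1,\ldots,j$ with $j-l$ even, yielding
\[
 \sigma_{-3-j,1}({\bf x},\boldeta) \sim \sum_{\substack{l=0 \\ j-l\,\mathrm{even}}}^j (-i)^l C_{l,j}(\Theta,\Phi) \, \check{v}_{j1}({\bf x}) \int_0^\infty j_l(\eta s_1) \, \tilde\omega(s_1) \, s_1^{j+2} \log s_1 \, ds_1 ,
\]
where $\sim$ again denotes equality modulo contributions in $S_{cl}^{-\infty}$.

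Next I would iterate the spherical Bessel recurrence (\ref{Bessrec}) exactly as in the previous proposition. Integration by parts remains admissible modulo $S_{cl}^{-\infty}$, since any derivative landing on $\tilde\omega$ produces a term supported away from $s_1 = 0$ whose Fourier transform decays faster than any power of $\eta$. The only structural novelty is that differentiating $\tilde\omega(s_1) s_1^{j+2} \log s_1$ generates, besides the expected $\tilde\omega\, s_1^{j+1} \log s_1$ contribution, an extra non-logarithmic term $\tilde\omega\, s_1^{j+1}$ arising from $\partial_{s_1} \log s_1$. Tracking these side contributions through all $l$ iterations produces a cascade of non-logarithmic remainders, each of which has precisely the form treated by Proposition \ref{aj0class} and therefore contributes only lower-order classical terms in $S_{cl}^{-3-j}$. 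After $l$ recursions the principal logarithmic piece reduces to a multiple of
\[
 \frac{1}{\eta^{l+1}} \int_0^\infty \sin(\eta s_1) \, \tilde\omega(s_1) \, s_1^{j+1-l} \log s_1 \, ds_1 .
\]

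The crux is therefore a logarithmic analogue of the asymptotic expansion (\ref{sinasymp}). I would derive it by differentiating, with respect to the exponent $\alpha$, the Mellin-type identity
\[
 \int_0^\infty \sin(\eta s) \, \tilde\omega(s) \, s^\alpha \, ds \sim \Gamma(\alpha+1) \sin\!\bigl(\tfrac{\pi(\alpha+1)}{2}\bigr)\, \eta^{-\alpha-1} ,
\]
interpreted via analytic continuation (the smooth tail produced by $\tilde\omega$ belongs to $S_{cl}^{-\infty}$). The derivative at $\alpha = n$ produces an algebraic term of order $\eta^{-n-1}$ and a term $-\Gamma(n+1)\sin(\pi(n+1)/2)\,\log\eta\,\eta^{-n-1}$. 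The coefficient of $\log\eta$ is proportional to $\sin(\pi(n+1)/2)$, which vanishes precisely when $n$ is odd. Since the hypothesis forces $n = j+1-l$ to be odd (because $j-l$ is even), the $\log\eta$ contributions cancel identically, and one is left with a purely algebraic tail of order $\eta^{-j-3-k}$, $k \geq 0$, exhibiting $\sigma_{-3-j,1}$ as a classical symbol in $S_{cl}^{-3-j}$.

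The main obstacle, and the reason the hypothesis on $c_{j1}$ is strictly stronger than the one in Proposition \ref{aj0class}, is precisely this cancellation: the parity condition $j-l$ even imposed by the hypothesis has to match the zero set of $\sin(\pi(n+1)/2)$ coming from Mellin differentiation. Once this matching is verified, the remainder of the argument is essentially bookkeeping inherited from the $k=0$ case.
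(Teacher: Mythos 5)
Your proposal is correct, and its skeleton coincides with the paper's proof: both expand the plane wave in spherical harmonics, use the orthogonality hypothesis to restrict the angular sum to $l\leq j$ with $j-l$ even, iterate the Bessel recurrence (\ref{Bessrec}) modulo $S^{-\infty}_{cl}$, and must then dispose of the non-logarithmic side terms generated by $\partial_{s_1}\log s_1$. The only genuine divergence is in the final analytic step. The paper continues with explicit partial integrations: it reduces $\int_0^\infty\sin(\eta s_1)\,\tilde\omega(s_1)\,s_1^{j-l+1}\log s_1\,ds_1$ power by power down to $\int_0^\infty\cos(\eta s_1)\,\tilde\omega(s_1)\log s_1\,ds_1$ and then to $\tfrac{1}{\eta}\int_0^\infty\sin(\eta s_1)\,\tilde\omega(s_1)\,s_1^{-1}\,ds_1\sim\tfrac{\pi}{2\eta}$, so the absence of $\log\eta$ is read off from the endpoint of the integration-by-parts chain. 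You instead differentiate the Erd\'elyi--Mellin asymptotics $\int_0^\infty\sin(\eta s)\,\tilde\omega(s)\,s^\alpha\,ds\sim\Gamma(\alpha+1)\sin\bigl(\pi(\alpha+1)/2\bigr)\eta^{-\alpha-1}$ with respect to the exponent; the coefficient of $\log\eta$ is then visibly proportional to $\sin\bigl(\pi(n+1)/2\bigr)$, which vanishes because $n=j-l+1$ is odd. The two computations produce the same leading term of order $\eta^{-j-3}$ (matching the paper's $-\tfrac{i^j}{\eta^{j+3}}[\cdots]\tfrac{\pi}{2}$), and your version isolates more transparently why the parity hypothesis is exactly what kills the logarithm.

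One imprecision worth correcting: you assert that the non-logarithmic remainders of the recurrence cascade ``contribute only lower-order classical terms in $S_{cl}^{-3-j}$.'' Generically such terms would appear at the \emph{same} order $-3-j$, not lower; what actually saves the argument is again parity. Each such remainder, after the recurrence is run down to $j_0$, ends in an integral $\int_0^\infty\sin(\eta s_1)\,\tilde\omega(s_1)\,s_1^{m}\,ds_1$ with $m$ odd, which by (\ref{sinasymp}) lies in $S^{-\infty}_{cl}$ --- this is precisely how the paper disposes of them (``any term without logarithm contributes to $S^{-\infty}_{cl}$'' once $j-l$ is even). Since $S^{-\infty}_{cl}\subset S^{-3-j}_{cl}$ the conclusion is unaffected, but the mechanism is cancellation to infinite order under the orthogonality constraint, not mere order reduction.
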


\begin{proof}
We literally repeat the first steps of the proof of Prop.~\ref{aj0class}, The recurrence relation (\ref{Bessrec}) yields
\begin{eqnarray*}
 \int_0^\infty j_l(\eta s_1) \tilde{\omega}(s_1) s_1^{j+2} \log s_1 \, ds_1
 & = & \frac{1}{\eta} \int_0^\infty \biggl( - \frac{d}{ds_1} j_{l-1}(\eta s_1) + \frac{l-1}{s_1} j_{l-1}(\eta s_1) \biggr)
 \tilde{\omega}(s_1) s_1^{j+2} \log s_1 \, ds_1 \\ 
 & \sim & \frac{1}{\eta} \int_0^\infty \tilde{\omega}(s_1) j_{l-1}(\eta s_1) \biggl[ \bigl( l+j+1 \bigr) s_1^{j+1} \log s_1 + s_1^{j+1} \biggr] \, ds_1 \\
 & \sim & \frac{1}{\eta} \int_0^\infty \tilde{\omega}(s_1) j_{l-1}(\eta s_1) \bigl( l+j+1 \bigr) s_1^{j+1} \log s_1 \, ds_1 .
\end{eqnarray*}
In the last step we have used the fact that
according to our orthogonality constraint only terms with $j-l$ even have to be taken into account and therefore any term without
logarithm contributes to $S^{-\infty}_{cl}$. After successive application of (\ref{Bessrec}) and further partial integrations, we get
\begin{eqnarray*}
 \sigma_{-3-j,1}({\bf x},\boldeta) & \sim & \sum_{l=0}^j i^l C_{l,j1}(\Theta,\Phi) \check{v}_{j1}({\bf x}) \frac{a_{lj}}{\eta^{l+1}}
 \int_0^\infty \sin (\eta s_1) \tilde{\omega}(s_1) s_1^{j-l+1} \log s_1 \, ds_1 \\
 & \sim & \frac{i^j}{\eta^{j+2}} \biggl[ \sum_{l=0}^j C_{l,j1}(\Theta,\Phi) \check{v}_{j1}({\bf x}) a_{lj} (j-l+1)! \biggr] \int_0^\infty \cos (\eta s_1) \tilde{\omega}(s_1) \log s_1 \, ds_1 \\
 & \sim & -\frac{i^j}{\eta^{j+3}} \biggl[ \sum_{l=0}^j C_{l,j1}(\Theta,\Phi) \check{v}_{j1}({\bf x}) a_{lj} (j-l+1)! \biggr] \int_0^\infty \sin (\eta s_1) \tilde{\omega}(s_1) s_1^{-1} \, ds_1 \\
 & \sim & -\frac{i^j}{\eta^{j+3}} \biggl[ \sum_{l=0}^j C_{l,j1}(\Theta,\Phi) \check{v}_{j1}({\bf x}) a_{lj} (j-l+1)! \biggr] \frac{\pi}{2} ,
\end{eqnarray*}
with constant $a_{lj}$ given by (\ref{alj}).
\end{proof}

Once we have established the correspondence between pair-amplitudes and classical pseudo-differential operators, it remains to demonstrate
a similar correspondence for the effective interaction potential $V^{(2)}({\bf x}_1,{\bf x}_2)$ 
which is conveniently expressed in ${\bf x},{\bf z}$-coordinates, cf.~(\ref{xzcoord}), via $\tilde{V}^{(2)}({\bf x},{\bf z})$
Introducing canonical variables, let us perform 
a Taylor expansion of the orbital, i.e.,
\[
 \phi({\bf x}_2) = \phi({\bf x} - {\bf z})
 = \sum_{|\alpha| \leq n} \tfrac{(-1)^{|\alpha|}}{\alpha !} \mathbf{z}^\alpha \biggl( \partial^\alpha_{x_1} \phi \biggr) \bigl( \mathbf{x} \bigr)
 + R_n(\mathbf{x}, \mathbf{z}) ,
\]
which yields the asymptotic expansion 
\begin{eqnarray}
\nonumber
 \omega(|{\bf z}|) \tilde{V}^{(2)}({\bf x},{\bf z}) & = & \omega(|{\bf z}|) \frac{1}{|\mathbf{z}|} \sum_{|\alpha| \leq n} \tfrac{(-1)^{|\alpha|}}{\alpha !}
 \mathbf{z}^\alpha \phi'({\bf x}) \biggl( \partial^\alpha_{x_1} \phi \biggr) \bigl( \mathbf{x} \bigr) + \tilde{R}_n(\mathbf{x}, \mathbf{z}) \\ \label{Vasymp}
 & = & \omega(|{\bf z}|) \sum_{j=0}^n |\mathbf{z}|^{j-1} \sum_{\substack{l \leq j\\ j-l \, \even}} \sum_{m=-l}^l 
 Y_{lm}(\theta_1, \phi_1) v_{lm}(\mathbf{x}) + \tilde{R}_n(\mathbf{x}, \mathbf{z}) ,
\end{eqnarray}
for later purpose expressed in spherical coordinates,
where $v_{lm}$ belongs to the Schwartz class ${\cal S} (\mathbb{R}^3)$.

\begin{proposition}
The short range part of the effective interaction potential $\omega(|{\bf z}|) \tilde{V}^{(2)}({\bf x},{\bf z})$ represents the 
kernel function of a classical pseudo-differential operator with corresponding symbol in $S_{cl}^{-2}$. 
Given a term of the asymptotic expansion (\ref{Vasymp}) of the effective interaction potential
\[
 V_j(\mathbf{x}, \mathbf{z}) :=  \omega(s_1) s_1^{j-1} \sum_{\substack{l \leq j\\ j-l \, \even}} \sum_{m=-l}^l
 Y_{lm}(\theta_1, \phi_1) v_{lm}(\mathbf{x}) .
\]
The corresponding symbol is given by
\begin{eqnarray}
\nonumber
 \rho_{-2-j}({\bf x},\boldeta) & := & \int e^{-i {\bf z} \boldeta} V_j({\bf x},{\bf z}) \, d{\bf z} \\ \label{rhoVj}
 & \sim & \frac{1}{\eta^{j+2}} {\cal D}_j(\Theta,\Phi,{\bf x})
\end{eqnarray}
with
\[
 {\cal D}_j(\Theta,\Phi,{\bf x}) := 
 4 \pi i^j \sum_{\substack{l \leq j\\ j-l \, \even}} \sum_{m=-l}^l Y_{lm}(\Theta,\Phi)
 v_{lm}({\bf x}) a_{lj} (j-l)! .
\]
The symbol $\rho_{-2-j}$ belongs to the symbol class $S_{cl}^{-2-j}$ of classical pseudo-differential operators.
\label{vjclass}
\end{proposition}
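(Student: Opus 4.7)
The plan is to follow the template of the proof of Proposition \ref{aj0class}, adapted to the slightly different radial power structure of $V_j$. The key observation is that the angular factor $Y_{lm}(\theta_1,\phi_1)$ present in $V_j$ meshes perfectly with the plane-wave expansion in spherical harmonics, collapsing the Fourier integral to a sum of explicit radial integrals that can be processed by the Bessel recurrence (\ref{Bessrec}).

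Concretely, I would first introduce spherical coordinates $(\eta,\Theta,\Phi)$ for $\boldeta$ and insert the expansion
\[
 e^{-i{\bf z}\boldeta}=4\pi\sum_{l'=0}^{\infty}\sum_{m'=-l'}^{l'}(-i)^{l'}j_{l'}(\eta s_1)\bar{Y}_{l'm'}(\theta_1,\phi_1)Y_{l'm'}(\Theta,\Phi)
\]
into the oscillatory integral defining $\rho_{-2-j}$. Using $d{\bf z}=s_1^2\sin\theta_1\,ds_1\,d\theta_1\,d\phi_1$ and the orthogonality of spherical harmonics on $S^2$, the double sum collapses onto the modes $(l,m)$ actually present in $V_j$, yielding
\[
 \rho_{-2-j}({\bf x},\boldeta)=4\pi\sum_{\substack{l\le j\\ j-l\,\even}}\sum_{m=-l}^{l}(-i)^{l}v_{lm}({\bf x})Y_{lm}(\Theta,\Phi)\int_0^\infty\omega(s_1)j_l(\eta s_1)s_1^{j+1}\,ds_1,
\]
where the radial power $s_1^{j+1}$ combines the original $s_1^{j-1}$ from $V_j$ with the volume Jacobian $s_1^2$. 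I would then apply the recurrence (\ref{Bessrec}) exactly $l$ times, integrating by parts at each step. As in Proposition \ref{aj0class}, all contributions involving derivatives of the cut-off $\omega$ feed only into $S^{-\infty}_{\cl}$ and can be ignored. This reduces the radial integral, modulo $S^{-\infty}_{\cl}$, to
\[
 \frac{a_{lj}}{\eta^{l+1}}\int_0^\infty\sin(\eta s_1)\tilde{\omega}(s_1)s_1^{j-l}\,ds_1,
\]
with the coefficient $a_{lj}$ given by (\ref{alj}) up to the index shift reflecting the starting power $s_1^{j+1}$ (rather than the $s_1^{j+2}$ starting power of Proposition \ref{aj0class}). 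Because $j-l$ is even by construction, the asymptotic formula (\ref{sinasymp}) applied to a smooth extension of $\tilde{\omega}(s_1)s_1^{j-l}$ picks up a single non-vanishing derivative $(j-l)!$ at the origin, yielding $i^{j-l}(j-l)!/\eta^{j-l+1}$ and therefore the desired overall decay $\eta^{-(j+2)}$.

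Collecting all prefactors then produces the stated formula for $\mathcal{D}_j$: under the parity constraint $j-l$ even, the phase $(-i)^{l}i^{j-l}$ combines with the $i^{j-l}$ from (\ref{sinasymp}) to yield the uniform factor $i^{j}$, while the constants from the recurrence and the factorial from the $\sin$-integral provide $a_{lj}(j-l)!$. Homogeneity in $\eta$ of each such term then certifies $\rho_{-2-j}\in S^{-2-j}_{\cl}$. The first assertion of the proposition follows by assembling the asymptotic series (\ref{Vasymp}): the leading term ($j=0$) reproduces the Fourier transform of the Coulomb kernel $1/|{\bf z}|$ and supplies the $1/\eta^2$ behavior, higher terms lie in the smaller classes $S^{-2-j}_{\cl}\subset S^{-2}_{\cl}$, and the Taylor remainder $\tilde{R}_n$ can be absorbed into $S^{-2-n}$ for arbitrarily large $n$ via Proposition \ref{SteinPr} together with the smoothness/decay estimates used in Proposition \ref{asympkernel}. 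The main obstacle is the coefficient bookkeeping: carefully tracking through the $l$ recurrence steps that the accumulated product is exactly $a_{lj}(j-l)!$, and verifying that the phase combination $(-i)^{l}i^{j-l}$ — under the parity restriction $j-l$ even — collapses into the single prefactor $i^{j}$ of the statement; no new analytic machinery beyond what was already deployed in Propositions \ref{aj0class}–\ref{aj1class} is needed.
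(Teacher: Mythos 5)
Your proposal follows essentially the same route as the paper: expand $e^{-i{\bf z}\boldeta}$ in spherical harmonics, use orthogonality on $S^2$ to collapse onto the modes of $V_j$, reduce the radial integral $\int_0^\infty j_l(\eta s_1)\omega(s_1)s_1^{j+1}\,ds_1$ via the recurrence (\ref{Bessrec}) to a sine integral of $\omega(s_1)s_1^{j-l}$, apply (\ref{sinasymp}) using that $j-l$ is even, and handle the Taylor remainder $\tilde R_n$ through the polynomial-approximation estimate feeding into Proposition \ref{SteinPr}. The only caveat is the phase bookkeeping (your $(-i)^l i^{j-l}$ gives $(-1)^l i^j$, not $i^j$, though the paper's own two occurrences of the plane-wave prefactor are themselves inconsistent on this sign), which does not affect the symbol-class conclusion.
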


\begin{proof}
Let us first establish the properties of the asymptotic terms.
The following proof is essentially a literal copy of the proof of Prop.~\ref{aj0class}.
Inserting the expansion of the phase factor in terms of spherical harmonics, we get 
\[
 \rho_{-2-j}({\bf x},\boldeta) 
 = 4 \pi \sum_{\substack{l \leq j\\ j-l \, \even}} i^l \left[ \sum_{m=-l}^l \bar{Y}_{lm}(\Theta,\Phi)
 (-1)^m v_{l-m}({\bf x}) \right] \int_0^\infty j_l(\eta s_1) \omega(s_1) s_1^{j+1} \, ds_1 ,
\]
where we assume the phase convention $\bar{Y}_{lm} = (-1)^m Y_{l-m}$.
Application of the recurrence relation (\ref{Bessrec}) yields
\[
 \int_0^\infty j_l(\eta s_1) \omega(s_1) s_1^{j+1} \, ds_1 
 \sim \frac{a_{lj}}{\eta^{l+1}} \int_0^\infty \sin (\eta s_1) \omega(s_1) s_1^{j-l} ds_1 ,
\]
with constant $a_{lj}$ given by (\ref{alj}).
For $j-l$ even, $\omega(s_1) s_1^{j-l}$ can be extended to a function $f \in C^\infty_0(\mathbb{R})$ such that
for even $n \neq j-l$ we get $f^{(n)}(0)=0$ and only $f^{(j-l)}(0) = (j-l)!$ does not vanish.
From (\ref{sinasymp}), we finally get
\[
 \rho_{-2-j}({\bf x},\boldeta) \sim
 \frac{4 \pi i^j}{\eta^{j+2}} \sum_{\substack{l \leq j\\ j-l \, \even}} \sum_{m=-l}^l \bar{Y}_{lm}(\Theta,\Phi)
 (-1)^m v_{l-m}({\bf x}) a_{lj} (j-l)! .
\]

It remains to show that the corresponding symbol of the remainder $\tilde{R}_n(\mathbf{x}, \mathbf{z})$ of the asymptotic expansion (\ref{Vasymp}) 
actually belongs to  the symbol class $S_{cl}^{-3-n}$. For this let us consider another asymptotic expansion with $\tilde{n} > n$ of the orbital $\phi({\bf x})$.
It can be shown, cf.~Prop.~4.3.2 \cite{BS}, that its remainder satisfies the estimate
\[
 \| R_{\tilde{n}}({\bf x}, \cdot) \|_{L^\infty(B)} \lesssim d^{\tilde{n}-\frac{1}{2}} \| \phi({\bf x}, \cdot ) \|_{H^{\tilde{n}+1}(B)} ,
\]
where $B$ represents a ball with radius $d$ and center at the origin. The corresponding remainder $\tilde{R}_{\tilde{n}}(\mathbf{x}, \mathbf{z})$ of the asymptotic expansion (\ref{Vasymp}) 
therefore satisfies the estimates
\[
 \left| \partial_{\bf x}^\beta \partial_{\bf z}^\alpha \tilde{R}_{\tilde{n}}({\bf x}, {\bf z}) \right| \lesssim 
 |z|^{-\frac{3}{2}+\tilde{n}-|\alpha|} 
\]
which according to Proposition (\ref{SteinPr}) corresponds to the fact that its symbol belongs to  the symbol class $S_{cl}^{-\frac{3}{2}-\tilde{n}}$. 
This is sufficient for our purposes, let us just take $\tilde{n} =n+2$ and observe that the symbol of $\tilde{R}_{\tilde{n}}(\mathbf{x}, \mathbf{z})-\tilde{R}_{n}(\mathbf{x}, \mathbf{z})$
belongs to the symbol class $S_{cl}^{-3-n}$. 
\end{proof}

\subsection{Kernel functions of classical pseudo-differential operators}
The iterative solution of the RPA-CC equation requires the solution of Eq.~\ref{Ataun} in each 
iteration step, where the RPA terms on the right hand side will be identified with kernel functions of classical 
pseudo-differential operators. Concerning the general properties of classical pseudo-differential operators we refer to \cite{ES97},
and to \cite{GVF01} for a discussion of asymptotic properties of their kernel functions. The asymptotic expansion of the kernel functions
corresponds to the decomposition of the symbols into homogeneous parts. In order to obtain a kernel function from a homogeneous symbol
it is necessary to apply an appropriate regularization technique at $\eta \rightarrow 0$, cf.~\cite{CK10,EK85,EK89}, which gives rise to  
logarithmic terms in t their asymptotic expansions. Let us briefly recall where the logarithmic terms in the kernel functions come from.  

\begin{proposition}
Given a homogeneous symbol $\sigma_{p-j}$ from the asymptotic expansion (\ref{asymbcl}) of a classical symbol $\sigma_p \in S^p_{cl}(\mathbb{R}^3,\mathbb{R}^3)$, with $p < -3$.
Let us assume that for $\eta$ greater some constant it has the form
\[
 \sigma_{p-j}({\bf x},\boldeta) =  \frac{1}{\eta^{p+j}} \sum_{l=0}^j \sum_{m=-l}^l w_{j,lm}({\bf x}) \, Y_{lm}(\Theta,\Phi) .
\]
Now let the corresponding kernel function be given by the oscillatory integral
\[
 k_{p-j}({\bf x},{\bf z}) := \int e^{i {\bf z} \boldeta} \sigma_{p-j}({\bf x},\boldeta) \, \dbar{\bf z} ,
\]
which, by assumption, is absolutely convergent. The singular part of the kernel function is then given by
\[
 k_{p-j}({\bf x},{\bf z}) \sim \tilde{\omega}(s_1) s_1^{j-p-3} \sum_{\substack{l=0\\(j-p-l \, \mbox{even})}}^{j-p-3} \sum_{m=-l}^l \tilde{w}_{j,lm}({\bf x}) \, Y_{lm}(\theta,\phi) ,
\]
\[
 k_{p-j}({\bf x},{\bf z}) \sim \tilde{\omega}(s_1) s_1^{j-p-3} \log s_1 \sum_{\substack{l=0\\(j-p-l \, \mbox{odd})}}^j \sum_{m=-l}^l \tilde{w}_{j,lm}({\bf x}) \, Y_{lm}(\theta,\phi) ,
\]
where $\sim$ denotes equality modulo terms which belong to $C^{\infty}(\mathbb{R}^3,\mathbb{R}^3)$.
\label{kernelasymp}
\end{proposition}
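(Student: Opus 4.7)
The plan is to split $\sigma_{p-j}=\chi(\boldeta)\sigma_{p-j}+(1-\chi(\boldeta))\sigma_{p-j}$ by a cut-off $\chi\in C_0^\infty(\mathbb{R}^3)$ equal to $1$ in the region where the homogeneous formula for $\sigma_{p-j}$ fails. The cut-off part has compactly supported symbol and contributes a smooth kernel, which can be absorbed into the $C^\infty$ remainder. For the purely homogeneous piece I would repeat the reduction already used in the proofs of Propositions~\ref{aj0class} and~\ref{vjclass}: pass to spherical coordinates $\boldeta=(\eta,\Theta,\Phi)$, insert the Rayleigh expansion
\[
 e^{i{\bf z}\boldeta}=4\pi\sum_{l',m'}i^{l'}j_{l'}(\eta s_1)\,Y_{l'm'}(\theta,\phi)\,\overline{Y_{l'm'}(\Theta,\Phi)} ,
\]
and exploit the orthogonality of spherical harmonics on $S^2$. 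Only pairs $(l',m')=(l,m)$ matching those of $\sigma_{p-j}$ survive the angular integration, reducing the problem, modulo smooth contributions, to the radial integrals
\[
 I_l(s_1):=\int_c^\infty j_l(\eta s_1)\,\eta^{p-j+2}\,d\eta ,
\]
each weighted by $w_{j,lm}({\bf x})\,Y_{lm}(\theta,\phi)$ and an overall constant factor.

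Next I would perform the substitution $u=\eta s_1$ to factor out the expected leading power,
\[
 I_l(s_1)=s_1^{j-p-3}\,J_l(cs_1),\qquad J_l(\varepsilon):=\int_\varepsilon^\infty j_l(u)\,u^{p-j+2}\,du ,
\]
so that the singular behaviour of $k_{p-j}$ as ${\bf z}\to 0$ is controlled by $J_l(\varepsilon)$ as $\varepsilon\to 0^+$. The integral converges at infinity because $p-j+1<-1$ and $j_l$ is oscillatory. I would analyse the lower limit by inserting the entire power series $j_l(u)=\sum_{k\ge 0}a_{l,k}u^{l+2k}$ and integrating term by term on $(\varepsilon,1)$. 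For each $k$ the resulting integral $\int_\varepsilon^1 u^{l+2k+p-j+2}\,du$ falls into one of three cases as $\varepsilon\to 0$: (a)~convergent when $l+2k>j-p-3$; (b)~algebraic divergence $\sim\varepsilon^{l+2k+p-j+3}/(l+2k+p-j+3)$ when $l+2k<j-p-3$; (c)~logarithmic divergence $\sim-\log\varepsilon$ when $l+2k=j-p-3$.

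Restoring the overall factor $s_1^{j-p-3}$, case (a) yields a constant multiple of $s_1^{j-p-3}$, case (c) yields $s_1^{j-p-3}\log s_1$, and case (b) yields $c^{l+2k+p-j+3}s_1^{l+2k}$, a non-negative integer power of $s_1$. Multiplied by $Y_{lm}(\theta,\phi)$, case~(b) contributes $|{\bf z}|^{2k}\bigl(s_1^l Y_{lm}(\theta,\phi)\bigr)$, and since $s_1^l Y_{lm}(\theta,\phi)$ is a homogeneous harmonic polynomial in the components of ${\bf z}$, this contribution is smooth and belongs to the $C^\infty$ remainder. A logarithm thus appears iff some $k\ge 0$ realises $l+2k=j-p-3$, which (since $k=(j-p-3-l)/2\ge 0$) happens exactly when $j-p-l$ is odd and $l\le j-p-3$. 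Conversely, the pure power $s_1^{j-p-3}Y_{lm}(\theta,\phi)$ from case~(a) is itself smooth whenever $j-p-3-l$ is a non-negative even integer, i.e.\ precisely when $j-p-l$ is odd; it therefore enters the singular part if and only if $j-p-l$ is even. Absorbing the constants $a_{l,k}$, $(l+2k+p-j+3)^{-1}$, $i^l/(2\pi^2)$ and the case~(a) Mellin constants into the coefficient functions $\tilde w_{j,lm}({\bf x})$ produces the two sums stated in the proposition.

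The main obstacle is the parity bookkeeping relating the Bessel power-series index $k$, the parity of $j-p-l$, and the smoothness criterion for $s_1^{j-p-3}Y_{lm}$; a second point requiring care is verifying that the case~(b) algebraic divergences really recombine with the harmonic-polynomial structure $|{\bf z}|^lY_{lm}$ into $C^\infty$ contributions. Finally, under the standing hypothesis $p<-3$ together with the symbol's constraint $l\le j$, one has $j-p-3\ge j+1$, so the different formal upper bounds $l\le j-p-3$ and $l\le j$ appearing in the two stated sums both reduce to $l\le j$, and the computation is consistent.
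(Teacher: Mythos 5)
Your argument is correct, but it runs in the opposite direction from the paper's. The paper proves Proposition \ref{kernelasymp} by inversion: Propositions \ref{aj0class} and \ref{aj1class} have already computed the symbol of a kernel $\tilde\omega(s_1)\,s_1^{j'}Y_{lm}$ (resp.\ $\tilde\omega(s_1)\,s_1^{j'}\log s_1\,Y_{lm}$), showing that these maps are diagonal in the spherical-harmonic decomposition with nonzero multiplicative constants and produce homogeneous symbols of degree $-j'-3$ supported on harmonics with $j'-l$ odd (resp.\ even); substituting $j'=j-p-3$ and adjusting the coefficients $\tilde w_{j,lm}$ then realizes the given homogeneous symbol modulo $S^{-\infty}_{cl}$, which yields the stated kernel asymptotics. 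You instead compute the inverse Fourier transform directly: Rayleigh expansion, orthogonality on $S^2$, the substitution $u=\eta s_1$, and a Hadamard-type regularization of $\int_\varepsilon^\infty j_l(u)\,u^{p-j+2}\,du$ via the small-argument power series of $j_l$. Your parity bookkeeping (logarithm iff $l+2k=j-p-3$ is attainable, i.e.\ $j-p-l$ odd; the pure power $s_1^{j-p-3}Y_{lm}$ smooth iff $j-p-3-l$ is a non-negative even integer) reproduces exactly the constraints encoded in (\ref{orthtau1}), and your observation that the algebraically divergent terms recombine into $|{\bf z}|^{2k}\cdot\bigl(s_1^lY_{lm}\bigr)$, a polynomial in ${\bf z}$, correctly disposes of them. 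Your route is more self-contained and makes the origin of the logarithm explicit, at the cost of redoing the Bessel analysis; the paper's route is shorter but asks the reader to verify that the forward maps are invertible harmonic by harmonic. One small imprecision: in your case (a) the integral $\int_{\varepsilon}^1 u^{l+2k+p-j+2}\,du$ still carries an $\varepsilon$-dependent tail proportional to $\varepsilon^{l+2k+p-j+3}$, and likewise for all large $k$; after restoring the prefactor $s_1^{j-p-3}$ these tails are again of the harmless form $|{\bf z}|^{2k}s_1^lY_{lm}$ and sum to a function analytic in ${\bf z}$, so they belong to the $C^\infty$ remainder, but they should be mentioned rather than silently absorbed into the constant of case (a).
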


\begin{proof}
The proof is a simple consequence of Props.~(\ref{aj0class}) and (\ref{aj1class}), where we have to replace $j$ by $j-p-3$.
The homogeneous symbols considered in these Propositions
equal the present symbol modulo symbols in $S^{-\infty}_{cl}(\mathbb{R}^3,\mathbb{R}^3)$ which correspond to
$C^{\infty}(\mathbb{R}^3,\mathbb{R}^3)$ kernel functions. By taking multiplicative constants, it is therefore straightforeward to adjust $\tilde{w}_{j,lm}({\bf x})$ 
such that the calculations in the proofs of  Props.~(\ref{aj0class}) and (\ref{aj1class}) lead to our homogeneous symbol $\sigma_{p-j}({\bf x},\boldeta)$
modulo symbols in $S^{-\infty}_{cl}(\mathbb{R}^3,\mathbb{R}^3)$.
\end{proof}

\begin{proposition}
The kernel function $k_{p-N}$ corresponding to the remaining symbol $\sigma_{p-N}$ of the asymptotic expansion (\ref{asymbcl}) belongs modulo a smooth part to all wedge Sobolev spaces 
${\cal W}_{\loc}^\infty(Y, {\cal K}^{\infty, \tilde{\gamma}}(X^\wedge))$ with weight $\tilde{\gamma} < -\tfrac{3}{2}+N-p$.
If $k_{p-N}$ furthermore satisfies (\ref{xbound}) it belongs modulo a smooth part to the corresponding Schwartz spaces ${\cal S} \bigl( \mathbb{R}_+,H^{\infty,\tilde{\gamma}}(B) \bigr)$.
\label{kpN}
\label{Wremainder}
\end{proposition}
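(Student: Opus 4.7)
The plan is to apply Proposition \ref{SteinPr} with $m=p-N$ to obtain a Cartesian decay estimate for $k_{p-N}$, transfer it to hyperspherical coordinates via the change-of-variables matrix $\mathbb{T}$, and then verify convergence of the weighted $L^2$ integrals defining the norms of ${\cal W}^\infty_{\loc}(Y,{\cal K}^{\infty,\tilde\gamma}(X^\wedge))$ and ${\cal S}(\mathbb{R}_+,H^{\infty,\tilde\gamma}(B))$.

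First I would invoke Proposition \ref{SteinPr} to conclude $k_{p-N}\in C^\infty(\mathbb{R}^3\times\mathbb{R}^3\setminus\{0\})$ together with
\[
|\partial_{\bf x}^\beta\partial_{\bf z}^\alpha k_{p-N}({\bf x},{\bf z})|\lesssim|{\bf z}|^{-3-(p-N)-|\alpha|-M}
\]
for every $M\in\mathbb{N}_0$ with $3+p-N+|\alpha|+M>0$. Using the cutoff decomposition (\ref{tdecomp}), I would separate the smooth interior complement (which is already $C^\infty$ and accounts for the ``modulo smooth'' clause) from the edge-localized piece $\omega(r)\tilde k_{p-N}$, which is what actually needs analysis.

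Next I would pass to hyperspherical coordinates and bound the hyperspherical derivatives of $\tilde k_{p-N}$ using the Cartesian estimate above. The explicit formulas $z_i=t\sin r\cdot(\ldots)$ show that $|\partial_{\theta_1}{\bf z}|,|\partial_{\phi_1}{\bf z}|$ and $|-r\partial_r{\bf z}|$ are all of order $r$ for bounded $t$, whereas $\partial_{\theta_2},\partial_{\phi_2},\partial_t$ act on Cartesian variables with $O(1)$ contributions. Chaining these through yields
\[
|\partial_{X^\wedge}^{\alpha'}\partial_Y^{\beta'}\tilde k_{p-N}|\lesssim r^{|\alpha'|}\cdot r^{-3-(p-N)-|\alpha'|}=r^{N-p-3},
\]
so totally characteristic hyperspherical derivatives \emph{preserve} the radial decay order. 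After multiplication by a partition-of-unity factor $\varphi\in C_0^\infty(Y)$, the defining seminorms of $\varphi\omega\tilde k_{p-N}$ in ${\cal W}^\infty_{\loc}(Y,{\cal K}^{\infty,\tilde\gamma}(X^\wedge))$ are weighted integrals of the shape
\[
\int_0^\epsilon\!\!\int_X\!\!\int_Y r^{2-2\tilde\gamma}\bigl|(-r\partial_r)^k\partial_{X^\wedge}^{\alpha'}\partial_Y^{\beta'}(\varphi\omega\tilde k_{p-N})\bigr|^2\,dr\,dx\,dy,
\]
as in the argument of Proposition \ref{asympkernel}, and the bound above controls them by $\int_0^\epsilon r^{-4-2p+2N-2\tilde\gamma}\,dr$, which is finite iff $\tilde\gamma<-\tfrac{3}{2}+N-p$. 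This gives the first assertion. For the Schwartz-type conclusion, the enhanced estimate (\ref{xbound}) supplies arbitrary polynomial ${\bf x}$-decay of $k_{p-N}$ and its derivatives, which in hyperspherical coordinates becomes arbitrary polynomial $t$-decay; combined with the ${\cal K}^{\infty,\tilde\gamma}$-bound just established and the definition (\ref{Schwartztype}), a finite cover of $Y\cong S^2$ together with the Sobolev embedding used in Proposition \ref{Schwartz} yields membership in ${\cal S}(\mathbb{R}_+,H^{\infty,\tilde\gamma}(B))$ modulo a smooth remainder.

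The main obstacle lies in the derivative-counting of the second step: one must verify carefully that each hyperspherical $X^\wedge$-derivative truly contributes a factor $O(r)$ when expressed through Cartesian $\partial_{\bf z}$'s, so that the $r^{|\alpha'|}$ gain exactly offsets the $r^{-|\alpha'|}$ loss coming from the Cartesian estimate. This cancellation is what makes the weight threshold $\tilde\gamma<-\tfrac{3}{2}+N-p$ independent of the order of hyperspherical differentiation, and it is the crucial mechanism underlying the proposition.
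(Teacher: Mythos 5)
There is a genuine gap in the second step. You apply the estimate of Proposition \ref{SteinPr} with $m=p-N$ in the form $|\partial_{\bf x}^\beta\partial_{\bf z}^\alpha k_{p-N}|\lesssim|{\bf z}|^{-3-(p-N)-|\alpha|}$ and then claim the chain rule gives $|\partial_{X^\wedge}^{\alpha'}\partial_Y^{\beta'}\tilde k_{p-N}|\lesssim r^{N-p-3}$ for \emph{all} orders of differentiation. But Proposition \ref{SteinPr} only yields $|{\bf z}|^{-3-m-|\alpha|-M}$ under the constraint $3+m+|\alpha|+M>0$; for $|\alpha|\le N-p-3$ this forces $M>N-p-3-|\alpha|\ge 0$, so the best available bound is $|{\bf z}|^{-\epsilon}$, i.e.\ essentially boundedness, \emph{not} vanishing of order $N-p-3$. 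Indeed the raw kernel $k_{p-N}$ generically satisfies $k_{p-N}({\bf x},0)\neq 0$, and a bounded, non-vanishing function cannot lie in ${\cal K}^{\infty,\tilde\gamma}(X^\wedge)$ for the large positive weights $\tilde\gamma<-\tfrac32+N-p$ claimed here (boundedness alone only buys $\tilde\gamma<\tfrac32$). Your identification of the ``modulo a smooth part'' clause with the $(1-\omega(r))$ piece of (\ref{tdecomp}) does not repair this, since the edge-localized piece $\omega(r)k_{p-N}$ still fails to vanish at $r=0$.

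The missing idea — and the heart of the paper's proof — is to first subtract from $k_{p-N}$ a polynomial $Q_{p-N}({\bf x},{\bf z})$ of degree $<m$ in ${\bf z}$ with smooth coefficients in ${\bf x}$ (this polynomial, times a cut-off, is the ``smooth part''). The paper controls the remainder $R^m_{p-N}=k_{p-N}-Q_{p-N}$ by polynomial approximation in Sobolev spaces (Prop.~4.3.2 of \cite{BS}), $\|R^m_{p-N}({\bf x},\cdot)\|_{L^\infty(\tilde B)}\lesssim d^{m-3/2}\|k_{p-N}({\bf x},\cdot)\|_{H^m(\tilde B)}$ on balls of radius $d\sim|{\bf z}|$, combined with the Plancherel estimate $\|k_{p-N}({\bf x},\cdot)\|_{H^s}<\infty$ for $s<-\tfrac12-p+N$ coming directly from $\sigma_{p-N}\in S^{p-N}$. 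This produces the genuine vanishing $|\partial_{\bf x}^\beta\partial_{\bf z}^\alpha R^m_{p-N}|\lesssim|{\bf z}|^{-5/2-|\alpha|-p+N}$ for low-order derivatives, which is what survives the hyperspherical chain rule and makes the weighted integral converge precisely for $\tilde\gamma<-\tfrac32+N-p$. Your chain-rule bookkeeping and the final integration in $r$ are fine once this vanishing is in hand, and the Schwartz-space part via (\ref{xbound}) then goes through as you sketch; without the polynomial subtraction, however, the central estimate of your argument is false.
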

 
\begin{proof}
Before studying regularity in weighted wedge Sobolev spaces, it is necessary to adjust the kernel function such that it vanishes to certain order at ${\bf z} \rightarrow 0$. 
This can be achieved by subtracting a smooth kernel function, i.e. a polynomial times a cut-off function from it. In order to achieve a fairly optimal behaviour,
we make use of polynomial approximation in Sobolev spaces, cf.~\cite{BS}. On a ball $\tilde{B}$ of radius $d$ centered at the origin one gets the decomposition
\begin{equation}
 k_{p-N}({\bf x},{\bf z}) = Q_{p-N}({\bf x},{\bf z}) + R_{p-N}^m({\bf x},{\bf z}) ,
\end{equation}
with $Q_{p-N}$ a polynomial of degree less than $m$ in ${\bf z}$ with coefficients which are smooth functions in ${\bf x}$.
It can be shown, cf.~Prop.~4.3.2 \cite{BS}, that the remainder satisfies the following estimate
\[
 \| R_{p-N}^m({\bf x}, \cdot) \|_{L^\infty(\tilde{B})} \lesssim d^{m-\frac{3}{2}} \| k_{p-N}({\bf x}, \cdot ) \|_{H^m(\tilde{B})} .
\]
In order to estimate the Sobolev regularity of the kernel function let us consider the estimate
\begin{eqnarray*}
 \| k_{p-N}({\bf x}, \cdot) \|_{H^s(\tilde{B})} & = & \lesssim \int^{\mathbb{R}^3} \left| \langle \boldeta \rangle^s \sigma_{p-N}({\bf x},\boldeta) \right|^2 \, d \boldeta \\
 & \lesssim & \int^{\mathbb{R}^3} \langle \boldeta \rangle^{2s} \langle \boldeta \rangle^{2p-2N} \eta^2 d\eta \\
 & < & \infty \quad \mbox{for} \quad s < -\tfrac{1}{2} -p+N ,
\end{eqnarray*}
which means that $k_{p-N}({\bf x}, \cdot)$ belongs to $H^s(\tilde{B})$ for $s < -\tfrac{1}{2} -p+N$.
In the following, we will need the Sobolev regularity of the remainder and of its mixed partial derivatives. The latter also follows from the previous estimate
by a slight modification of the arguments,
which shows that $\partial_{\bf x}^\beta \partial_{\bf z}^\alpha R_{p-N}$ belongs to $H^s(\tilde{B})$ for $s < -\tfrac{1}{2} -|\alpha| -p+N$ 
and we can take $m = -1-|\alpha| -p+N$.

According to our previous discussion, let us consider in the following a modified kernel function
\[
 \tilde{k}_{p-N}({\bf x}, {\bf z}) := R_{p-N}^m({\bf x}, {\bf z}) ,
\] 
which satisfies the estimates
\begin{equation}
 \left| \partial_{\bf x}^\beta \partial_{\bf z}^\alpha \tilde{k}_{p-N}({\bf x}, {\bf z}) \right| \lesssim 
 \left\{ \begin{array}{ll} |z|^{-\frac{5}{2}-|\alpha| -p+N} & \mbox{for} \ |\alpha| \leq -3-p+N \\
 |z|^{-3-|\alpha| -p+N} & \mbox{for} \ |\alpha| > -3 -p+N \end{array} \right. .
\label{estpd}
\end{equation}
In order to show that it belongs to the wedge Sobolev space ${\cal W}_{\loc}^\infty(Y, {\cal K}^{\infty, \tilde{\gamma}}(X^\wedge))$
with weight $\tilde{\gamma}$, we have to consider the system of weighted local semi norms
\[
 \| \tilde{k}_{p-N} \|_{\alpha,\beta} :=
 \int_{X^\wedge} \int_{Y} r^{2-2\tilde{\gamma}} \left| \sigma(r) \phi(y) \partial^{\alpha}_{X^\wedge} \partial^{\beta}_{{\cal Y}} \tilde{k}_{p-N} \right|^2 \, dr dx dy .
\]
Partial derivatives in hyperspherical coordinates can be estimated by partial derivatives in Cartesian coordinates via the estimate
\begin{eqnarray*}
 \left| \partial^{\alpha}_{X^\wedge} \partial^{\beta}_{Y} \tilde{k}_{p-N} \right| & \lesssim &
 \sup_{\substack{\alpha=\alpha_1+\alpha_2 \\ \beta=\beta_1+\beta_2}} r^{|\alpha| +|\beta_2|} \left| \partial^{\alpha_1+\beta_1}_{\bf x} \partial^{\alpha_2+\beta_2}_{{\bf z}} k_{p-N} \right| \\
 & \lesssim & \sup_{\alpha=\alpha_1+\alpha_2} r^{|\alpha_1|-3-p+N} \\
 & \lesssim & r^{-3-p+N} ,
\end{eqnarray*}
where we used (\ref{estpd}) and
\[
 \partial_{X^\wedge} {\bf x} = {\cal O}(r), \quad \partial_{Y} {\bf x} = {\cal O}(1), \quad \partial_{X^\wedge} {\bf z} = {\cal O}(r), \quad \partial_{Y} {\bf z} = {\cal O}(r) . 
\]
Now we can easily estimate the wedge Sobolev norm
\begin{eqnarray*}
 \| \tilde{k}_{p-N} \|_{\alpha,\beta} & \lesssim & \int_{\supp \sigma} r^{2-2\tilde{\gamma}} r^{-6-2p+2N} dr \\
 & < & \infty \quad \mbox{for} \quad \tilde{\gamma} < -\tfrac{3}{2}+N-p .
\end{eqnarray*}

Let us finally show that it also belongs to the corresponding Schwartz spaces  ${\cal S} \bigl( \mathbb{R}_+,H^{\infty,\tilde{\gamma}}(B) \bigr)$ if it satisfies
(\ref{xbound}). Here we have to consider the system of weighted local semi norms
\[
 \| \tilde{k}_{p-N} \|_{\alpha,\beta,n} := \sup_{t>c>0}
 \int_{X^\wedge} \int_{\cal Y} r^{2-2\tilde{\gamma}} \left| \sigma(r) \phi(y) t^n \partial^{\alpha}_{X^\wedge} \partial^{\beta}_{{\cal Y}} \tilde{k}_{p-N} \right|^2 \, dr dx dy .
\]
The previous estimate of partial derivatives in hyperspherical coordinates can be modified according to
\begin{eqnarray*}
 \sup_{t>c>0} \left| t^n \partial^{\alpha}_{X^\wedge} \partial^{\beta}_{{\cal Y}} \tilde{k}_{p-N} \right| & \lesssim &
 \sup_{\substack{\alpha=\alpha_1+\alpha_2 \\ \beta=\beta_1+\beta_2}} r^{|\alpha| +|\beta_2|} \left| \bigl[ \max \{|{\bf x}|,|{\bf z}|\} \bigr]^{n+|\alpha|+|\beta|} \partial^{\alpha_1+\beta_1}_{\bf x} \partial^{\alpha_2+\beta_2}_{{\bf z}} \tilde{k}_{p-N} \right| \\
 & \lesssim & \sup_{\alpha=\alpha_1+\alpha_2} r^{|\alpha_1|-3-p+N} \\
 & \lesssim & r^{-3-p+N} ,
\end{eqnarray*}
from which we obtain as before
\begin{eqnarray*}
 \| \tilde{k}_{p-N} \|_{\alpha,\beta,n} < \infty \quad \mbox{for} \quad \tilde{\gamma} < -\tfrac{3}{2}+N-p .
\end{eqnarray*}
\end{proof}

\begin{corollary}
If a  kernel function $k_{p}$ which corresponds to a symbol in $S^p_{\cl}(\mathbb{R}^3 \times \mathbb{R}^3)$ satisfies (\ref{xbound}) it belongs modulo a smooth part to the Schwartz spaces ${\cal S} \bigl( \mathbb{R}_+,H^{\infty,\tilde{\gamma}}(B) \bigr)$ with $\tilde{\gamma} < -\tfrac{3}{2}-p$.
\label{Sclassical}
\end{corollary}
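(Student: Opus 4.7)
The plan is to deduce the corollary by combining the asymptotic symbol decomposition (\ref{asymbcl}) with the two Propositions \ref{kernelasymp} and \ref{kpN}. First I would fix $\tilde{\gamma} < -\tfrac{3}{2} - p$ and choose $N \in \mathbb{N}$ large enough that $\tilde{\gamma} < -\tfrac{3}{2} + N - p$, so that the remainder estimate of Proposition \ref{kpN} applies at this weight. Writing $\sigma_p = \sum_{j=0}^{N-1} \sigma_{p-j} + \sigma_{p-N}$ induces a corresponding splitting of the kernel,
\[
 k_p({\bf x},{\bf z}) = \sum_{j=0}^{N-1} k_{p-j}({\bf x},{\bf z}) + k_{p-N}({\bf x},{\bf z}),
\]
and it suffices to show each summand lies in ${\cal S}\bigl(\mathbb{R}_+,H^{\infty,\tilde{\gamma}}(B)\bigr)$ modulo a smooth contribution.

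For the remainder $k_{p-N}$, Proposition \ref{kpN} directly yields membership in ${\cal S}\bigl(\mathbb{R}_+,H^{\infty,\tilde{\gamma}}(B)\bigr)$, provided one first verifies that the strengthened estimate (\ref{xbound}) is inherited by $k_{p-N}$ from $k_p$. This is the cleanest point to invoke (\ref{xbound}), since the homogeneous pieces $k_{p-j}$ arise from symbols depending smoothly on ${\bf x}$ with the required polynomial ${\bf x}$-decay built into the coefficients $w_{j,lm}({\bf x})$ appearing in Proposition \ref{kernelasymp}, and the difference therefore inherits the same bound.

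For each homogeneous contribution $k_{p-j}$, Proposition \ref{kernelasymp} expresses its singular part, modulo $C^\infty$, as $\tilde{\omega}(s_1) s_1^{j-p-3}$ multiplied by angular factors in the $Y_{lm}(\theta,\phi)$ (with a possible $\log s_1$ factor), and with smooth coefficients in ${\bf x}$. Such a term is a typical element of the asymptotic subspace of ${\cal K}^{\infty,\tilde{\gamma}}_{P_j}(X^\wedge)$ corresponding to the weight/conormal pair $p_j = p + 3 - j$, which is admissible precisely when $\Re p_j < \tfrac{3}{2} - \tilde{\gamma}$, i.e.\ $\tilde{\gamma} < -\tfrac{3}{2} - p + j$. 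The most restrictive case, $j=0$, reproduces exactly the bound $\tilde{\gamma} < -\tfrac{3}{2} - p$ stated in the corollary. The ${\bf x}$-dependence and exit behaviour for $t \to \infty$ is then supplied again by (\ref{xbound}), implying that these asymptotic terms, multiplied by the smooth ${\bf x}$-coefficients $w_{j,lm}$, lie in ${\cal S}\bigl(\mathbb{R}_+,H^{\infty,\tilde{\gamma}}(B)\bigr)$ after the same partition of unity argument used in Section \ref{kernel-wedge}.

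The hardest step is the compatibility check for the singular pieces: one must confirm that the homogeneous symbol representation used in Proposition \ref{kernelasymp} produces ${\bf x}$-coefficients with sufficiently rapid decay in $|{\bf x}|$ to ensure the Schwartz-type behaviour in $t$, and simultaneously that subtraction of the leading asymptotic layers does not destroy (\ref{xbound}) for $k_{p-N}$. Once these two ${\bf x}$-decay properties are verified, the corollary follows by summing the $N$ homogeneous contributions together with the remainder and absorbing all $C^\infty$ pieces into the ``smooth part'' clause.
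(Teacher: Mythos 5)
Your route is workable in outline but it is both more complicated than necessary and incomplete at exactly the points you flag as ``the hardest step.'' The paper's corollary is nothing more than the case $N=0$ of Proposition \ref{kpN}: the proof of that proposition uses only that the symbol of the kernel under consideration belongs to $S^{p-N}(\mathbb{R}^3\times\mathbb{R}^3)$ (via the estimate $\|k_{p-N}({\bf x},\cdot)\|_{H^s(\tilde B)}^2\lesssim\int\langle\boldeta\rangle^{2s+2p-2N}\eta^2\,d\eta$) together with the hypothesis (\ref{xbound}). Taking the trivial decomposition in which the entire symbol $\sigma_p\in S^p_{\cl}\subset S^p$ plays the role of the remainder immediately yields membership in ${\cal S}\bigl(\mathbb{R}_+,H^{\infty,\tilde\gamma}(B)\bigr)$ for $\tilde\gamma<-\tfrac32+0-p=-\tfrac32-p$, which is the assertion. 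No splitting into homogeneous layers is required.

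By contrast, your decomposition $k_p=\sum_{j<N}k_{p-j}+k_{p-N}$ forces you to establish two things that you explicitly leave unverified: (i) that each homogeneous piece $k_{p-j}$, whose ${\bf x}$-coefficients $\tilde w_{j,lm}$ are produced by Proposition \ref{kernelasymp} only as smooth functions, actually has the Schwartz-type decay in $t$ needed for the corner space (the exit behaviour is precisely what distinguishes ${\cal S}\bigl(\mathbb{R}_+,H^{\infty,\tilde\gamma}(B)\bigr)$ from the local wedge space, and it does not follow from smoothness of the coefficients alone); and (ii) that (\ref{xbound}) survives subtraction of those layers so that Proposition \ref{kpN} applies to $k_{p-N}$. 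Since the corollary's content is exactly the exit/Schwartz statement, deferring (i) leaves the main point unproven. Your identification of the critical weight $\tilde\gamma<-\tfrac32-p$ from the $j=0$ layer is correct and explains why the bound is what it is, but as a proof the argument should either carry out (i) and (ii) or, more simply, observe that the proposition already covers the whole kernel at once.
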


\section{Asymptotic properties of RPA type interactions}
\label{asympRPA}
Based on the results of the previous section for individual building blocks of the RPA interaction terms, we can now classify these 
terms either as kernel functions of classical pseudo-differential operators or in the framework of singular analysis. Since we are now
considering a real physical model, we reintroduce indices of occupied orbitals in our formulas. However, we still ignore spin degrees of freedom.

\begin{lemma}
Given a pair-amplitude $\tau_{kl}$ which belongs to the Schwartz space ${\cal S}\bigl( \mathbb{R}_+,H^{\infty,\gamma}_P(B) \bigr)$, cf.~(\ref{Schwartztype}). 
Let the corresponding symbol be given by
\[
 \sigma_{ik}({\bf x},\boldeta) := \int e^{-i {\bf z} \boldeta} \tau_{ik}({\bf x},{\bf z}) \, d{\bf z} .
\]
The symbol $\sigma_{ik}$ belongs to the symbol class $S_{cl}^{-4}(\mathbb{R}^3 \times \mathbb{R}^3)$ of classical pseudo-differential operators
and can be asymptotically represented by ``homogeneous'' symbols
\[
 \sigma_{ik}({\bf x},\boldeta) \sim \sum_{0 \leq j} \sigma_{ik,-4-j}({\bf x},\boldeta) ,
\]
which means that for $\lambda \geq 1$ and $\eta$ greater some constant, we have 
\[
 \sigma_{ik,-4-j}({\bf x}, \lambda \boldeta) = \lambda^{-4-j} \sigma_{ik,-4-j}({\bf x},\boldeta) .
\]
Furthermore, let $\boldeta$ be represented in spherical coordinates, i.e., $\sigma_{ik,-4-j}({\bf x},\boldeta) \equiv \tilde{\sigma}_{ik,-4-j}({\bf x},\eta,\Theta_1,\Phi_1)$, these symbols satisfy the orthogonality constraints
\begin{equation}
 \int_{0}^{2\pi} \int_{0}^{\pi} \tilde{\sigma}_{ik,-4-j}({\bf x},\eta,\Theta_1,\Phi_1) \, Y_{lm}(\Theta_1,\Phi_1) \sin \Theta_1 \, d\Theta_1 d\Phi_1 = 0
 \quad \mbox{for} \quad j-l \ \mbox{odd or} \ l>j.
\label{orthtau2}
\end{equation}
\label{lemmatau}
\end{lemma}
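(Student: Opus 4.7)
The plan is to combine the cut-off decomposition (\ref{tdecomp}) with the symbol calculus developed in Propositions \ref{aj0class}, \ref{aj1class} and \ref{Schwartz}. First I would write $\tilde{\tau}_{ik} = \omega(r)\tilde{\tau}_{ik} + (1-\omega(r))\tilde{\tau}_{ik}$ and treat the two pieces separately. The hypothesis that $\tau_{ik}$ belongs to the Schwartz-type corner space with generic asymptotic type supplies everything needed on both sides.

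For the complementary part $(1-\omega)\tilde{\tau}_{ik}$, which by construction lies in the Schwartz-type corner space (\ref{Schwartztype}), Proposition \ref{Schwartz} delivers immediately that the associated symbol lies in $S^{-\infty}(\mathbb{R}^3\times\mathbb{R}^3)\subset S_{cl}^{-4}$, so this piece contributes nothing to the homogeneous asymptotic expansion of $\sigma_{ik}$. The substantive work is therefore to analyze the edge part $\omega(r)\tilde{\tau}_{ik}$.

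For the edge part I would recast the asymptotic expansion in the form (\ref{asymp3}), obtained from (\ref{asymp1}) by Taylor-expanding the edge function first in $s_1$ and then in ${\bf z}_2$, and feed each term $\tilde{\omega}(s_1) s_1^j \log^k s_1 \, c_{jk}(\theta_1,\phi_1)\check{v}_{jk}({\bf x})$ into Proposition \ref{aj0class} (for $k=0$) or Proposition \ref{aj1class} (for $k=1$). This produces a homogeneous symbol in $S_{cl}^{-3-j}$ that automatically carries the orthogonality constraints (\ref{orthtau1}). The crucial step is the leading order: Remark \ref{c00const} records that $c_{00}$ is constant (a cusp-type condition built into the generic asymptotic type), and Proposition \ref{aj0class} then forces the $j=0$ contribution down to $S^{-\infty}_{cl}$. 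After the reindexing $j\mapsto j+1$ the surviving terms form precisely the homogeneous components $\sigma_{ik,-4-j}\in S_{cl}^{-4-j}$ for $j\geq 0$, and the orthogonality (\ref{orthtau2}) is (\ref{orthtau1}) read under this shift --- the condition ``$(j+1)-l$ even'' becomes ``$j-l$ odd''.

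To close the argument one must control the remainder $\check{h}_\Theta$ of (\ref{asymp3}). Here I would parallel Propositions \ref{Wremainder} and \ref{vjclass}: choosing $N_\Theta$ large enough and exploiting the weighted Sobolev regularity of $\check{h}_\Theta$ together with the kernel-symbol correspondence in Proposition \ref{SteinPr} forces its symbol into $S_{cl}^{-4-N}$ for arbitrarily large $N$, which is exactly what is required to close the asymptotic expansion (\ref{asymbcl}). The main obstacle I anticipate is the bookkeeping through the two successive Taylor expansions --- specifically, verifying that the orthogonality (\ref{orthcon}) is preserved under the passage from $(t,\theta_2,\phi_2)$-dependent coefficients to ${\bf x}$-dependent ones, and that no ``accidental'' spherical harmonic with $l>j$ is generated when the Taylor remainders in ${\bf z}_2$ are absorbed into $\check{h}_\Theta$. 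Once this bookkeeping is pinned down, everything else reduces to a routine application of the propositions already proved.
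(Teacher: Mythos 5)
Your proposal follows exactly the route of the paper's own (one-line) proof: the paper derives the lemma as an immediate consequence of the asymptotic expansion (\ref{asymp2})/(\ref{asymp3}) together with Propositions \ref{Schwartz} and \ref{aj0class}, and you have simply made explicit the steps it leaves implicit (the cut-off decomposition, the role of Remark \ref{c00const} in lowering the leading order from $-3$ to $-4$, and the index shift that turns (\ref{orthtau1}) into (\ref{orthtau2})). The argument is correct and matches the paper's approach.
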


\begin{proof}
This is an immediate consequence of the asymptotic expansion (\ref{asymp2}) and Propositions~\ref{Schwartz} and \ref{aj0class}.
\end{proof}

\begin{lemma}
Let the corresponding symbols of an effective interaction potential $V^{(2)}_{kj}({\bf x}_1,{\bf x}_2) \equiv \tilde{V}^{(2)}_{kj}({\bf x},{\bf z})$ be given by
\[
 \rho_{kj}({\bf x},\boldeta) := \int e^{-i {\bf z} \boldeta} \tilde{V}^{(2)}_{kj}({\bf x},{\bf z}) \, d{\bf z} .
\]
The symbol $\rho_{kj}$ belongs to the symbol class $S_{cl}^{-2}(\mathbb{R}^3 \times \mathbb{R}^3)$ of classical pseudo-differential operators
and can be asymptotically represented by ``homogeneous'' symbols
\[
 \rho_{kj}({\bf x},\boldeta) \sim \sum_{n \geq 0} \rho_{kj,-2-n}({\bf x},\boldeta) ,
\]
which means that for $\lambda \geq 1$ and $\eta$ greater some constant, we have 
\[
 \rho_{kj,-2-n}({\bf x}, \lambda \boldeta) = \lambda^{-2-j} \sigma_{kj,-2-n}({\bf x},\boldeta) .
\]
Furthermore, let $\boldeta$ be represented in spherical coordinates, i.e., $V^{(2)}_{kj}({\bf x},\boldeta) \equiv \tilde{V}^{(2)}_{kl,j}({\bf x},\eta,\Theta_1,\Phi_1)$, the asymptotic symbols satisfy 
the orthogonality constraints
\begin{equation}
 \int_{0}^{2\pi} \int_{0}^{\pi} \tilde{\rho}_{kl,-2-n}({\bf x},\eta,\Theta_1,\Phi_1) \, Y_{lm}(\Theta_1,\Phi_1) \sin \Theta_1 \, d\Theta_1 d\Phi_1 = 0
 \quad \mbox{for} \quad n-l \ \mbox{odd or} \ l>n.
\label{orthtau3}
\end{equation}
\label{lemmaV}
\end{lemma}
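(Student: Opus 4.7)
The plan is to reduce this lemma to Propositions \ref{Schwartz} and \ref{vjclass} in exactly the same pattern that was used for Lemma \ref{lemmatau}. First, split the effective interaction potential using the cut-off function,
\[
 \tilde{V}^{(2)}_{kj}({\bf x},{\bf z}) = \omega(|{\bf z}|)\tilde{V}^{(2)}_{kj}({\bf x},{\bf z}) + \bigl(1-\omega(|{\bf z}|)\bigr)\tilde{V}^{(2)}_{kj}({\bf x},{\bf z}).
\]
For the long-range piece, the argument given right after Proposition \ref{Schwartz} (applied to $V^{(2)}_{kj}$ in place of the generic $V^{(2)}$, which is valid because the smooth orbitals $\phi_k$, $\phi_j$ only improve the decay) shows that its symbol belongs to $S^{-\infty}(\mathbb{R}^3\times\mathbb{R}^3)$, and therefore does not affect the symbol-class classification or the homogeneous components.

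For the short-range piece, Proposition \ref{vjclass} applies directly: the Taylor expansion of the orbital $\phi_j({\bf x}-{\bf z})$ around ${\bf z}=0$ yields the asymptotic expansion (\ref{Vasymp}) with coefficients $v_{lm}({\bf x})$ in the Schwartz class, and the corresponding symbol is classical of order $-2$ with homogeneous components $\rho_{kj,-2-n}({\bf x},\boldeta) \sim \eta^{-2-n}\,{\cal D}_n(\Theta,\Phi,{\bf x})$ as given in (\ref{rhoVj}). The asymptotic expansion $\rho_{kj}\sim\sum_{n\geq 0}\rho_{kj,-2-n}$ and the homogeneity statement $\rho_{kj,-2-n}({\bf x},\lambda\boldeta) = \lambda^{-2-n}\rho_{kj,-2-n}({\bf x},\boldeta)$ for $\lambda\geq 1$ and $\eta$ large follow immediately.

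The orthogonality constraints (\ref{orthtau3}) are a direct consequence of the explicit form of ${\cal D}_n$, which is a finite linear combination of spherical harmonics $Y_{lm}(\Theta,\Phi)$ restricted to indices with $l\leq n$ and $n-l$ even. Integrating against $Y_{lm}(\Theta_1,\Phi_1)$ and invoking the orthonormality of spherical harmonics on $S^2$ kills every term with $l>n$ or with $n-l$ odd, which is exactly the asserted vanishing.

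The only mild subtlety — and thus the one point where care is required — is checking that replacing the generic $V^{(2)}$ used in Proposition \ref{vjclass} by the orbital-dressed $V^{(2)}_{kj} = \phi_k({\bf x}_1) V^{(2)}({\bf x}_1-{\bf x}_2) \phi_j({\bf x}_2)$ does not spoil either the Sobolev estimate on the Taylor remainder (Prop.~4.3.2 of \cite{BS}) or the Schwartz-class decay of the complementary part. Both remain valid under the standing assumption that electron-nuclear potentials are smooth, so that the orbitals $\phi_i$ are themselves smooth with controlled growth; the factors $\phi_k({\bf x}_1)=\phi_k({\bf x}+\tfrac{1}{\sqrt{2}}{\bf z})$ and $\phi_j({\bf x}_2)$ can be Taylor-expanded simultaneously in ${\bf z}$, simply modifying the coefficients $v_{lm}({\bf x})$ in (\ref{Vasymp}) without altering the angular structure or the symbol class.
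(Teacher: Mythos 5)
Your proposal is correct and follows essentially the same route as the paper, whose proof simply cites the asymptotic expansion (\ref{Vasymp}), Proposition \ref{vjclass}, and the discussion following Proposition \ref{Schwartz}; your write-up fleshes out exactly those three ingredients. The ``subtlety'' you flag about the orbital dressing is in fact already built into (\ref{Vasymp}) and Proposition \ref{vjclass}, since the Taylor expansion there is performed on the orbital factor $\phi({\bf x}_2)=\phi({\bf x}-{\bf z})$ of the dressed potential.
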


\begin{proof}
This is an immediate consequence of the asymptotic expansion (\ref{Vasymp}), Proposition \ref{vjclass} and the discussion following
Proposition \ref{Schwartz}.
\end{proof}

With this, the RPA term (\ref{RPA2}) becomes
\begin{eqnarray*}
 \int \tau_{ik}({\bf x}_1, {\bf x}_3) V_{kj}^{(2)}({\bf x}_3, {\bf x}_2) \, d{\bf x}_3 
 & = & \int \left( \int e^{i ({\bf x}_1 - {\bf x}_3) \boldeta} \sigma_{ik}({\bf x}_1,\boldeta) \, \dbar \boldeta \right)
 \left( \int e^{i ({\bf x}_3 - {\bf x}_2) \tilde{\boldeta}} \rho_{kj}({\bf x}_3,\tilde{\boldeta}) \, \dbar \tilde{\boldeta} \right) d {\bf x}_3 \\
 & = & \int e^{i ({\bf x}_1 - {\bf x}_2) \boldeta} \sigma_{ik} \circ \rho_{kj}({\bf x}_1,\boldeta) \, \dbar \boldeta ,
\end{eqnarray*}
where the composite symbol 
\[
  \sigma_{ik} \circ \rho_{kj}({\bf x}_1,\boldeta) := \iint e^{i ({\bf x}_1 - {\bf x}_3)(\tilde{\boldeta} - \boldeta)} \sigma_{ik}({\bf x}_1,\tilde{\boldeta}) \rho_{kj}({\bf x}_3,\boldeta) \, \dbar \tilde{\boldeta} \, d {\bf x}_3
\]
can be represented by the asymptotic Leibniz product
\begin{equation}
 \sigma_{ik} \circ \rho_{kj}({\bf x}_1,\boldeta) \sim
 \sum_{\alpha} \frac{1}{(2\pi i)^{|\alpha|} \alpha !} \bigl( \partial_{\boldeta}^{\alpha} \sigma_{ik} \bigr) ({\bf x}_1,\boldeta) \bigl( \partial_{{\bf x_1}}^{\alpha} \rho_{kj} \bigr) ({\bf x}_1,\boldeta) ,
\label{Leibniz}
\end{equation}
which means that the difference
\[
 \sigma_{ik} \circ \rho_{kj}({\bf x}_1,\boldeta) - 
 \sum_{|\alpha|>N} \frac{1}{(2\pi i)^{|\alpha|} \alpha !} \bigl( \partial_{\boldeta}^{\alpha} \sigma_{ik} \bigr) ({\bf x}_1,\boldeta) \bigl( \partial_{{\bf x_1}}^{\alpha} \rho_{kj} \bigr) ({\bf x}_1,\boldeta) ,
\]
belongs to the Symbol class $S_{cl}^{-6-N}(\mathbb{R}^3 \times \mathbb{R}^3)$. 
Finally, let us consider the RPA term (\ref{RPA3}), which becomes
\begin{eqnarray*}
 \lefteqn{\iint \tau_{ik}({\bf x}_1, {\bf x}_3) V_{[k,l]}^{(2)}({\bf x}_3, {\bf x}_4) \tau_{lj}({\bf x}_4, {\bf x}_2) \, d{\bf x}_3 d{\bf x}_4} \hspace{1cm} \\
 & & = \int \left( \int e^{i ({\bf x}_1 - {\bf x}_4) \boldeta} \sigma_{ik} \circ \rho_{[k,l]}({\bf x}_1,\boldeta) \, \dbar \boldeta \right)
 \left( \int e^{i ({\bf x}_4 - {\bf x}_2) \tilde{\boldeta}} \sigma_{lj}({\bf x}_4,\tilde{\boldeta}) \, \dbar \tilde{\boldeta} \right) d {\bf x}_4 \\
 & & = \int e^{i ({\bf x}_1 - {\bf x}_2) \boldeta} \sigma_{ik} \circ \rho_{[k,l]} \circ \sigma_{lj}({\bf x}_1,\boldeta) \, \dbar \boldeta ,
\end{eqnarray*}
with $\rho_{[k,l]} := \rho_{kl} - \rho_{kl}$, where the associativity of the Leibniz product has been used in the last step.

To simplify our notation let us define
\begin{equation}
 \sigma^{\rpa}_{ij,-6} := \sum_k \sigma_{ik} \circ \rho_{kj}, \quad \quad \sigma^{\rpa}_{ij,-10} := \sum_{k,l} \sigma_{ik} \circ \rho_{[k,l]} \circ \sigma_{lj} ,
\label{sigmaRPAij}
\end{equation}
both symbols represent classical pseudo-differential operators with ``homogeneous'' symbols which satisfy certain orthogonality constraints stated in the following lemma. 

\begin{lemma}
The composite symbols $\sigma^{\rpa}_{ij,-6}$ and $\sigma^{\rpa}_{ij,-10}$ belong to $S_{cl}^{-6}$ and $S_{cl}^{-10}$, respectively. They have asymptotic expansions 
\[
 \sigma^{\rpa}_{ij,p}({\bf x}_1,\boldeta) \sim \sum_{0 \leq n} \sigma^{\rpa}_{ij,p-n}({\bf x}_1,\boldeta), \quad \quad p=-6,-10
\]
with ``homogeneous'' symbols which satisfy 
\[
 \sigma^{\rpa}_{ij,p-n}({\bf x}_1, \lambda \boldeta) = \lambda^{p-n} \sigma^{\rpa}_{ij,p-n}({\bf x}_1,\boldeta), \quad \quad p=-6,-10
\]
for $\lambda \geq 1$ and $\eta$ greater some constant.
Furthermore, let $\boldeta$ be represented in spherical coordinates, i.e., 
$\tilde{\sigma}^{\rpa}_{p-n}({\bf x}_1,\eta,\Theta_1,\Phi_1) \equiv \sigma^{\rpa}_{p-n}({\bf x}_1,\boldeta)$, $p=-6,-10$, these symbols satisfy the orthogonality constraints
\begin{equation}
 \int_{0}^{2\pi} \int_{0}^{\pi} \tilde{\sigma}^{\rpa}_{ij,p-n}({\bf x},\eta,\Theta_1,\Phi_1) \, Y_{lm}(\Theta_1,\Phi_1) \sin \Theta_1 \, d\Theta_1 d\Phi_1 = 0
 \quad \mbox{for} \quad n-l \ \mbox{odd or} \ l>n.
\label{orthtau4}
\end{equation}
\label{lemmacompsymb}
\end{lemma}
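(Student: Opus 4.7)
The plan is to combine the asymptotic Leibniz product (\ref{Leibniz}) with the explicit polynomial structure of the homogeneous components furnished by Lemmas \ref{lemmatau} and \ref{lemmaV}. First, I would invoke the standard composition calculus: since $\sigma_{ik}\in S_{cl}^{-4}$ and $\rho_{kj}\in S_{cl}^{-2}$, the Leibniz product produces $\sigma_{ik}\circ\rho_{kj}\in S_{cl}^{-6}$, and summing over $k$ preserves the class, yielding $\sigma^{\rpa}_{ij,-6}\in S_{cl}^{-6}$. By associativity, a second Leibniz composition with $\sigma_{lj}\in S_{cl}^{-4}$ gives $\sigma^{\rpa}_{ij,-10}\in S_{cl}^{-10}$. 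The homogeneous asymptotic expansion is then obtained by inserting the homogeneous expansions of each factor into (\ref{Leibniz}) and collecting contributions by total degree; the term of degree $-6-n$ is the finite sum
\[
 \sigma^{\rpa}_{ij,-6-n}(\mathbf{x},\boldeta) \sim \sum_{k}\ \sum_{a+b+|\alpha|=n} \frac{1}{(2\pi i)^{|\alpha|}\alpha!}\bigl(\partial_{\boldeta}^{\alpha}\sigma_{ik,-4-a}\bigr)(\mathbf{x},\boldeta)\bigl(\partial_{\mathbf{x}}^{\alpha}\rho_{kj,-2-b}\bigr)(\mathbf{x},\boldeta),
\]
which is manifestly homogeneous in $\boldeta$ of the required degree for $\eta$ greater some constant.

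The main obstacle is the orthogonality constraints (\ref{orthtau4}). My plan is to rephrase (\ref{orthtau2}) and (\ref{orthtau3}) as a \emph{polynomial characterization}: the bound $l_1\le a$ together with $a-l_1$ even is equivalent to saying
\[
 \sigma_{ik,-4-a}(\mathbf{x},\boldeta) = |\boldeta|^{-4-2a}\,P_{ik,a}(\mathbf{x},\boldeta),
\]
where $P_{ik,a}(\mathbf{x},\cdot)$ is a linear combination of solid spherical harmonics of degrees $l_1\le a$ with $l_1\equiv a\pmod 2$, and analogously for $\rho_{kj,-2-b}$ with polynomial $Q_{kj,b}$ of degree $\le b$ and parity $(-1)^b$. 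Applying $\partial_{\boldeta}^{\alpha}$ via $\partial_{\eta_i}|\boldeta|^{s}=s\eta_i|\boldeta|^{s-2}$ and the fact that partial derivatives of harmonic polynomials are polynomials of strictly lower degree, each derivative produces a sum of terms of the form (even power of $|\boldeta|$) times (polynomial in $\boldeta$ of degree at most $l_1+|\alpha|$ with parity $(-1)^{a+|\alpha|}$). Multiplying by $\partial_{\mathbf{x}}^{\alpha}\rho_{kj,-2-b}$ (the $\mathbf{x}$-derivatives do not alter the $\boldeta$-polynomial structure) and invoking the Clebsch-Gordan selection rules for products of solid spherical harmonics, one obtains a polynomial in $\boldeta$ of total degree $\le l_1+l_2+|\alpha|\le n$ and overall parity $(-1)^{a+b+|\alpha|}=(-1)^n$. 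This is exactly the spherical-harmonic constraint $l\le n$ with $n-l$ even encoded by (\ref{orthtau4}).

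For $\sigma^{\rpa}_{ij,-10}$, I would apply this polynomial-tracking argument twice: first to the inner composition $\sigma_{ik}\circ\rho_{[k,l]}$ (using that $\rho_{[k,l]}=\rho_{kl}-\rho_{lk}$ inherits the representation of Lemma \ref{lemmaV} by linearity), producing an intermediate asymptotic symbol in $S_{cl}^{-6}$ whose homogeneous components satisfy the analogous orthogonality constraint with $p=-6$, and then composing with $\sigma_{lj}\in S_{cl}^{-4}$ via another Leibniz expansion. The polynomial degrees, which add under both multiplication and the Clebsch-Gordan expansion, and the parities, which multiply, accumulate in the same controlled manner; this yields (\ref{orthtau4}) for the $-10-n$ components. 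The genuine work is all in Step 3 above: once the polynomial characterization of the homogeneous symbols is in place, the orthogonality constraints propagate essentially automatically through the symbolic calculus.
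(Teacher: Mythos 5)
Your proposal is correct and follows essentially the same route as the paper's proof: both decompose the homogeneous components according to the constraints (\ref{orthtau2})--(\ref{orthtau3}) into solid harmonics $Z_{lm}(\boldeta)$ with controlled degree and parity, show that each $\boldeta$-derivative lowers the homogeneity by one while preserving this structure, and then propagate degree and parity through the spherical-harmonic product formula and the asymptotic Leibniz expansion (\ref{Leibniz}), with the $p=-10$ case handled by iterating the composition. The only caveat is notational: your single prefactor $|\boldeta|^{-4-2a}$ multiplying ``a linear combination of solid harmonics of degrees $l_1\le a$'' is homogeneous of degree $-4-a$ only if $P_{ik,a}$ is read as a homogeneous polynomial of degree $a$, i.e.\ each $Z_{l_1m}$ carries the compensating even power $|\boldeta|^{a-l_1}$ permitted by the parity condition $l_1\equiv a\pmod 2$ --- which is exactly what the paper writes out term by term as $\eta^{-(j+l+4)}Z_{lm}(\boldeta)$.
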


\begin{proof}
The proof is given for the symbol $ \sigma^{\rpa}_{ij,-6}$ and is completely analogous for $ \sigma^{\rpa}_{ij,-12}$.
According to our orthogonality constraints (\ref{orthtau2}), (\ref{orthtau3}), we get the asymptotic decompositions
\[
 \sigma_{ik,-4-j}({\bf x}, \boldeta) \sim \sum_{\substack{l \leq j\\ j-l \, \even}} \sum_{m=-l}^l
 \frac{1}{\eta^{j+l+4}} v_{j,lm}({\bf x}) Z_{lm}(\boldeta) ,
\]
\[
 \rho_{ik,-2-j}({\bf x}, \boldeta) \sim \sum_{\substack{l \leq j\\ j-l \, \even}} \sum_{m=-l}^l
 \frac{1}{\eta^{j+l+2}} w_{j,lm}({\bf x}) Z_{lm}(\boldeta) ,
\]
where $Z_{lm}$ denotes the homogeneous polynomial associated to a spherical harmonic function, i.e., $Z_{lm}(\boldeta) \equiv \eta^l Y_{lm}(\Theta_1,\Phi_1)$.
To get a better understanding of the asymptotic behaviour of the Leibniz product, let us consider the effect of $\boldeta$ derivatives on the ``homogeneous'' symbols 
$\sigma_{ik,-4-j}({\bf x}, \boldeta)$ a little closer. 
Taking partial derivatives $\partial_a$, $a=1,2,3$, one gets
\[
 \partial_{\eta_a} \sigma_{ik,-4-j}({\bf x}, \boldeta) \sim \sum_{\substack{l \leq j\\ j-l \, \even}} \sum_{m=-l}^l 
 \biggl( -\frac{j+l+4}{\eta^{j+l+6}} v_{j,lm}({\bf x}) \eta_a Z_{lm}(\boldeta) + \frac{1}{\eta^{j+l+4}} v_{j,lm}({\bf x}) \partial_a Z_{lm}(\boldeta) \biggr) ,
\]
and by taking into account the following decompositions
\[
 \eta_a Z_{lm}(\boldeta) = \sum_{\substack{l' \leq l+1\\ l+1-l' \, \even}} \sum_{m'=-l'}^{l'} c_{l'm'} \eta^{l+1-l'} Z_{l'm'}(\boldeta) ,
\]
\[
 \partial_a Z_{lm}(\boldeta) = \sum_{\substack{l' \leq l-1\\ l-1-l' \, \even}} \sum_{m'=-l'}^{l'} \tilde{c}_{l'm'} \eta^{l-1-l'} Z_{l'm'}(\boldeta) ,
\]
it can be written as
\begin{eqnarray*}
 \partial_{\eta_a} \sigma_{ik,-4-j}({\bf x}, \boldeta) & \sim & \sum_{\substack{l \leq j+1\\ j+1-l \, \even}} \sum_{m=-l}^l 
 \frac{1}{\eta^{j+l+5}} \tilde{v}_{j,lm}({\bf x}) Z_{lm}(\boldeta) \\
 & \sim & \frac{1}{\eta^{j+5}} \sum_{\substack{l \leq j+1\\ j+1-l \, \even}} \sum_{m=-l}^l \tilde{v}_{j,lm}({\bf x}) Y_{lm}(\Theta_1,\Phi_1) .
\end{eqnarray*}
Therefore, taking a partial derivative decreases the degree of homogeneity by one but preserves the orthogonality constraints.
By induction, we get
\[
 \partial^\alpha_{\eta} \sigma_{ik,-4-j}({\bf x}, \boldeta) \sim \frac{1}{\eta^{j+4+|\alpha|}} \sum_{\substack{l \leq j+|\alpha|\\ j+|\alpha|-l \, \even}} \sum_{m=-l}^l 
 \tilde{v}_{j,lm}({\bf x}) Y_{lm}(\Theta_1,\Phi_1) .
\]
Let us consider the product
\begin{eqnarray*}
 \lefteqn{\partial^\alpha_{\eta} \sigma_{in,-4-j}({\bf x}, \boldeta) \partial^\alpha_{\bf x} \sigma_{nk,-2-j'}({\bf x}, \boldeta)} \\
 & \sim & \frac{1}{\eta^{j+j'+6+|\alpha|}} \sum_{\substack{l \leq j+|\alpha|\\ j+|\alpha|-l \, \even}} 
 \sum_{\substack{l' \leq j'\\ j'-l' \, \even}}
 \sum_{m=-l}^l \sum_{m'=-l'}^{l'} \tilde{v}_{j,lm}({\bf x}) \partial^\alpha_{\bf x} \tilde{w}_{j',l'm'}({\bf x})  Y_{lm}(\Theta_1,\Phi_1)  Y_{l'm'}(\Theta_1,\Phi_1) \\
 & \sim & \frac{1}{\eta^{j+j'+6+|\alpha|}} \sum_{\substack{L \leq j+j'+|\alpha|\\ j+j'+|\alpha|-L \, \even}}
 \sum_{M=-L}^L \tilde{u}_{jj'\alpha,LM}({\bf x}) Y_{LM}(\Theta_1,\Phi_1) ,
\end{eqnarray*}
where we used the product formula
\[
 Y_{lm}(\Theta_1,\Phi_1)  Y_{l'm'}(\Theta_1,\Phi_1) = \sum_{\substack{L \leq l+l'\\ l+l'-L \, \even}} 
 \sum_{M=-L}^L C_{lm,l'm',LM}Y_{LM}(\Theta_1,\Phi_1)
\]
together with the orthogonality constraints (\ref{orthtau2}) where all possible combinations are listed in Table \ref{table1}.
The orthogonality constraints (\ref{orthtau4}) are an immediate consequence of the Leibniz product formula (\ref{Leibniz}).
\end{proof}

\begin{table}[t]
\label{table1}
\begin{center}
\caption{Orthogonality constraints for spherical harmonics of the Leibniz product (\ref{Leibniz}).} 
\vspace{0.5cm}
\begin{tabular}{|c|c|c|c|c|c|}
\hline \small
 $j+|\alpha|$ &  $j'$ &  $l$  &  $l'$ & $j+j'+|\alpha|$ & $L$ \tabularnewline \hline
     odd      &  odd  & odd   & odd   & even            & even \tabularnewline  
     odd      &  even & odd   & even  & odd             & odd  \tabularnewline  
     even     &  odd  & even  & odd   & odd             & odd  \tabularnewline  
     even     &  even & even  & even  & even            & even \tabularnewline  
\hline
\end{tabular}
\end{center}
\end{table}

\section{Application of singular analysis to CC theory}
In Section \ref{asympRPA} we have studied the asymptotic type of the right hand side of Eq.~(\ref{Ataun}) provided $\tau_n$ belongs to a certain generic asymptotic type which has been specified before, cf.~Section \ref{kernel-wedge}. 
What remains is the actual solution step which can be studied via an asymptotic parametrix and
corresponding Green operator according to the general scheme outlined in Section \ref{iterationscheme}.  
Concerning a general presentation of the underlying theory of pseudo-differential operators on manifolds with singularities, we refer to the monographs \cite{ES97,HS08,Schulze98}.

In the following, we want to consider an asymptotic parametrix for shifted edge degenerate Hamiltonian operators 
\begin{equation}
A_{\edge}:=H_{\edge} -\lambda , \quad \lambda \in \mathbb{R}
\label{Aedge}
\end{equation}
representing a (non) interacting electron pair. 
In ordinary Cartesian coordinates such Hamiltonians are of the form
\[
 H_{\edge} = - \tfrac{1}{2} \bigl( \Delta_1 + \Delta_2 \bigr) +V(\boldsymbol{x}_1,\boldsymbol{x}_2 )
\]  
where the potential term $V$ includes one and possibly two particle
interactions to be specified below. Expressed in our hyperspherical
coordinates the  Hamiltonian becomes 
\begin{eqnarray}
 H_{\edge} & = & r^{-2} \Big[ -\frac{1}{2 t^2} (- r
{\partial_{r}} )^2
 - \frac{h(r)}{2t^2} (- r {\partial_{r}})
 -\frac{1}{2} (r {\partial_t} )^2
 - \frac{5 r}{2t} (r {\partial_t} )  \nonumber\\
 & & - \frac{1}{2t^2 \cos^2 r} (r {\partial_{\theta_2}})^2
 - \frac{r \mbox{ctan} \, \theta_2}{2t^2 \cos^2 r} (r {\partial_
{\theta_2}} )
- \frac{1}{2t^2 \sin^2 \theta_2 \cos^2 r} (r {\partial_
{\phi_2}} )^2 \nonumber\\
 & & - \frac{r^2}{2t^2 \sin^2 r} \Delta_{X_1} + \frac{r}{t} v_{\edge} \Big] \label{H.edge}
\end{eqnarray}
with
\[
h(r) := 1+2r \, \mbox{tan} \, r-2r \, \mbox{ctan}\, r ,
\]
which means that the hyperradius $t$ is actually treated as yet another edge variable.
It should be mentioned, that the potential part $v_{\edge}$ is smooth with respect to $r$ up to $r=0$. The latter
assumption is crucial for the singular pseudo-differential calculus
to be applicable. It has been shown in Ref.~\cite{FFHS16}
that this is actually the case for common Coulomb potentials.

\subsection{Asymptotic parametrices for edge degenerate Hamiltonians} 
In Ref.~\cite{FFHS16}, we have derived an asymptotic parametrix
for the Hamiltonian (\ref{H.edge}) modulo Green operators in $L^{0}_G(M,\boldsymbol{g})$.
This actually represents the penultimate step in the asymptotic parametrix construction discussed in Ref.~\cite{FHS16}, cf.~Corollary 2.24 and Theorem 2.26 therein. For our purposes
it is sufficient to stop at this point because it already provides us with the desired insight into the asymptotic behaviour of iterated pair-amplitudes near coalescence points of electrons.
Furthermore we want to mention that the construction of the parametrix involves a regularization step, cf.~Appendix A in Ref.~\cite{FFHS16} for further details. This is justified because we apply the parametrix and Green operator to functions which belong to ${\cal W}^\infty_{\mathrm{loc}} \bigl(Y,{\cal K}^{\infty,\gamma}\big((S^2)^{\wedge}\big) \bigr)$. For the right hand side of (\ref{Ataun}) this follows from our discussion in Section \ref{asympRPA} and according to standard regularity theory, cf.~\cite{ES97,Schulze98}, this also follows for the iterated pair-amplitude
$\tau_{n+1}$.

A parametrix $P$ of a shifted edge degenerate Hamiltonian operator (\ref{Aedge}) belongs to a class of singular pseudo-differential
operators which can be written in the general form
\begin{equation}
 P = \sum_i \sigma' \varphi_i \Op_y (p) \varphi'_i \tilde{\sigma}'
 +(1-\sigma')P_{\interior}(1-\hat{\sigma}') ,
\label{A}
\end{equation}
with $p \in R^{-2}(Y \times \mathbb{R}^3, {\bf g})$ and
given cut-off functions $\hat{\sigma}' \prec \sigma' \prec \tilde{\sigma}'$. 
In order to calculate the parametrix it is convenient to make the following ansatz
for the parameter dependent Mellin pseudo-differential operator
\begin{equation}
 p(y,\eta) = \omega'_{1,\eta} r^2 p_M(y,\eta) \omega'_{0,\eta}
 + (1-\omega'_{1,\eta}) r^2 p_\psi(y,\eta) (1-\omega'_{2,\eta}) ,
\label{p-mellin}
\end{equation}
with cut-off functions $\omega'_2, \omega'_1, \omega'_0$ satisfying $\omega'_2 \prec \omega'_1 \prec \omega'_0$ 
where we write $\omega_\eta(r) := \omega ([\eta]r)$; here $[\eta]$ is any fixed strictly positive function in 
$C^\infty(\mathbb{R}^3)$ such that $[\eta] = |\eta|$ for $\eta \geq \epsilon$ for some $\epsilon >0$. For the parameter dependent Mellin part $p_M$ let us assume an asymptotic Mellin expansion
\[
 p_M(y,\eta) := \opm_M^{\gamma-3}(a_0^{(-1)})(y,\eta) + r   \opm_M^{\gamma-3}(a_1^{(-1)})(y,\eta) +r^2 \opm_M^{\gamma-3}(a_2^{(-1)})(y,\eta) +\dots ,
\]
where the Mellin pseudo-differential operators are of the form
\[
\opm_M^{\gamma-3} (a_i^{(-1)})(y,\eta)u(r)=\int_\mathbb{R}\int_0^\infty (r/r')^{-(7/2-\gamma +i\rho )}a_i^{(-1)}(r,r',7/2-\gamma +i\rho ,y,\eta)u(r')dr'/r'\dbar \rho , 
\]
$\dbar \rho=(2\pi )^{-1}d\rho,$ with Mellin amplitude function $a_i^{(-1)}(r,r',z,y,\eta)$, $i=0,1,2,\ldots,$ taking values in $L_{\textup{cl}}^{-2} (X;\Gamma _{1/2-\gamma } \times \mathbb{R}^6)$. 
This expression has to be interpreted as a Mellin oscillatory integral and representing a family of operators
\[
C_0^\infty (\mathbb{R}_+,C^\infty (X))\rightarrow C^\infty (\mathbb{R}_+,C^\infty (X)).
\]
in $L_{\textup{cl}}^{-2} (X^\wedge;\mathbb{R}^6).$
According to Remark 3 in \cite{FFHS16}, we ignore the second term of (\ref{p-mellin}) in the following considerations.
By a slight modification of the standard notation we incorporate into $p_M$ contributions from 
$R^{-2}_{M+G}(Y \times \mathbb{R}^3, {\bf g})$ as well.

The operator valued meromorphic symbols of the parametrix have asymptotic expansions 
\begin{align*}
 a_0^{(-1)} & \sim -2t^2 \biggl( \frac{1}{h_0} + \frac{r^2C_0}{h_0 \bigl( h_0-2(2w-7) \bigr)} - \frac{r P_{1,1}}{h_0 \bigl( h_0-2(2w-7) \bigr)}
 - r^2 \frac{2}{h_0 \bigl( h_0-2(2w-7) \bigr)} + \cdots \biggr) \\
 a_1^{(-1)} & \sim -2t^2 \biggl( \frac{2tZ_1}{h_0 \bigl( h_0 -(2w-6) \bigr)} + \frac{irC_1}{h_0 \bigl( h_0 -2(2w-7) \bigr)} - r \frac{10}{h_0 \bigl( h_0 -2(2w-7) \bigr)} 
 + \cdots \biggr) \\
 a_2^{(-1)} & \sim -2t^2 \biggl( \frac{(2tZ_1)^2} {\bigl( h_0 -2(2w-7) \bigr) \bigl( h_0 -(2w-6) \bigr) h_0} - \frac{1}{3} \frac{8(w-2) + \Delta_{S^2} -6tZ_2}{\bigl( h_0 -2(2w-7) \bigr) h_0} + \cdots \biggr)  
\end{align*}
with operator valued holomorphic symbol
\begin{equation}
 h_0 = (w-2)^2 - (w-2) +\Delta_{S^2} 
\label{h0} 
\end{equation}
and $C_0 := t^2 \tau^2+\Theta_2^2+\frac{\Phi_2^2}{\sin^2 \theta_2}$, $C_1 := -5t\tau -\cot \theta_2 \Theta_2$ and $P_{1,1} := 4it(r\tau)$, cf.~\cite{FFHS16}. The potential $V$ enters into the asymptotic parametrix
via the parameters $Z_1$, $Z_2$, which can be derived from its representation in hyperspherical coordinates 
\[
 V = \frac{1}{t r} v_{\edge}(r,\theta_1,\phi_1,\theta_2,\phi_2) 
\]
where $v_{\edge}$ is smooth with respect to $r$ up to $r=0$. 
With this, the coefficients are given by
\[
 Z_1 := v_{\edge}(0,\theta_1,\phi_1,\theta_2,\phi_2), \quad \quad Z_2 := -tE + \partial_r v_{\edge}(0,\theta_1,\phi_1,\theta_2,\phi_2) .
\]

Within the present work, we consider a non interacting Hamiltonian, cf.~(\ref{MP1}), i.e.
\[
 A_{\edge} := \mathfrak{f}_1 + \mathfrak{f}_2 - \epsilon_i -\epsilon_j 
\]
and assume a potential of the form
\[
 V({\bf x}_1,{\bf x}_2) = \mathfrak{v}_1({\bf x}_1) + \mathfrak{v}_2({\bf x}_2) :=
 \mathfrak{v}_C({\bf x}_1) + \mathfrak{v}_H({\bf x}_1) + \mathfrak{v}_x({\bf x}_1) + \mathfrak{v}_C({\bf x}_2) + \mathfrak{v}_H({\bf x}_2) + \mathfrak{v}_x({\bf x}_2) ,
\]
which consists of a Coulomb, Hartree and local exchange part. In this case $Z_1$ becomes zero.

\subsection{Asymptotic properties of iterated pair-amplitudes}
\label{iterpair}
Let us first consider the asymptotic behaviour of first-order M{\o}ller-Plesset pair-amplitudes near the electron-electron cusp.
Applying the asymptotic parametrix to the left of Eq.~(\ref{MP1}) yields
\begin{equation}
 \tau_{ij} = - G_l^{(ij)} \tau_{ij} - P^{(ij)} \mathfrak{Q} \frac{1}{|{\bf x}_1 - {\bf x}_2|} \Psi^{(1)}_{ij} .
\label{PMP1}
\end{equation}
The asymptotic expansion of the Green operator has been given in Ref.~\cite{FFHS16}. For the sake of the reader, we recapitulate the main result.

\begin{theorem}\label{FFHS16theorem}
The Green operator $g\in R_G^0 (Y\times\mathbb{R}^3,{\boldsymbol g})$, for weight $\frac{1}{2} < \gamma < \frac{3}{2}$, has a leading order asymptotic expansion of the form
\begin{eqnarray}
 g\hat{u}(y,\eta)&=&\sigma^\prime 2t^2\left[\left(1+rtZ_1+r^2\left(-2+\tfrac{1}{3}(tZ_1)^2+\tfrac{1}{3}tZ_2\right)\right){\cal P}_0{\cal Q}_{0,1}(\hat{u})(y,\eta)\right.\nonumber\\
 &&+\tfrac{1}{6}r^2 {\cal P}_0{\cal Q}_{0,2}(\hat{u})(y,\eta)+\left(\tfrac{1}{3}r+\tfrac{1}{6}tZ_1 r^2\right){\cal P}_1{\cal Q}_{1,1}(\hat{u})(y,\eta)\nonumber\\
 &&+\left.\tfrac{1}{5}r^2{\cal P}_2{\cal Q}_{2,1}(\hat{u})(y,\eta)-\tfrac{1}{30}r^2 {\cal P}_2{\cal Q}_{2,2}({\hat u})(y,\eta)\right]+{\cal O}(r^3)\label{Gasymp}
\end{eqnarray}
where $\varphi u\in {\cal W}^\infty_{\mathrm{comp}} \bigl(Y,{\cal K}^{\infty,\gamma}\big((S^2)^{\wedge}\big) \bigr)$ and $\hat{u}(r,\phi_1,\theta_1,\eta):=F_{y\rightarrow\eta}\phi_i u(r,\phi_1,\theta_1,y).$ Here, $P_l, l=0,1,2,\ldots,$
denote projection operators on subspaces which belong to eigenvalues $-l(l+1)$ of the Laplace-Beltrami operator on $S^2.$
\end{theorem}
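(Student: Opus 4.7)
My plan is to obtain this expansion by explicit composition $PA_{\edge}= I + G_l$ and then reading off the Green symbol from the meromorphic Mellin data supplied by the parametrix ansatz (\ref{p-mellin}). Since the theorem merely recapitulates a result of \cite{FFHS16}, the proof amounts to assembling ingredients already laid out in Section \ref{psiDO} together with a residue calculus on the meromorphic Mellin amplitudes $a_i^{(-1)}$. The starting observation is that $G_l \in R_G^0(Y \times \mathbb{R}^3, \mathbf{g})$, so its symbol is supported on a finite-dimensional asymptotic type and is entirely determined by the poles of the $a_i^{(-1)}$ inside the strip bounded by $\Gamma_{(7/2)-\gamma}$ and the ``weight shifted'' contour corresponding to the target asymptotic type.

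First I would diagonalize the operator-valued holomorphic part $h_0 = (w-2)^2 - (w-2) + \Delta_{S^2}$ against the spherical harmonic eigenspaces. On the subspace associated with eigenvalue $-l(l+1)$ of $\Delta_{S^2}$, the substitution $u=w-2$ turns $h_0$ into $u^2-u-l(l+1)$, whose zeros lie at $u=l+1$ and $u=-l$, i.e.\ at $w = l+3$ and $w = 2-l$. These are precisely the simple poles of $1/h_0$ on the $P_l$-subspace, and for $\tfrac{1}{2}<\gamma<\tfrac{3}{2}$ contour shifting from $\Gamma_{(7/2)-\gamma}$ toward larger real part sweeps through $w=2-l$ for $l=0,1,2$ in the range relevant to the orders $r^0, r^1, r^2$ that are kept. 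Each such pole yields a residue of the form $P_l \cdot \mathcal{Q}_{l,k}(\hat{u})(y,\eta)$, with $\mathcal{Q}_{l,k}$ encoding evaluation of the Mellin-transformed argument at the shifted spectral point together with $\eta$-dependent factors from the parameter dependence.

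Next, I would feed the three leading coefficients $a_0^{(-1)}, a_1^{(-1)}, a_2^{(-1)}$ (displayed before the theorem) into this residue procedure term by term. The $1/h_0$ contribution of $a_0^{(-1)}$ generates the leading bracket $\bigl(1+rtZ_1 + r^2(-2+\tfrac{1}{3}(tZ_1)^2+\tfrac{1}{3}tZ_2)\bigr)\mathcal{P}_0\mathcal{Q}_{0,1}$, while the $1/(h_0(h_0-2(2w-7)))$ contributions, which have double factors in $w$, are responsible for the $\mathcal{P}_0\mathcal{Q}_{0,2}$ piece via a second-order residue. The $a_1^{(-1)}$ terms, multiplied by $r$ and involving $2tZ_1/(h_0(h_0-(2w-6)))$, produce the mixed $\mathcal{P}_1\mathcal{Q}_{1,1}$ contribution once the residue at the $\mathcal{P}_1$-pole $w=1$ is taken. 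Finally, the $(2tZ_1)^2$ and $8(w-2)+\Delta_{S^2}-6tZ_2$ parts of $a_2^{(-1)}$ contribute the remaining $\mathcal{P}_2$-pieces after projection onto the $l=2$ eigenspace, where the relevant denominators vanish and a residue needs to be extracted. All other terms are pushed into the $\mathcal{O}(r^3)$ remainder by simply observing that further factors of $r$ in the Mellin expansion of $p$ strictly improve the $r$-order of the resulting Green operator contribution.

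The main obstacle, I expect, is the bookkeeping: one must carefully track simple versus higher-order residues, verify that the $(1-\omega'_1)\cdots(1-\omega'_2)$ piece of (\ref{p-mellin}) contributes nothing to $G_l$ modulo smoothing (which is ignored by Remark~3 of \cite{FFHS16}), and confirm that the regularization step alluded to after (\ref{Aedge}) does not introduce spurious asymptotic terms. The justification that the Green operator is applicable in the first place, namely that the right-hand side of (\ref{Ataun}) lies in $\mathcal{W}^\infty_{\mathrm{loc}}(Y,\mathcal{K}^{\infty,\gamma}((S^2)^\wedge))$, is provided by the classical-pseudo-differential analysis of Section \ref{asympRPA}, so one only has to invoke Lemma \ref{lemmatau} and Lemma \ref{lemmacompsymb} together with Corollary \ref{Sclassical} at the appropriate point.
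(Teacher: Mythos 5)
The paper does not actually prove this statement: it is explicitly a recapitulation of the main result of \cite{FFHS16} (``The asymptotic expansion of the Green operator has been given in Ref.~\cite{FFHS16}. For the sake of the reader, we recapitulate the main result.''), so there is no internal proof to compare against. Your outline is the natural reconstruction and is consistent with the machinery the paper sets up: the Green remainder of $PA_{\edge}=I+G_l$ is governed by the meromorphic Mellin amplitudes $a_i^{(-1)}$, your diagonalization of $h_0$ over the $\Delta_{S^2}$-eigenspaces correctly locates the poles at $w=l+3$ and $w=2-l$ (matching the paper's later remark that the parametrix poles sit at $3+m+l$ and $2+m-l$), and the term-by-term residue extraction is indeed how such expansions are produced in this calculus.

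That said, as a proof of the theorem your proposal has a genuine gap: the content of the statement is precisely the explicit coefficients $1$, $tZ_1$, $-2+\tfrac13 (tZ_1)^2+\tfrac13 tZ_2$, $\tfrac16$, $\tfrac13$, $\tfrac16 tZ_1$, $\tfrac15$, $-\tfrac{1}{30}$, and none of these is derived; the residue computations are described but never carried out, so the argument remains a program. Two concrete slips would also derail the bookkeeping if executed as written. First, with the weight line $\Re w = \tfrac72-\gamma\in(2,3)$ and the Mellin factor $r^{-w}$, the poles relevant for the $r\to 0$ asymptotics are $w=2-l$, which lie to the \emph{left} of the contour; one shifts toward \emph{smaller} real part, not larger. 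Second, the $\tfrac13 r\,{\cal P}_1{\cal Q}_{1,1}$ term cannot come solely from the $2tZ_1/(h_0(h_0-(2w-6)))$ piece of $a_1^{(-1)}$ as you claim, since its leading coefficient $\tfrac13$ carries no factor of $Z_1$ (only the subleading $\tfrac16 tZ_1 r^2$ does); the $Z_1$-free part must originate elsewhere in the Mellin expansion. Finally, the closing paragraph about the right-hand side of (\ref{Ataun}) lying in ${\cal W}^\infty_{\loc}(Y,{\cal K}^{\infty,\gamma}((S^2)^\wedge))$ concerns the \emph{application} of the Green operator, not the proof of its asymptotic expansion, and is therefore extraneous here.
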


For explicit expressions of the terms ${\cal Q}_{n,m}(\hat{u})(y,\eta)$
depending on edge variables and covariables, we refer to Ref.~\cite{FFHS16}.
Let us first consider the case of relative angular momentum $l=0$, where we have the asymptotic expansion
\begin{eqnarray*}
 {\cal P}_0 G_l^{(ij)} \tau_{ij}(r,y) & = & \int e^{iy \eta} {\cal P}_0 g_{ij}(y,\eta) \hat{\tau}_{ij}(\eta) \, \dbar \eta \\
 & = & \sigma' 2t^2 \int e^{iy \eta} {\cal P}_0 {\cal Q}_{0,1}(\hat{\tau}_{ij})(y,\eta) \, \dbar \eta + {\cal O}(r^2) ,
\end{eqnarray*}
with
\[
 \int e^{iy \eta} {\cal Q}_{0,1}(\hat{\tau}_{ij}) \, \dbar \eta
 = \int e^{iy \eta} M \bigl( \opm_M^{\gamma-1}(h_1^{(0)}) \tilde{\sigma} \hat{\tau}_{ij} \bigr) (0) \, \dbar \eta 
 + \int e^{iy \eta} M \bigl( \tilde{\sigma}' \sigma \opm_M^{\gamma-1}(a_{\om}) \tilde{\sigma} \hat{\tau}_{ij} \bigr) (0) \, \dbar \eta ,
\]
depending on the Mellin symbols $h_1^{(0)}(\omega) := \tfrac{1}{2t^2} ( w^2-w)$ and $a_{\om}(\omega)$, the latter denotes the symbol of the Hamiltonian in first-order M{\o}ller-Plesset perturbation theory.
We note that
\begin{eqnarray*}
 \int e^{iy \eta} M \bigl( \tilde{\sigma}' \sigma \opm_M^{\gamma-1}(a_{\om}) \tilde{\sigma} \hat{\tau}_{ij} \bigr) (0) \, \dbar \eta
 & = & M \bigl( \tilde{\sigma}' \sigma r^2(-\tfrac{1}{2} \Delta_1 +\mathfrak{v}_1 -\tfrac{1}{2} \Delta_2 + \mathfrak{v}_2 - \varepsilon_i - \varepsilon_j) \tilde{\sigma} \tau_{ij} \bigr)(0) \\
 & = & - M \bigl( \tilde{\sigma}' \sigma r^2 \mathfrak{Q} \frac{1}{|{\bf x}_1 - {\bf x}_2|} \Psi^{(1)}_{ij} \bigr) (0) \\
 & = & - M \bigl( \tilde{\sigma}' \sigma \mathfrak{Q} \frac{1}{|{\bf x}_1 - {\bf x}_2|} \Psi^{(1)}_{ij} \bigr) (2) ,
\end{eqnarray*}
which is due to the fact that the Mellin operator can be expressed as a local differential operator,
with cut-off functions $\tilde{\sigma}' \sigma \prec \tilde{\sigma}$, such that 
\[
 \bigl( -\tfrac{1}{2} \Delta_1 +\mathfrak{v}_1 -\tfrac{1}{2} \Delta_2 + \mathfrak{v}_2 - \varepsilon_i - \varepsilon_j \bigr) \tau_{ij} 
 = \mathfrak{Q} \frac{1}{|{\bf x}_1 - {\bf x}_2|} \Psi^{(1)}_{ij} ,
\]
is satisfied on the support of $\tilde{\sigma}' \sigma$.
The remaining term becomes
\begin{eqnarray*}  
 \int e^{iy \eta} M \bigl( \opm_M^{\gamma-1}(h_1^{(0)}) \tilde{\sigma} \hat{\tau}_{ij} \bigr) (0) \, \dbar \eta & = &
 M \bigl( \opm_M^{\gamma-1}(\tfrac{1}{2t^2}(w^2-w)) \tilde{\sigma} \tau_{ij} \bigr) (0) \\
 & = & \left. \tfrac{1}{2t^2}(w^2-w) M (\tilde{\sigma} \tau_{ij}) \right|_{w=0} \\
 & = & \left. \tfrac{1}{2t^2} \int_0^\infty r^{w-1} \biggl[ (-r \partial_r )^2 (\tilde{\sigma} \tau_{ij}) - (-r \partial_r ) (\tilde{\sigma} \tau_{ij}) \biggr] \right|_{w=0} dr \\
 & = & \tfrac{1}{2t^2} \int_0^\infty \biggl[ \partial_r \bigl( r \partial_r (\tilde{\sigma} \tau_{ij}) \bigr) + \partial_r (\tilde{\sigma} \tau_{ij}) \biggr] dr \\
 & = & - \tfrac{1}{2t^2} \left. \tau_{ij} \right|_{r=0} .
\end{eqnarray*}
It remains to calculate the action of the parametrix on the right hand side, i.e.,
\begin{eqnarray*}
 - P^{(ij)} \mathfrak{Q} \frac{1}{|{\bf x}_1 - {\bf x}_2|} \Psi^{(1)}_{ij} 
 & \sim & - \sigma \Op_y(p_M) \sigma' \mathfrak{Q} \frac{1}{|{\bf x}_1 - {\bf x}_2|} \Psi^{(1)}_{ij} \\ 
 & \sim & - \sigma r^2 \opm_M^{\gamma-3}(-2t^2h_0^{-1}) \sigma' \mathfrak{Q} \frac{1}{|{\bf x}_1 - {\bf x}_2|} \Psi^{(1)}_{ij} \\
 & \sim & \sigma r^2 2t^2 M^{-1}_\Gamma \frac{1}{(w-2)(w-3)} M \left( \sigma' \mathfrak{Q} \frac{1}{|{\bf x}_1 - {\bf x}_2|} \Psi^{(1)}_{ij} \right)(w) .
\end{eqnarray*}
Since $\tfrac{1}{2} < \gamma < \tfrac{3}{2}$, the integration has to performed along a line $\Gamma$, parallel to the complex axis, with $2 < \Re \Gamma < 3$.
The right hand side satisfies
\[
 \mathfrak{Q} \frac{1}{|{\bf x}_1 - {\bf x}_2|} \Psi^{(1)}_{ij} = \tfrac{1}{\sqrt{2}rt} \left. \Psi^{(1)}_{ij} \right|_{r=0} + {\cal O} (r^0) ,
\]
which means that the poles of its Mellin transform 
\begin{equation}
 M \left( \tilde{\sigma}' \mathfrak{Q} \frac{1}{|{\bf x}_1 - {\bf x}_2|} \Psi^{(1)}_{ij} \right)(w)
\label{MPsi}
\end{equation}
are located at integer values $w_0 \leq 1$.  
Therefore it is convenient to choose integration contours, depicted in Fig.~\ref{fig1}, such that the whole expression splits into
\begin{eqnarray}
\label{contour}
 - P^{(ij)} \mathfrak{Q} \frac{1}{|{\bf x}_1 - {\bf x}_2|} \Psi^{(1)}_{ij} 
 & \sim & \sigma' r^2 2t^2 \oint_{\Gamma_1} \frac{r^{-w}}{(w-2)(w-3)} M \left( \tilde{\sigma}' \mathfrak{Q} \frac{1}{|{\bf x}_1 - {\bf x}_2|} \Psi^{(1)}_{ij} \right)(w) \; \dbar w \\ \nonumber
 & & + \sigma' r^2 2t^2 \oint_{\Gamma_2} \frac{r^{-w}}{(w-2)(w-3)} M \left( \tilde{\sigma}' \mathfrak{Q} \frac{1}{|{\bf x}_1 - {\bf x}_2|} \Psi^{(1)}_{ij} \right)(w) \; \dbar w \\ \nonumber
 & \sim & \tfrac{1}{2} \sigma' (\sqrt{2}rt) \left. \Psi^{(1)}_{ij} \right|_{r=0} - 2t^2 \sigma'  M \left( \tilde{\sigma}' \mathfrak{Q} \frac{1}{|{\bf x}_1 - {\bf x}_2|} \Psi^{(1)}_{ij} \right)(2) .
\end{eqnarray}
The last term cancels with the corresponding term from the Green operator, yielding the asymptotic expansion 
\[
 {\cal P}_0 \tau_{ij} \sim \left. \tau_{ij} \right|_{r=0} + \tfrac{1}{2} (\sqrt{2}rt) \left. \Psi^{(1)}_{ij} \right|_{r=0} + {\cal O}(r^2) .
\]
In the same manner it is possible to calculate higher order contributions. With increasing order in $r$, the poles of the parametrix are shifted to the right. Again no overlap appears
between the poles of the parametrix and the poles of (\ref{MPsi}),
cf.~\cite{FFHS16} for further details. Therefore no logarithmic terms enter into the asymptotic expansion of the pair-amplitude.

\begin{figure}[t]
\begin{center}
\includegraphics[scale=0.5]{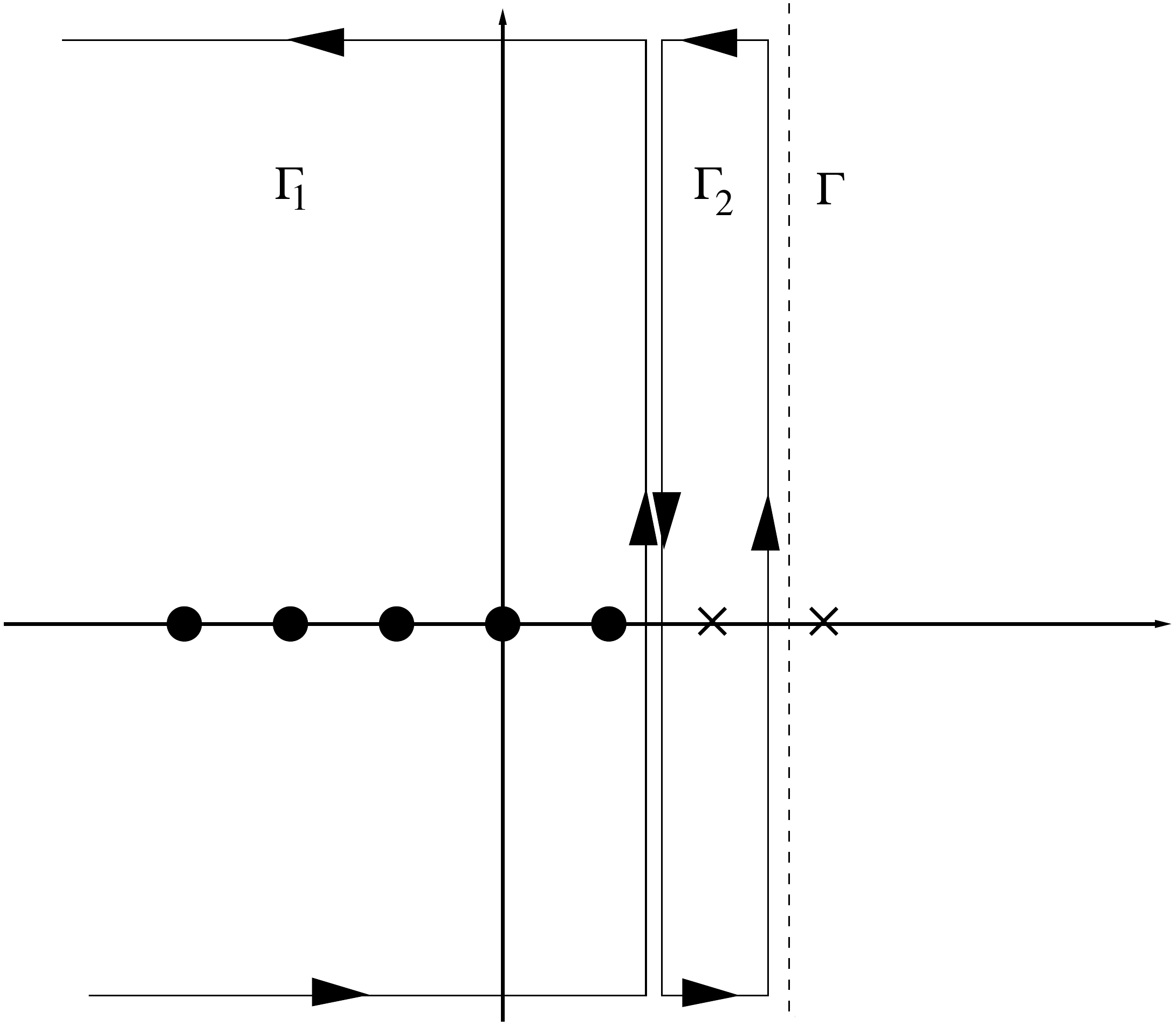}
\end{center}
\vspace{0.5cm}
\caption{Contours of integration used in Eq.~(\ref{contour}). Dots and crosses indicate poles of the meromorphic function (\ref{MPsi}) and (\ref{h0}), respectively.}
\label{fig1}
\end{figure}

Similar calculations can be performed for higher angular momentum values $l$ as well. Generally, poles of the parametrix are at $3+m+l$ and $2+m-l$, with $m=0,1,2,\ldots$, The second one moves to the left, however there is no coalescence with poles of (\ref{MPsi})
because of
\[
 {\cal P}_l \mathfrak{Q} \frac{1}{|{\bf x}_1 - {\bf x}_2|} \Psi^{(1)}_{ij} = {\cal O} (r^{l-1}) ,
\]
which means that the poles are located at $1-l-n$, with $n=0,1,2,\ldots$ Inclusion of the ladder terms (\ref{ladder}) and (\ref{RPA1}) on the right
hand side does not substantially alter the previous discussion, cf.~Propositions \ref{kernelasymp}, \ref{Wremainder} and Lemma \ref{lemmacompsymb},
leading to similar conclusions. Even so if we add the remaining
RPA terms to the right hand side these conclusions remain true. 
Let us summarize our considerations in the following
lemmas concerning iterated pair-amplitudes.

\begin{lemma}
\label{taun}
The iterated pair-amplitudes $\tau_n$, $n=0,1,2,\ldots$ of the RPA-CC equation or any other related model,
represented in hyperspherical coordinates $\tilde{\tau}_n$ belong to ${\cal S} \bigl( \mathbb{R}_+,H^{\infty,\gamma_1}_P(B) \bigr)$
and are of the generic asymptotic type without logarithmic terms. 
\end{lemma}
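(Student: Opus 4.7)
The plan is to argue by induction on the iteration index $n$, following the template of the explicit computation carried out for the MP1 case immediately preceding the lemma statement. The base case $n=0$ (and more generally the case where the right hand side consists only of $V^{(2)}_{[i,j]}$) has essentially already been treated: applying the parametrix $P^{(ij)}$ produces a Mellin integral whose poles are the zeros of $h_0$ in (\ref{h0}), while the Mellin transform $M\bigl(\tilde{\sigma}'\mathfrak{Q}|{\bf x}_1-{\bf x}_2|^{-1}\Psi^{(1)}_{ij}\bigr)(w)$ has its poles located at integers $w_0 \leq 1$ (and more generally at $w_0 \leq 1-l$ on the subspace ${\cal P}_l$). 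The two sets of poles are disjoint, the residue contribution on $\Gamma_2$ cancels precisely against the Mellin term in the Green operator expansion of Theorem \ref{FFHS16theorem}, and the remaining residues are all simple, so no logarithmic terms appear. The resulting $\tau_0$ belongs to ${\cal S}\bigl(\mathbb{R}_+,H^{\infty,\gamma_1}_P(B)\bigr)$ with the generic Taylor asymptotic type.

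For the inductive step, I would assume $\tilde{\tau}_n$ has the asserted properties and analyze the right hand side of (\ref{Ataun}). First, Lemma \ref{lemmatau} identifies $\tau_n$ with the kernel of a classical pseudo-differential operator whose symbol lies in $S^{-4}_{cl}$ and satisfies the orthogonality constraints (\ref{orthtau2}); Lemma \ref{lemmaV} does the analogous job for the effective interaction potentials. The linear terms $f_{V^{(2)}\tau_n}$ and $f_{V^{(1)}\tau_n}$ are handled by combining these with the asymptotic expansion (\ref{Vasymp}) of $V^{(2)}$ (after Taylor expansion of the orbitals) and the Taylor structure (\ref{asymp3}) of $\tau_n$; the products preserve the generic Taylor asymptotic type. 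The nonlocal RPA terms $f_{V^{(2)}\circ\tau_n}$ and $f_{\tau_n\circ V^{(2)}\circ\tau_n}$ are exactly the composite symbols $\sigma^{\rpa}_{ij,-6}$ and $\sigma^{\rpa}_{ij,-10}$ of (\ref{sigmaRPAij}); by Lemma \ref{lemmacompsymb} they belong to $S^{-6}_{cl}$ and $S^{-10}_{cl}$ respectively, and their homogeneous components satisfy the orthogonality constraints (\ref{orthtau4}). Invoking Proposition \ref{kernelasymp} with these constraints, the logarithmic terms in the kernel asymptotic expansion have coefficients multiplied by vanishing angular integrals, so the whole right hand side of (\ref{Ataun}) is again of generic Taylor asymptotic type, with remainder controlled by Proposition \ref{Wremainder}; the exit behaviour in $t$ follows from Corollary \ref{Sclassical} applied to each term, using that the (\ref{xbound}) estimate is inherited from the Schwartz-type decay of orbitals and of $\tau_n$.

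Once the right hand side is placed in the correct wedge-Schwartz space with generic Taylor asymptotic type, the final step is to apply the parametrix $P$ as in the base case. The Mellin transforms of each individual contribution on the right hand side now have poles located further to the left than in the MP1 case, because the improved symbol class ($S^{-6}_{cl}$ or $S^{-10}_{cl}$ rather than $S^{-2}_{cl}$) translates into additional vanishing orders in $r$ near the edge. One verifies, as in Fig.~\ref{fig1}, that the parametrix poles coming from $h_0^{-1}$ and $(h_0-2(2w-7))^{-1}$, etc., do not collide with any pole of the Mellin transform of the right hand side on the subspace ${\cal P}_l$ for any $l$. Hence every residue is simple, so no logarithm is created; the Green operator contribution $G_l\tau_{n+1}$ in (\ref{PAtaun}) maps into the same generic asymptotic type by Theorem \ref{FFHS16theorem}, and the resulting $\tau_{n+1}$ has all the asserted properties.

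The hard part, and the only place where the argument is not a mechanical combination of previously proven facts, is the pole-separation verification in the inductive step: one must rule out, uniformly in $l$ and in the iteration index $n$, any coincidence between parametrix poles at $w=3+m+l$ and $w=2+m-l$ and the poles of the Mellin transforms of the composite symbols built from $\tau_n$, $V^{(2)}$, and their $\circ$-products. This is a bookkeeping argument that tracks how each composition and each multiplication by a potential shifts the admissible pole locations in the complex weight variable; the orthogonality constraints (\ref{orthtau2})--(\ref{orthtau4}) are what guarantee that only the ``right parity'' of poles can occur, and this parity is precisely what avoids the dangerous integer overlaps that would otherwise produce $\log r$ terms.
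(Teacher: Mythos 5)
Your proposal follows essentially the same route as the paper: the explicit MP1 parametrix/Green-operator computation with the contour-integration and pole-separation argument serves as the base case, and the RPA terms are absorbed by appealing to the symbol-class Lemmas \ref{lemmatau}, \ref{lemmaV}, \ref{lemmacompsymb} together with Propositions \ref{kernelasymp} and \ref{Wremainder}, the orthogonality constraints killing the coefficients of the would-be logarithmic terms. Your inductive formulation is in fact more explicit than the paper's own treatment, which simply asserts after the MP1 calculation that inclusion of the ladder and RPA terms "does not substantially alter the previous discussion"; the pole-separation bookkeeping you flag as the hard step is likewise only asserted, not carried out, in the paper.
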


\begin{lemma}
Acting with the parameterix $P^{(ij)}$ on the RPA interactions
(\ref{RPA2}) and (\ref{RPA3}) corresponds to a mapping of the
symbols $\sigma^{\rpa}_{ij,-6}$ and $\sigma^{\rpa}_{ij,-10}$, cf.~(\ref{sigmaRPAij}),
from the symbol classes $S_{cl}^{-6}$ and $S_{cl}^{-10}$ into  
$S_{cl}^{-8}$ and $S_{cl}^{-12}$, respectively. This mapping
preserves the orthogonality constraint (\ref{orthtau4}) with $p=-8,-12$.
\label{propRPA}
\end{lemma}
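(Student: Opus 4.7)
The plan is to identify the action of the parametrix $P^{(ij)}$ on a kernel function $f({\bf x},{\bf z})$ with the action of a classical pseudo-differential operator of order $-2$ on the corresponding symbol $\sigma({\bf x},\boldeta)$, and then to invoke the Leibniz product formula (\ref{Leibniz}) together with the parity bookkeeping of Table~\ref{table1}. Once this identification is in place, the order reduction follows immediately from the standard composition calculus, and preservation of the orthogonality constraints becomes a routine inductive verification.

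First I would rewrite the edge-degenerate Hamiltonian (\ref{Aedge}) in the Cartesian coordinates $({\bf x},{\bf z})$ of (\ref{xzcoord}); a short computation gives
\[
 A_{\edge} = -\Delta_{\bf z} - \nabla_{\bf x}\cdot\nabla_{\bf z} - \tfrac{1}{2}\Delta_{\bf x} + \mathfrak{v}_1({\bf x}) + \mathfrak{v}_2({\bf x}-{\bf z}) - \epsilon_i - \epsilon_j .
\]
Partial Fourier transform in ${\bf z}$ turns this differential operator into an operator $\mathcal{A}$ on symbols $\sigma({\bf x},\boldeta)$ with principal part $|\boldeta|^2\sigma$, an order-one correction $-i\boldeta\cdot\nabla_{\bf x}\sigma$, and order-zero corrections (including a Taylor expansion of $\mathfrak{v}_2({\bf x}-{\bf z})$ in ${\bf z}$ analogous to the one performed in Proposition~\ref{vjclass}). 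The formal Neumann inverse of $\mathcal{A}$ therefore produces a symbol $\pi^{(ij)}({\bf x},\boldeta)\sim\sum_{n\geq 0}\pi^{(ij)}_{-2-n}({\bf x},\boldeta)$ in $S_{cl}^{-2}(\mathbb{R}^3\times\mathbb{R}^3)$, with consistency with the edge-degenerate parametrix of Section~5 holding modulo Green operators, i.e.~modulo $S_{cl}^{-\infty}$. An inductive angular analysis then shows that every homogeneous component $\pi^{(ij)}_{-2-n}$ admits a spherical harmonic expansion supported on $l\leq n$ with $n-l$ even: the leading term $|\boldeta|^{-2}$ is radial, the correction $-i\boldeta\cdot\nabla_{\bf x}$ is linear in $\boldeta$ and therefore carries $l=1$, and each further Neumann step adds either such a linear factor or a smooth potential factor whose parity structure is already controlled by Proposition~\ref{vjclass}.

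Second, composing $P^{(ij)}$ with the RPA operators of symbols $\sigma^{\rpa}_{ij,-6}\in S_{cl}^{-6}$ and $\sigma^{\rpa}_{ij,-10}\in S_{cl}^{-10}$ via the Leibniz product (\ref{Leibniz}) immediately yields composite symbols in $S_{cl}^{-8}$ and $S_{cl}^{-12}$, which is the claimed order reduction. For the orthogonality constraint (\ref{orthtau4}) at the $N$-th asymptotic level, each contribution to the component of order $-p-2-N$ takes the form $\partial_\boldeta^\alpha\pi^{(ij)}_{-2-n}\cdot\partial_{\bf x}^\alpha\sigma^{\rpa}_{ij,p-m}$ with $n+m+|\alpha|=N$. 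Since every $\boldeta$-derivative shifts angular parity by one whereas $\partial_{\bf x}^\alpha$ acts only on the ${\bf x}$-dependent coefficients, the same product-rule bookkeeping of Table~\ref{table1} that was used in the proof of Lemma~\ref{lemmacompsymb} shows that the resulting symbol contains only spherical harmonics $Y_{lm}$ with $l\leq N$ and $N-l$ even, which is precisely (\ref{orthtau4}) with $p=-8$ or $p=-12$.

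The main obstacle is the rigorous identification of the classical H\"ormander symbol $\pi^{(ij)}$ with the edge-degenerate parametrix of Section~5, whose natural habitat is the singular pseudo-differential algebra of \cite{CS16,FFHS16}. One must verify that when applied to functions in ${\cal W}_{\loc}^\infty({\cal Y},{\cal K}^{\infty,\gamma}(X^\wedge))$ of the appropriate asymptotic type, the two constructions agree modulo smoothing Green operators. Because $A_{\edge}$ reduces to an ordinary elliptic differential operator with smooth coefficients away from the coalescence edge, and the asymptotic type of the right-hand side of (\ref{Ataun}) is already pinned down by Lemma~\ref{taun} together with Lemma~\ref{lemmacompsymb}, this bridge is provided by standard elliptic regularity; the parity verification is then a routine induction.
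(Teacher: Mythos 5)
Your strategy is genuinely different from the paper's. The paper never leaves the edge-degenerate Mellin calculus here: the parametrix is the operator (\ref{A}) with local amplitude $\omega'_{1,\eta}r^2p_M(y,\eta)\omega'_{0,\eta}$, it is applied term by term to the conormal asymptotics of the RPA kernels (available through Lemma \ref{lemmacompsymb} and Proposition \ref{kernelasymp}), and the order shift $S_{cl}^{-6}\to S_{cl}^{-8}$, $S_{cl}^{-10}\to S_{cl}^{-12}$ is read off from the explicit factor $r^2$ together with the contour computation (\ref{contour}): a pole of the Mellin transform of the data at $w=-j$ produces $r^{j+2}$, the zeros of (\ref{h0}) sit at $w=3+m+l$ and $w=2+m-l$ and never coalesce with the poles of the data, so no logarithms are created and the angular structure is preserved because $h_0$ is diagonal on spherical harmonics; Propositions \ref{aj0class}, \ref{aj1class} and \ref{kernelasymp} then translate the $r$-asymptotics back into symbol classes. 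Your second step --- the Leibniz composition of $\pi^{(ij)}\in S_{cl}^{-2}$ with $\sigma^{\rpa}_{ij,p}$ and the parity count $l\le n+|\alpha|$, $n+|\alpha|-l$ even --- is sound and is essentially the argument of Lemma \ref{lemmacompsymb} transplanted to the new pair of factors; granted the claimed parity structure of $\pi^{(ij)}_{-2-n}$ (which your inductive description of the Neumann corrections does support), the constraint (\ref{orthtau4}) with $p=-8,-12$ follows.

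The genuine gap is the first step, which you flag as ``the main obstacle'' and then discharge with ``standard elliptic regularity''. That does not close it. The difference between the edge-degenerate parametrix $P^{(ij)}$ of Section 5 and your Fourier-quantized Neumann symbol $\Op(\pi^{(ij)})$ is not a priori an operator with symbol in $S^{-\infty}_{cl}$: composing the difference with $A_{\edge}$ yields a difference of remainders, one of which is a Green (plus Mellin-with-asymptotics) operator of the edge calculus, and such operators are smoothing only in the edge-Sobolev scale --- their kernels carry exactly the conormal $r^{-p_j}\log^k r$ asymptotics the lemma is meant to control. Interior elliptic regularity identifies the two constructions away from $r=0$ but says nothing at the coalescence edge, which is the only place where the statement has content; what is actually needed is a Mellin--Fourier quantization theorem identifying $r^2$ times the Mellin amplitude with a classical symbol modulo operators mapping the generic asymptotic type into $C^\infty$, and this is precisely what the paper's explicit residue computation substitutes for. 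A second, smaller point: your Neumann inverse of $|\boldeta|^2-i\boldeta\cdot\nabla_{\bf x}-\tfrac12\Delta_{\bf x}+V-\epsilon_i-\epsilon_j$ is only an asymptotic expansion for $|\boldeta|\to\infty$; for bounded $|\boldeta|$ the operator need not be invertible on the relevant ${\bf x}$-space, so $\pi^{(ij)}$ is again only a parametrix, and the discrepancy from the operator actually applied in (\ref{Ataun}) lands in the same uncontrolled class.
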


\subsection{Proof of the main theorem}
\label{proofmaintheorem}
The proof of Theorem \ref{theorem1} and its Corollaries \ref{corollaryasymp} and \ref{corollaryfiltration} is a simple consequence of Lemmas
\ref{lemmatau}, \ref{lemmaV}, \ref{lemmacompsymb}, \ref{taun} and \ref{propRPA}. The statements of the theorem follow from
these lemmas by resolving the iterated pair-amplitudes into Goldstone diagrams and taking into account well known properties of the
calculus of classical pseudo-differential operators.

\section{Besov regularity of RPA diagrams}
\label{Besovregularity}
In Section \ref{RPAdiag} we have shown that RPA diagrams
can be considered within the algebra of classical pseudo-differential operators. The correspondence between classical symbols and kernel functions, cf.~Proposition \ref{SteinPr},
enables us to study the asymptotic behaviour of RPA diagrams
near coalescence points of electrons and to study adaptive
approximation schemes like best $N$-term approximation in hierarchical wavelet bases.
Previous results, presented in Ref.~\cite{FHS07},  on the best N-term approximation of two-particle correlation functions of Jastrow factors can be literally 
transfered to pair-amplitudes. What remains is a corresponding discussion for general RPA diagrams related to their symbol class. 
This is of potential interest with respect to the numerical simulation of RPA models. Let us just mention Corollary \ref{corollaryfiltration}, where it has been shown how
the symbol classes of RPA diagrams in iterative remainders vary with respect to the number of iteration steps. 

The concept of best $N$-term approximation belongs to the realm of nonlinear approximation theory.
For a detailed exposition of this subject we refer to Ref. \cite{DeVore}.
Loosely speaking, we consider for a given basis $\{ \zeta_i: i \in \Lambda \}$ the best possible
approximation of a function $f$ in the nonlinear subset $\Sigma_N$
which consists of all possible linear combinations of at most $N$ basis functions, i.e.,
\begin{equation}
 \Sigma_N := \left\{ \sum_{i \in \Delta} c_i \, \zeta_i : \Delta \subset \Lambda, \#\Delta \leq N, \,
 c_i \in \mathbb{R} \right\} .
\label{SigmaN}
\end{equation}
Here, the approximation error
\begin{equation}
 \sigma_N (f) := \inf_{f_N \in \Sigma_N} \| f - f_N \|_H 
\label{sigmaN}
\end{equation}
is given with respect to the norm of an appropriate Hilbert space $H$.
Best $N$-term approximation spaces $A^\alpha_q(H)$ for a Hilbert space $H$ can be defined according to
\begin{equation}
 A^\alpha_q(H) := \{f \in H : |f|_{A^\alpha_q(H)} < \infty \} \ \ \mbox{with} \ 
  |f|_{A^\alpha_q(H)} := \biggl( \sum_{N \in \mathbb{N}}
 \biggl(N^\alpha \sigma_N (f) \biggr)^q N^{-1} \biggr)^{\frac{1}{q}} .
\label{Aaq}
\end{equation}
It follows from the definition that a convergence rate $\sigma_N (f) \sim N^{-\alpha}$ with respect to
the number of basis functions can be achieved.

In our application, we consider anisotropic tensor product wavelets of the form
\begin{equation}
 \chi^{(\mathbf{s}_1, \mathbf{s}_2)}_{j_1,j_2,{\bf a}_1, {\bf a}_2} ({\bf x}_1,{\bf x}_2)
 = \gamma^{(s_1)}_{j_1,{\bf a}_1}({\bf x}_1) \, \gamma^{(s_2)}_{j_2,{\bf a}_2}({\bf x}_2) .
\label{chi2}
\end{equation}
These so called hyperbolic wavelets \cite{DKT} do not loose their efficiency in higher dimensions.
Each multivariate wavelet corresponds to an isotropic tensor product of orthogonal univariate wavelets
$\psi^{(1)}_{j,a}(x) := 2^{j/2} \psi(2^jx-a)$ and
scaling functions $\psi^{(0)}_{j,a}(x) := 2^{j/2} \varphi(2^jx-a)$ on the same level of refinement $j$, i.e.,
\begin{equation}
 \gamma^{({\bf s})}_{j,{\bf a}}({\bf x}) = \psi^{(s_1)}_{j,a_1}(x_1) \, \psi^{(s_2)}_{j,a_2}(x_2) \psi^{(s_3)}_{j,a_3}(x_3) 
 \ \ \mbox{with} \ {\bf s} :=(s_1,s_2,s_3) , \ {\bf a} :=(a_1,a_2,a_3) ,
\label{gamma}
\end{equation}
Pure scaling function tensor products $\gamma^{({\bf 0})}_{j_0,{\bf a}}$ are included on the coarsest level $j_0$ only.
For further details concerning wavelets, we refer to the monographs \cite{Daub,Mallat}.

Following Nitsche \cite{Nitsche2}, we consider tensor product Besov spaces
\[
 \tilde{{\cal B}}^\alpha_q(\Omega \times \Omega) =
 \left\{ B^{\alpha+1}_q(L_q(\Omega)) \bigotimes B^\alpha_q(L_q(\Omega)) \right\}
 \bigcap \left\{ B^\alpha_q(L_q(\Omega)) \bigotimes B^{\alpha+1}_q(L_q(\Omega)) \right\}
\]
for bounded domains $\Omega \subset \mathbb{R}^3$.
These spaces are norm equivalent to weighted $\ell_q$ norms for anisotropic wavelet coefficients
\begin{equation}
 \| f \|^q_{\tilde{{\cal B}}^\alpha_q} = \sum_{j_1,j_2 \geq j_0} 2^{\max\{j_1, j_2\} q}
 \left( \sum_{{\bf s}_1, {\bf s}_2} \sum_{{\bf a}_1, {\bf a}_2} \left|
 \langle \chi^{(\mathbf{s}_1, \mathbf{s}_2)}_{j_1,j_2,{\bf a}_1, {\bf a}_2} , f \rangle \right|^q \right) ,  
 \mbox{\ \ if \ } \alpha = \tfrac{3}{q} - \tfrac{3}{2} 
\label{tildeB}
\end{equation}
\[
 \mbox{with} \ \langle \chi^{(\mathbf{s}_1, \mathbf{s}_2)}_{j_1,j_2,{\bf a}_1, {\bf a}_2} , f \rangle :=
 \int_{\mathbb{R}^3 \times \mathbb{R}^3} \gamma^{({\bf s}_1)}_{j_1,{\bf a}_1} (\mathbf{x}) \,
 f(\mathbf{x}, \mathbf{y}) \, \gamma^{({\bf s}_2)}_{j_2,{\bf a}_2} (\mathbf{y}) \, d\mathbf{x} d\mathbf{y} . 
\]
The norm equivalence requires a univariate wavelet $\psi$ with $p > \alpha +1$ vanishing moments and
$\psi, \varphi \in B^\beta_q(L_q)$ for some $\beta > \alpha+1$.
The corresponding relation between best $N$-term approximation spaces and Besov spaces is given by
\begin{equation}
 A^{\alpha/3}_q(H^1(\Omega \times \Omega)) = \tilde{{\cal B}}^\alpha_q(\Omega \times \Omega) , \ \ 
 \mbox{if} \ \alpha = \tfrac{3}{q} - \tfrac{3}{2} .
\label{ABdH2}
\end{equation}
The following lemma  provides the Besov regularity of RPA diagrams depending on their symbol class, and according
to (\ref{ABdH2}), anticipated convergence rates of adaptive approximation schemes. It should be mentioned that
the following bounds concerning Besov regularities are sharp and cannot be improved. This follows from a simple
argument, cf.~Ref.~\cite[Corollary 2.4]{FHS07}, which can be easily adapted to the present case.

\begin{lemma}
\label{besov1}
Let $\tau_{\rpa}$ represent a RPA Goldstone diagram with corresponding symbol in $S^p_{cl}(\mathbb{R}^3 \times \mathbb{R}^3)$, $p \leq -4$. Then 
$\tau_{\rpa}$
belongs to $\tilde{{\cal B}}^\alpha_q(\Omega \times \Omega)$ for $q > -\frac{3}{1+p}$ and $\alpha = \frac{3}{q} - \frac{3}{2}$. 
\end{lemma}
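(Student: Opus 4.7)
The plan is to estimate the anisotropic tensor-product wavelet coefficients of $\tau_{\rpa}$ directly, and then to verify that the weighted $\ell_q$ sum appearing in the Besov-norm characterisation (\ref{tildeB}) is finite under the stated condition on $q$. The key input is the asymptotic smoothness estimate of Corollary \ref{corollaryasymp}, which bounds mixed partial derivatives of $\tau_{\rpa}$ by inverse powers of $|\mathbf{z}|$. The argument parallels the best $N$-term analysis for Jastrow two-particle correlation factors in \cite{FHS07}, adapted to the stronger diagonal singularity $|\mathbf{z}|^{-3-p}$ of RPA diagrams.

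The first step would be, for each level pair $(j_1,j_2)$, to split the wavelet indices $({\bf s}_1,{\bf s}_2,{\bf a}_1,{\bf a}_2)$ into those whose supports lie within a tubular neighbourhood of width $\sim 2^{-\min(j_1,j_2)}$ of the coalescence diagonal $\{{\bf x}_1={\bf x}_2\}$ and the complementary set. On the complement, $\tau_{\rpa}$ is $C^\infty$ on the support of the wavelet, and the vanishing moments of $\psi$ (chosen to exceed $\alpha+1$ in number, say $p_0$) give the Taylor-type estimate
\[
 \bigl| \langle \chi^{(\mathbf{s}_1, \mathbf{s}_2)}_{j_1,j_2,{\bf a}_1, {\bf a}_2}, \tau_{\rpa} \rangle \bigr|
 \lesssim 2^{-(j_1+j_2)(3/2+p_0)} \sup_{\supp \chi} \bigl| \partial_{{\bf x}_1}^{p_0}\partial_{{\bf x}_2}^{p_0} \tau_{\rpa} \bigr|,
\]
which, combined with (\ref{asympestimate}) and ${\bf z}={\bf x}_1-{\bf x}_2$, decays polynomially in the distance $d$ of $\supp \chi$ to the diagonal. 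On the near-diagonal part I would use the same vanishing-moment identity but estimate the resulting mixed derivative via (\ref{asympestimate}) at scale $|\mathbf{z}|\sim 2^{-\min(j_1,j_2)}$, yielding a near-diagonal coefficient bound which in the isotropic case $j_1=j_2=j$ reduces to $|c|\lesssim 2^{pj}$.

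Inserting these two families of coefficient estimates into (\ref{tildeB}), the far-from-diagonal contributions form a geometric series in the separation $d$ that converges unconditionally once $p_0$ is taken large enough, independent of the target $q$. The near-diagonal contributions involve approximately $2^{3\min(j_1,j_2)}$ admissible translation indices per level pair, and after combination with the weight $2^{q\max(j_1,j_2)}$ the essential geometric summation reduces to $\sum_j 2^{j(q(1+p)+3)}$, which converges precisely when $q(1+p)+3<0$, i.e., under the stated condition $q>-3/(1+p)$.

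The hard part will be twofold: first, to treat the anisotropic regime $j_1\neq j_2$, in which the counting of near-diagonal translations and the splitting of ${\bf x}$ and ${\bf z}$ directional derivatives must be balanced so that the reduction to an effective one-parameter geometric series in $\min(j_1,j_2)$ still produces the correct critical exponent; second, to choose the number of vanishing moments used in the near-diagonal step so that $-3/(1+p)$ is attained exactly without loss, mirroring the sharp counting argument of \cite[Corollary 2.4]{FHS07}. Once these technical points are settled, $\|\tau_{\rpa}\|_{\tilde{{\cal B}}^\alpha_q} < \infty$ follows and the lemma is proved.
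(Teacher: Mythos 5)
Your overall architecture --- a near/far splitting about the diagonal, vanishing moments away from it, the asymptotic smoothness estimate (\ref{asympestimate}) near it, and a final counting that reduces to $\sum_j 2^{j(\tilde q(1+p)+3)}$ --- coincides with the paper's, and your isotropic coefficient bound $2^{pj}$ and the resulting critical exponent are correct. The genuine gap is in your near-diagonal step: for wavelet pairs whose joint support actually meets the diagonal $\{{\bf z}=0\}$ you cannot ``estimate the resulting mixed derivative via (\ref{asympestimate}) at scale $|{\bf z}|\sim 2^{-\min(j_1,j_2)}$'', because once the number of vanishing moments exploited exceeds $-3-p$ the bound $|\partial^{\alpha}_{\bf z}\tau_{\rpa}|\lesssim|{\bf z}|^{-3-p-|\alpha|}$ is unbounded on that support; the relevant sup is infinite, not of order $2^{p\min(j_1,j_2)}$. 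The paper resolves precisely this point with Proposition \ref{HM}: for the $O(1)$ fine-scale subcubes adjacent to the singularity it abandons the pointwise estimate and instead uses that $\tau_{\rpa}({\bf x},\cdot)$ lies in $H^{s}(\mathbb{R}^3)$ only for $s<-\tfrac12-p$ (a consequence of the symbol estimate), combined with polynomial approximation in Sobolev spaces (Prop.~4.3.2 of \cite{BS}); this caps the number of exploitable derivatives at $m<-\tfrac12-p$ and produces the case distinction in (\ref{est1}). This is not a cosmetic detail: it is exactly what decides whether the threshold $q=-3/(1+p)$ is attained, i.e.\ it is your second ``hard part'', which you defer rather than solve.

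Your first ``hard part'' (the anisotropic regime $j_1\neq j_2$) is likewise left open, and the paper's mechanism there is concrete: for $j_1\ge j_2$ it subdivides the coarse cube $\Box_{j_2,{\bf a}_2}$ into $O(2^{3(j_1-j_2)})$ subcubes of side $2^{-j_1}L$, treats the $O(1)$ subcubes touching $\Box_{j_1,{\bf a}_1}$ by Proposition \ref{HM}, and sums the remaining ones against the pointwise bound via Proposition \ref{estimates}, which reduces to the integral $\int_{2^{-j_1}L}^{2^{-j_2+2}L} r^{2-k}\,dr$ and the three cases of (\ref{est2}); the far-off-diagonal set $B_{j_1,j_2}$ is then handled separately with vanishing moments in both factors. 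In short: right strategy and right answer, but the two steps you flag as hard are where the proof actually lives, and one of them, as you state it (pointwise evaluation of (\ref{asympestimate}) at the cube scale on diagonal-touching cubes), would fail.
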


\begin{proof}
The following proof is closely related to the proof of Lemma 2.1 in Ref.~\cite{FHS07}, 
For each isotropic 3d-wavelet $\gamma^{({\bf s})}_{j,{\bf a}}$, we define a cube $\Box_{j,{\bf a}}$
centred at $2^{-j} {\bf a}$ with edge length $2^{-j} L$, such that $\supp \gamma^{({\bf s})}_{j,{\bf a}}
\subset \Box_{j,{\bf a}}$. In order to estimate the norm (\ref{tildeB}) for a RPA diagram,
we restrict ourselves to wavelet coefficients with $j_1 \geq j_2$ and $|\mathbf{s}_1|=|\mathbf{s}_2| = 1$.
This combination of 3d-wavelet types corresponds to the worst case where vanishing moments can act
in one direction only.  

We first consider the case $\dist(\Box_{j_1,{\bf a}_1}, \Box_{j_2,{\bf a}_2}) \leq 2^{-j_2} L$.
In order to apply the asymptotic smoothness property, c.f.~Proposition \ref{SteinPr},
\begin{align}
\label{asympsmooth1}
\left| \partial_{\bf x}^{\beta} \partial_{\bf z}^{\alpha}
\tau_{\rpa} \right| \lesssim 1 &
\quad \mbox{for} \ |\alpha| \leq -3-p , \\ \label{asympsmooth2}
\left| \partial_{\bf x}^{\beta} \partial_{\bf z}^{\alpha}
\tau_{\rpa} \right| \lesssim |{\bf z}|^{-3-p-|\alpha|} &
\quad \mbox{for} \ |\alpha| > -3-p ,
\end{align}
let us decompose the cube $\Box_{j_2,{\bf a}_2}$
into non overlapping subcubes $\Box_i$ $(i \in \Delta)$ with edge length $2^{-j_1} L$. 
The subcubes $\Box_i$ with $i \in \Delta_0 := \{i \in \Delta : \dist(\Box_{j_1,{\bf a}_1}, \Box_i) \leq
2^{-j_1} L \}$ are considered separately. Their number is $\#\Delta_0 = O(1)$ independent of the wavelet levels $j_1, j_2$.
For the remaining subcubes $\Box_i$ $(i \in \Delta \setminus \Delta_0)$
it becomes necessary to control their contributions with respect to $\dist(\Box_{j_1,{\bf a}_1}, \Box_i)$
because $\#(\Delta \setminus \Delta_0) = O(2^{3(j_1-j_2)})$ depends on the wavelet levels.
The wavelet coefficients can be estimated by the separate sums
\begin{eqnarray}
\nonumber
 \left| \langle \chi^{(\mathbf{s}_1, \mathbf{s}_2)}_{j_1,j_2,{\bf a}_1, {\bf a}_2} , \tau_{\rpa} \rangle \right|
 & \leq & \sum_{i \in \Delta_0}
 \left| \int_{\Box_{j_1,{\bf a}_1} \times \Box_i} \gamma^{({\bf s}_1)}_{j_1,{\bf a}_1} (\mathbf{x}) \,
 \tau_{\rpa} (\mathbf{x},\mathbf{x} - \mathbf{y}) \, \gamma^{({\bf s}_2)}_{j_2,{\bf a}_2} (\mathbf{y}) 
 d\mathbf{x} d\mathbf{y} \right| \\ \label{deltaF}
 & + & \sum_{i \in \Delta \setminus \Delta_0} 
 \left| \int_{\Box_{j_1,{\bf a}_1} \times \Box_i} \gamma^{({\bf s}_1)}_{j_1,{\bf a}_1} (\mathbf{x}) \,
 \tau_{\rpa} (\mathbf{x},\mathbf{x} - \mathbf{y}) \, \gamma^{({\bf s}_2)}_{j_2,{\bf a}_2} (\mathbf{y})
 d\mathbf{x} d\mathbf{y} \right| .
\end{eqnarray}
For the first sum we can use the following proposition
\begin{proposition}
\label{HM}
The RPA diagram $\tau_{\rpa}$ satisfies the estimate
\[
\left| \int_{\mathbb{R}^{3}} \tau_{\rpa}({\bf x},{\bf x}-{\bf y}) \gamma^{({\bf s})}_{j,{\bf a}}({\bf y}) d{\bf y} \right|
\lesssim 2^{-j(m-3)} \| \tau_{\rpa}({\bf x}, \cdot )
\|_{H^{m}(\mathbb{R}^{3})} ,
\]
for $\dist({\bf x}, \Box_{j,{\bf a}}) \leq 2^{-j} L$ and any integer $m$ such that $m < \min \{-\frac{1}{2} -p,q\}$.
\end{proposition}
\begin{proof}
It is an immediate consequence of the symbol estimate, see e.g.~\cite{Stein},
\[
 \left| \partial_{\bf x}^{\beta} \partial_{\boldeta}^{\alpha}
\sigma_{\rpa} ({\bf x},\boldeta)\right| \lesssim \bigl( 1+|\boldeta|\bigr)^{p-|\alpha|} ,
\]
that $\tau_{\rpa}({\bf x}, \cdot )$ belongs to the
Sobolev space $H^{s}(\mathbb{R}^{3})$ for $s < -\frac{1}{2} -p$.
Similar to the proof of Proposition \ref{kpN}, let us decompose
the pair-amplitude into a polynomial and a singular remainder
\[
 \tau_{\rpa}({\bf x},{\bf z}) = Q_{p}({\bf x},{\bf z}) + R_{p}^m({\bf x},{\bf z}) ,
\]
with $Q_{p}$ a polynomial of degree less than $m$ in ${\bf z} := {\bf x} - {\bf y}$ with coefficients which are smooth functions in ${\bf x}$.
For the remainder we can achieve the following estimate, cf.~Prop.~4.3.2 \cite{BS},
\[
 \| R_{p}^m({\bf x},{\bf x} - \cdot) \|_{L^\infty(\Box_{j,{\bf a}})} \lesssim 2^{-(m-\frac{3}{2})j} \| R_{p}^m({\bf x},{\bf x} - \cdot) \|_{H^m(B)} ,
\]
where $B$ is a ball centered at ${\bf x}$ with radius $2^{-j+3}L$.
Taking into account the normalization of the wavelet, we obtain the desired estimate.
\end{proof}
Suppose $s_{1,1} =1$ for the wavelet 
$\gamma^{({\bf s}_1)}_{j_1,{\bf a}_1}$, cf.~(\ref{gamma}), we obtain from Proposition \ref{HM} the estimate
\begin{eqnarray}
\nonumber
 \lefteqn{\sum_{i \in \Delta_0}
 \left| \int_{\Box_{j_1,{\bf a}_1} \times \Box_i} \gamma^{({\bf s}_1)}_{j_1,{\bf a}_1} (\mathbf{y}) \,
 \tau_{\rpa} (\mathbf{x}, \mathbf{x} - \mathbf{y}) \, \gamma^{({\bf s}_2)}_{j_2,{\bf a}_2} (\mathbf{x})
 d\mathbf{x} d\mathbf{y} \right|} \hspace{5cm} \\ \nonumber
 & \lesssim & 2^{-(m+3) j_1} 2^{\frac{3}{2} j_2}
 \| \tau_{\rpa}({\bf x}, \cdot ) \|_{H^{m}(\mathbb{R}^{3})} \\ \label{est1}
  & \lesssim & \left\{ \begin{array}{cl} 2^{-(q+3) j_1} 2^{\frac{3}{2} j_2}, & \mbox{\ if \ } q < -\frac{1}{2} -p\\
2^{(p-2) j_1} 2^{\frac{3}{2} j_2}, & \mbox{\ if \ } q > -\frac{1}{2} -p \end{array} \right. .
\end{eqnarray}

The second sum can be estimated using the next proposition (see, e.g., Ref. \cite{FHS07} for details).
\begin{proposition}
\label{estimates}
Suppose the function $f(\mathbf{x})$ with $\mathbf{x} \in \mathbb{R}^3$ is smooth on the support of an
isotropic 3d-wavelet $\gamma^{(\mathbf{s})}_{j,\mathbf{a}}$. Then the following estimate holds
\[
    \left| \int_{\mathbb{R}^3} f ( \mathbf{x} ) \, \gamma^{(\mathbf{s})}_{j,\mathbf{a}} ( \mathbf{x}
    ) d\mathbf{x} \right| \lesssim  2^{- (p |\mathbf{s}|+3/2) j}
    \| \partial_{x_1}^{s_1 p} \partial_{x_2}^{s_2 p} \partial_{x_3}^{s_3 p} f \|_{L_{\infty} (
 \supp \gamma^{(\mathbf{s})}_{j,\mathbf{a}})} 
    \ \ \mbox{with} \ |\mathbf{s}| := s_1 + s_2 + s_3 .
\]
\label{estdd}
\end{proposition}
With this and the estimates (\ref{asympsmooth1}) and (\ref{asympsmooth2}) for wavelets $\gamma^{({\bf s}_1)}_{j_1,{\bf a}_1}$ 
with $q$ vanishing moments (i.e., $|{\bf s}_1| =1$), we obtain
\begin{eqnarray}
\nonumber
 \lefteqn{\sum_{i \in \Delta \setminus \Delta_0}
 \left| \int_{\Box_{j_1,{\bf a}_1} \times \Box_i} \gamma^{({\bf s}_1)}_{j_1,{\bf a}_1} (\mathbf{y}) \,
 \tau_{\rpa} (\mathbf{x}, \mathbf{x} - \mathbf{y}) \, \gamma^{({\bf s}_2)}_{j_2,{\bf a}_2} (\mathbf{x}) d\mathbf{x} d\mathbf{y} \right|} \\ \nonumber
 & & \lesssim \sum_{i \in \Delta \setminus \Delta_0} 
 2^{-(q+\frac{3}{2}) j_1} 2^{-3 j_1} 2^{\frac{3}{2} j_2}
 \| \partial^q_{y_1} \tau_{\rpa} \|_{L_\infty(\Box_{j_1,{\bf a}_1} \times \Box_i)} \\ \label{est2}
 & & \lesssim 2^{-(q+ \frac{3}{2}) j_1} 2^{\frac{3}{2} j_2} \int_{2^{-j_1} L}^{2^{-j_2 +2} L}
 r^{2-k} \, dr
 \lesssim \left\{ \begin{array}{cl} 2^{-(q+ \frac{3}{2}) j_1} 2^{-(\frac{3}{2}-k) j_2} , & \mbox{\ if \ } k \leq 2 \\
                 (j_1 - j_2 +1) 2^{-(\frac{3}{2} -p) j_1} 2^{\frac{3}{2} j_2} , & \mbox{\ if \ } k = 3 \\
                                2^{-(\frac{3}{2} -p) j_1} 2^{\frac{3}{2} j_2} , & \mbox{\ if \ } k > 3 \\
              \end{array} \right. , 
\end{eqnarray}
where the parameter $k=0$ if $q \leq -3-p$ and $k=3+p+q$ otherwise.

Once we have obtained the estimates (\ref{est1}) and (\ref{est2}), it is straightforward to get an upper bound 
for the contribution of anisotropic tensor products with translation parameters
$({\bf a}_1,{\bf a}_2) \in A_{j_1,j_2} := \{ ({\bf a}_1,{\bf a}_2) : 
\dist(\Box_{j_1,{\bf a}_1}, \Box_{j_2,{\bf a}_2}) \leq 2^{-j_2} L\}$ to the norm (\ref{tildeB})
\begin{eqnarray}
\nonumber
 \sum_{j_1 \geq j_2 \geq j_0} 2^{\tilde{q}j_1} \sum_{({\bf a}_1, {\bf a}_2) \in A_{j_1,j_2}}
 \left| \langle \chi^{(\mathbf{s}_1, \mathbf{s}_2)}_{j_1,j_2,{\bf a}_1, {\bf a}_2} , \tau_{\rpa} \rangle \right|^{\tilde{q}}
 & \lesssim & \sum_{j_1 \geq j_2 \geq j_0}
 \left\{ \begin{array}{cl} 2^{-(q\tilde{q} + \frac{\tilde{q}}{2} -3) j_1} 2^{-(\frac{3}{2} -k)q j_2} , & \mbox{\ if \ } k \leq 2 \\
                      (j_1 - j_2 +1) 2^{-(\frac{1}{2} \tilde{q} -p\tilde{q} -3) j_1} 2^{\frac{3}{2} \tilde{q} j_2} , & \mbox{\ if \ } k = 3 \\
                                     2^{-(\frac{1}{2} \tilde{q} -p\tilde{q} -3) j_1} 2^{\frac{3}{2} \tilde{q} j_2} , & \mbox{\ if \ } k > 3 \\
            \end{array} \right. \\ \label{estA}
 & < & \infty , \mbox{\ if \ } \left\{ \begin{array}{cl} \tilde{q} > \frac{3}{q+\frac{1}{2}} , & \mbox{\ for \ } k \leq 1 \\
                                      \tilde{q} > -\frac{3}{1+p} , & \mbox{\ for \ } k \geq 2 \end{array} \right. ,
\end{eqnarray}
where we have used $\# A_{j_1,j_2} = O(2^{3j_1})$. 

In order to get an upper bound for the norm (\ref{tildeB}) it remains to estimate the contributions of anisotropic
wavelet coefficients where the supports of the corresponding 3d-wavelets are well separated. For this we have to
consider the parameter set
$B_{j_1,j_2} := \{ ({\bf a}_1,{\bf a}_2) : 2^{-j_2} L < \dist(\Box_{j_1,{\bf a}_1}, \Box_{j_2,{\bf a}_2})\}$.
Let us assume $q > -p-3$, i.e.~$k=3+p+q$, using estimates (\ref{est1}) and (\ref{est2}) and Proposition \ref{estimates}, the contributions can be estimated by
\begin{eqnarray}
\nonumber
 \sum_{({\bf a}_1, {\bf a}_2) \in B_{j_1,j_2}}
 \left| \langle \chi^{(\mathbf{s}_1, \mathbf{s}_2)}_{j_1,j_2,{\bf a}_1, {\bf a}_2} , \tau_{\rpa} \rangle \right|^{\tilde{q}}
 & \lesssim & \sum_{({\bf a}_1, {\bf a}_2) \in B_{j_1,j_2}}
 2^{-(q+ \frac{3}{2})\tilde{q}(j_1 +j_2)} \| \partial^p_{x_1} \partial^p_{y_1} \tau_{\rpa}
 \|_{L_\infty(\Box_{j_1,{\bf a}_1} \times \Box_{j_2,{\bf a}_2})}^{\tilde{q}} \\ \nonumber
 & \lesssim & 2^{-(q\tilde{q} + \frac{3}{2}\tilde{q} -3)j_1} 2^{-(q\tilde{q}+ \frac{3}{2}\tilde{q}-3)j_2}
 \int_{2^{-j_2} L}^{\diam \Omega} r^{2-\tilde{q}(3+p+2q)} \, dr \\ \nonumber
 & \lesssim & 2^{-(q\tilde{q} + \frac{3}{2}\tilde{q} -3) j_1} 2^{(q+p+ \frac{3}{2})\tilde{q} j_2} ,
\end{eqnarray}
where we have used $p > \alpha +1$ and $\alpha =\frac{3}{q} - \frac{3}{2}$, hence 
\[
 2 -2q\tilde{q} -3\tilde{q} -p\tilde{q} < -1 -\tilde{q} \biggl( q+p+ \tfrac{5}{2} \biggr) < -1 \ \mbox{since} \  p+q >-3, \ p,q \in \mathbb{Z} ,
\]
follows for the exponent of the integrand. 
The remaining sum with respect to the wavelet levels yields
\begin{eqnarray*}
 \lefteqn{\sum_{j_1 \geq j_2 \geq j_0} 2^{\tilde{q}j_1} \sum_{({\bf a}_1, {\bf a}_2) \in B_{j_1,j_2}}
 \left| \langle \chi^{(\mathbf{s}_1, \mathbf{s}_2)}_{j_1,j_2,{\bf a}_1, {\bf a}_2} , \tau_{\rpa} \rangle \right|^{\tilde{q}}
  \lesssim \sum_{j_1 \geq j_2 \geq j_0} 2^{-(q\tilde{q}+ \frac{1}{2}\tilde{q} -3) j_1} 2^{(q+p+ \frac{3}{2})\tilde{q} j_2}} \\ 
 & & \lesssim \sum_{j_1 \geq j_0} \left\{ \begin{array}{cl} 2^{-(q\tilde{q}+ \frac{1}{2}\tilde{q} -3) j_1} , & \mbox{\ if \ } p+q = -2 \\
                                           2^{(3+\tilde{q}p+\tilde{q}) j_1} , & \mbox{\ if \ } p+q \geq -1 \\ \end{array} \right.
 < \infty , \mbox{\ if \ } \left\{ \begin{array}{cl} \tilde{q} > \frac{3}{q+\frac{1}{2}} , & \mbox{\ for \ } p+q = -2 \\
                                      \tilde{q} > -\frac{3}{1+p} , & \mbox{\ for \ } p+q \geq -1 \\ \end{array} \right. ,
\end{eqnarray*}
from which we obtain, together with our previous estimate (\ref{estA}), the lower bound on the Besov space parameter $q$.
\end{proof}

\section{Acknowledgement}
Financial support from the Deutsche Forschungsgemeinschaft DFG (Grant No.~HA 5739/3-1) is gratefully acknowledged.

\appendix
\vspace{1cm}
\noindent {\Large {\bf Appendix}}

\section{Weighted edge Sobolev spaces with asymptotics} 
\label{appendix1}
In order to incorporate asymptotics into Sobolev spaces one has to proceed in a recursive manner.
Let us first consider
weighted Sobolev spaces ${\cal K}^{s,\gamma}(X^\wedge)$ on an open stretched cone with base $X$, which are defined with respect to the corresponding polar coordinates $\tilde{x} \rightarrow (r,x)$ via
\[
 {\cal K}^{s,\gamma}(X^\wedge) := \omega {\cal H}^{s,\gamma}(X^\wedge) +(1-\omega) H^s(\mathbb{R}^{3}) ,
\]
for a cut-off function $\omega$, i.e., $\omega\in C_0^\infty(\overline{\mathbb{R}}_+)$ such that $\omega(r)=1$ near $r=0$.
Here ${\cal H}^{s,\gamma}(X^\wedge) = r^\gamma {\cal H}^{s,0}(X^\wedge)$, and ${\cal H}^{s,0}(X^\wedge)$
for $s \in \mathbb{N}_0$ is defined to be the set of all $u(r,x) \in r^{-1} L^2(\mathbb{R}_+ \times X)$
such that $(r \partial_r)^jDu \in r^{-1} L^2(\mathbb{R}_+ \times X)$ for all $D \in \mbox{Diff}^{s-j}(X)$,
$0 \leq j \leq s$. The definition for $s \in \mathbb{R}$ in general follows by duality and complex interpolation.
Weighted Sobolev spaces with asymptotics are subspaces of ${\cal K}^{s,\gamma}$ spaces which are defined as direct sums
\begin{equation}
 {\cal K}^{s,\gamma}_Q (X^\wedge) := {\cal E}^\gamma_Q (X^\wedge) + {\cal K}^{s,\gamma}_\Theta (X^\wedge) 
\label{E+K}
\end{equation}
of flattened weighted cone Sobolev spaces
\[
 {\cal K}^{s,\gamma}_\Theta (X^\wedge) := \bigcap_{\epsilon > 0} {\cal K}^{s,\gamma - \vartheta - \epsilon}
 (X^\wedge) 
\]
with $\Theta =(\vartheta,0]$, $-\infty \leq \vartheta < 0$, and
asymptotic spaces
\[
 {\cal E}^\gamma_Q (X^\wedge) := \biggl\{ \omega(r) \sum_j \sum_{k=0}^{m_j} c_{jk}(x) r^{-q_j} \ln^k r \biggr\} .
\]
The asymptotic space ${\cal E}^\gamma_Q (X^\wedge)$ is characterized by a sequence $q_j \in \mathbb{C}$
which is taken from a strip of the complex plane, i.e.,
\[
 q_j \in \left\{ z: \frac{3}{2}-\gamma + \vartheta < \Re z < \frac{3}{2}-\gamma \right\} ,
\]
where the width and location of this strip are determined by its {\em weight data} $(\gamma,\Theta)$
with $\Theta =(\vartheta,0]$ and $-\infty \leq \vartheta < 0$. Each substrip of finite width
contains only a finite number of $q_j$. Furthermore, the coefficients
$c_{jk}$ belong to finite dimensional subspaces $L_j \subset C^\infty(X)$.
The asymptotics of ${\cal E}^\gamma_Q(X^\wedge)$ is therefore completely
characterized by the {\em asymptotic type} $Q := \{(q_j,m_j,L_j)\}_{j \in \mathbb{Z}_+}$.
In the following, we employ the asymptotic subspaces
\[
 {\cal S}^\gamma_Q (X^\wedge) := \left\{ u \in {\cal K}^{\infty,\gamma}_Q (X^\wedge) :
 (1- \omega) u \in {\cal S}(\mathbb{R},C^\infty(X))|_{\mathbb{R}_+} \right\} 
\]
with Schwartz type behaviour for exit $r \rightarrow
\infty$. The spaces ${\cal K}^{s,\gamma}_Q(X^\wedge)$ and ${\cal
S}^\gamma_Q (X^\wedge)$ are Fr\'echet spaces equipped with
natural semi-norms according to the decomposition (\ref{E+K}); we
refer to \cite{ES97,HS08,Schulze98} for further details.

Weighted wedge Sobolev spaces on $\mathbb{W} := X^{\wedge} \times Y$ can be defined as
functions $Y \rightarrow {\cal K}_{(Q)}^{s, \gamma}(X^\wedge)$, where a subscript $Q$ optionally denotes cone spaces
with asymptotics. Let us first consider the case $Y= \mathbb{R}^q$ and corresponding wedge Sobolev spaces
\[
 {\cal W}^s(\mathbb{R}^q, {\cal K}_{(Q)}^{s, \gamma}(X^\wedge)) :=
 \{ u : \mathbb{R}^q \rightarrow {\cal K}_{(Q)}^{s, \gamma}(X^\wedge)
 \, | \, u \in  \overline{{\cal S}(\mathbb{R}^q, {\cal K}_{(Q)}^{s, \gamma}(X^\wedge)} \} 
\]
with $s, \gamma \in \mathbb{R}$ and norm closure w.r.t.~the norm
\[
 \| u \|_{{\cal W}^{s}(\mathbb{R}^q, {\cal K}_{(Q)}^{s, \gamma}(X^\wedge))}^2 := \int [\eta]^{2s} \| \kappa^{-1}_{[\eta]} (F_{y\rightarrow \eta} u)(\eta)
 \|_{{\cal K}^{s,\gamma}_{(Q)}(X^\wedge)}^2 d\eta .
\]
Here $F_{y\rightarrow \eta}$ denotes the Fourier transform in $\mathbb{R}^q$ and $\{ \kappa_\lambda \}_{\lambda \in \mathbb{R}_+}$
a strongly continuous group of isomorphisms $\kappa_\lambda : {\cal K}_{(Q)}^{s, \gamma}(X^\wedge) \rightarrow 
{\cal K}_{(Q)}^{s, \gamma}(X^\wedge)$ defined by
\[
 \kappa_\lambda u(r,x,y) := \lambda^{\frac{3}{2}} u(\lambda r,x,y) .
\]
The function $[\eta]$ involved in the norm is given by a strictly positive $C^{\infty}(\mathbb{R}^{3})$ function of the covariables $\eta$ such that
$[\eta] = |\eta|$ for $|\eta| \geq \epsilon >0$. The motivation behind this group action is the twisted homogeneity
of principal edge symbols, cf.~\cite{Schulze98} for further details. For $Y \subset \mathbb{R}^q$ an open subset, we define
\[
 {\cal W}^s_{\mbox{\footnotesize comp}}(Y, {\cal K}_{(Q)}^{s, \gamma}(X^\wedge)) :=
 \{ u \in {\cal W}^s(\mathbb{R}^q, {\cal K}_{(Q)}^{s, \gamma}(X^\wedge)): \supp u \subset Y \ \mbox{compact} \} ,
\]
and
\[
 {\cal W}^s_{\mbox{\footnotesize loc}}(Y, {\cal K}_{(Q)}^{s, \gamma}(X^\wedge)) :=
 \{ u \in {\cal D}'(Y, {\cal K}_{(Q)}^{s, \gamma}(X^\wedge)): \varphi u \in
 {\cal W}^s_{\mbox{\footnotesize comp}}(\mathbb{R}^q, {\cal K}_{(Q)}^{s, \gamma}(X^\wedge)) \ \mbox{for each} \ 
 \varphi \in C^\infty_0(Y) \} .
\]
The weighted Sobolev spaces ${\cal W}^{\infty}_{\mbox{\footnotesize comp}}(Y, {\cal K}_{(Q)}^{\infty, \gamma}(X^\wedge))$, which are of particular interest 
in our application, have a nice tensor product representation for their asymptotic expansion given by 
\begin{equation}
 \omega(r) \sum_{j} \sum_{k=0}^{m_j} r^{-p_j} \log^k r \, c_{jk}(x) v_{jk}(y) + h_{\Theta}(r,x,y)
\label{appasymp}
\end{equation}
where $(r,x,y)$ denote appropriate coordinates on the wedge $X^\wedge \times Y$. Tensor components $c_{jk} \in C^\infty(X)$, $v_{jk} \in H^\infty_{\mbox{\footnotesize comp}}(Y)$
correspond to functions on the base of the cone $X$ and the edge $Y$, respectively. This tensor product decomposition represents an essential part of our approach 
and has been  frequently applied in the present work.
\clearpage

\end{document}